\newtheorem{theorem}{Theorem}
\newtheorem{colloraly}{Colloraly}
\newtheorem{remark}{Remark}
\newtheorem{Definition}{Definition}
\newtheorem{Example}{Example}
\begin{document}


\title{Converting contextuality into nonlocality}


\author{Ad\'an Cabello}
\email{adan@us.es}
\affiliation{Departamento de F\'{\i}sica Aplicada II, Universidad de	Sevilla, E-41012 Sevilla, Spain}
\affiliation{Instituto Carlos~I de F\'{\i}sica Te\'orica y Computacional, Universidad de
	Sevilla, E-41012 Sevilla, Spain}


\begin{abstract}
We introduce a general method which converts, in a unified way, any form of quantum contextuality, including any form of state-dependent contextuality, into a quantum violation of a bipartite Bell inequality. As an example, we apply the method to a quantum violation of the Klyachko-Can-Binicio\u{g}lu-Shumovsky inequality. 
\end{abstract}



\maketitle


{\em Introduction.---}Nonlocal games \cite{CHTW04,BCPSW14} provide an intuitive understanding of where the advantage of quantum resources lies and a framework, used in computer science \cite{CL89}, to analyze quantum protocols. Contextuality is known to be a crucial resource for some forms of computation with quantum speed up \cite{AB09,Raussendorf13,HWVE14}. 
However, although some forms of contextuality can be converted into nonlocal games, there is no universal method for converting any form of contextuality into a nonlocal game. The aim of this Letter is to introduce a unified method that achieves this task.

The research on how contextuality can be converted into nonlocality started with the works of Stairs \cite{Stairs83} and Heywood and Redhead \cite{HR83}, extending the proofs of the Kochen-Specker (KS) theorem \cite{KS67} to bipartite scenarios with entanglement and has evolved in many ways in connection to extensions of the KS theorem \cite{BS90,CK06}, Bell inequalities \cite{Cabello01,GMS07,AGA12}, and nonlocal games \cite{Aravind02,GBT05,CMNSW07,CM14,MR16,ASDZ17}.

So far, the forms of contextuality that can be converted into nonlocality are (i)~Those forms of state-dependent contextuality (SD-C) corresponding to scenarios whose measurements can be distributed between two or more parties in such a way that each party has at least two incompatible measurements. 
(ii)~Those that are produced by KS sets \cite{KS67} (i.e., sets of rank-one projectors which do not admit a ``KS assignment,'' i.e., an assignment of $0$ or $1$ satisfying that two orthogonal projectors cannot both have assigned~$1$, and for every set of mutually orthogonal projectors summing the identity, one of them must be assigned~$1$) or by proofs of the KS theorem (e.g., \cite{Peres90,Mermin90}) that can be reduced to KS sets \cite{Peres91,KP95}. For methods of conversion, see, e.g., \cite{CHTW04,AGA12}. (iii)~Those produced by some particular state-independent contextuality (SI-C) sets \cite{YO12} (i.e., sets of projectors which produce noncontextual behaviors for any initial state) that are not KS sets. For methods of conversion, see \cite{CAB12}. (iv)~In addition, some constraint satisfaction problems and local no-hidden-variables proofs can be converted into nonlocal games \cite{CMNSW07,CM14,MR16,ASDZ17}. In each case, ``convert'' may mean a different thing.

Forms of contextuality that we do not know how to convert into nonlocality are those produced by sequentially measuring noncomposite systems initially prepared in specific states in contextuality scenarios which cannot be embedded in Bell scenarios. A particularly relevant example is the quantum violation of the Klyachko-Can-Binicio\u{g}lu-Shumovsky (KCBS) inequality \cite{KCBS08} with single qutrits. This is arguably the most fundamental form of quantum SD-C produced by noncomposite systems, as the KCBS inequality is the only nontrivial tight noncontextuality inequality \cite{AQB13} in the scenario with the smallest number of measurements in which qutrits produce contextuality (qutrits are the quantum systems of smallest dimension that produce contextuality \cite{KS67}), and because it plays a crucial role for understanding quantum contextuality 
\cite{CSW14,Cabello13,Cabello19}.

The aim of this Letter is to provide a general unified method capable of converting any form of SD-C or SI-C into bipartite nonlocality. The philosophy behind the method is guided by the recognition of the singular role of SI-C, as pointed out in, e.g., \cite{XYK21} (``we argue that a primitive entity of contextuality should embrace state-independence''). The method takes any set of measurements that provides SD-C and identifies the minimal extension of it that provides SI-C and then converts the SI-C into bipartite nonlocality preserving the gap between quantum and noncontextual theories in the SI-C (which becomes the gap between quantum and local theories).

First, we describe the method, which has three steps. Then, we apply the method to a quantum violation of the KCBS inequality. Finally, we provide an intuitive explanation of how it works and discuss its virtues and limitations.


{\em Method.---}An ideal measurement of an observable $A$ is a measurement of $A$ that yields the same outcome when repeated and does not disturb any compatible (i.e., jointly measurable) observable. A context is a set of ideal measurements of compatible observables. A scenario is characterized by a number of measurements, their outcomes, and relations of compatibility \cite{Cabello19}. In quantum theory, every ideal measurement is represented by the spectral projectors of a self-adjoint operator, and compatible observables correspond to commuting operators. 

A behavior for a scenario (i.e., a set of probability distributions for each of its contexts) is contextual if the probability distributions for each context cannot be obtained as the marginals of a global probability distribution on all observables. Otherwise, the behavior is noncontextual. Contextuality is detected by the violation of noncontextuality inequalities whose bounds are derived solely from the assumption of outcome noncontextuality \cite{KCBS08,Cabello08,BBCP09,YO12,KBLGC12}. Any quantum contextual behavior can be produced by a set of rank-one projectors $S=\{\Pi_1,\ldots,\Pi_n\}$ acting on a quantum state $|\psi\rangle$ in a Hilbert space of dimension $d\ge 3$ \cite{CSW14}. Given $S$, contexts are subsets of $S$ containing mutually commuting projectors.


\begin{figure}[t!] 
	\includegraphics[width=.46\textwidth]{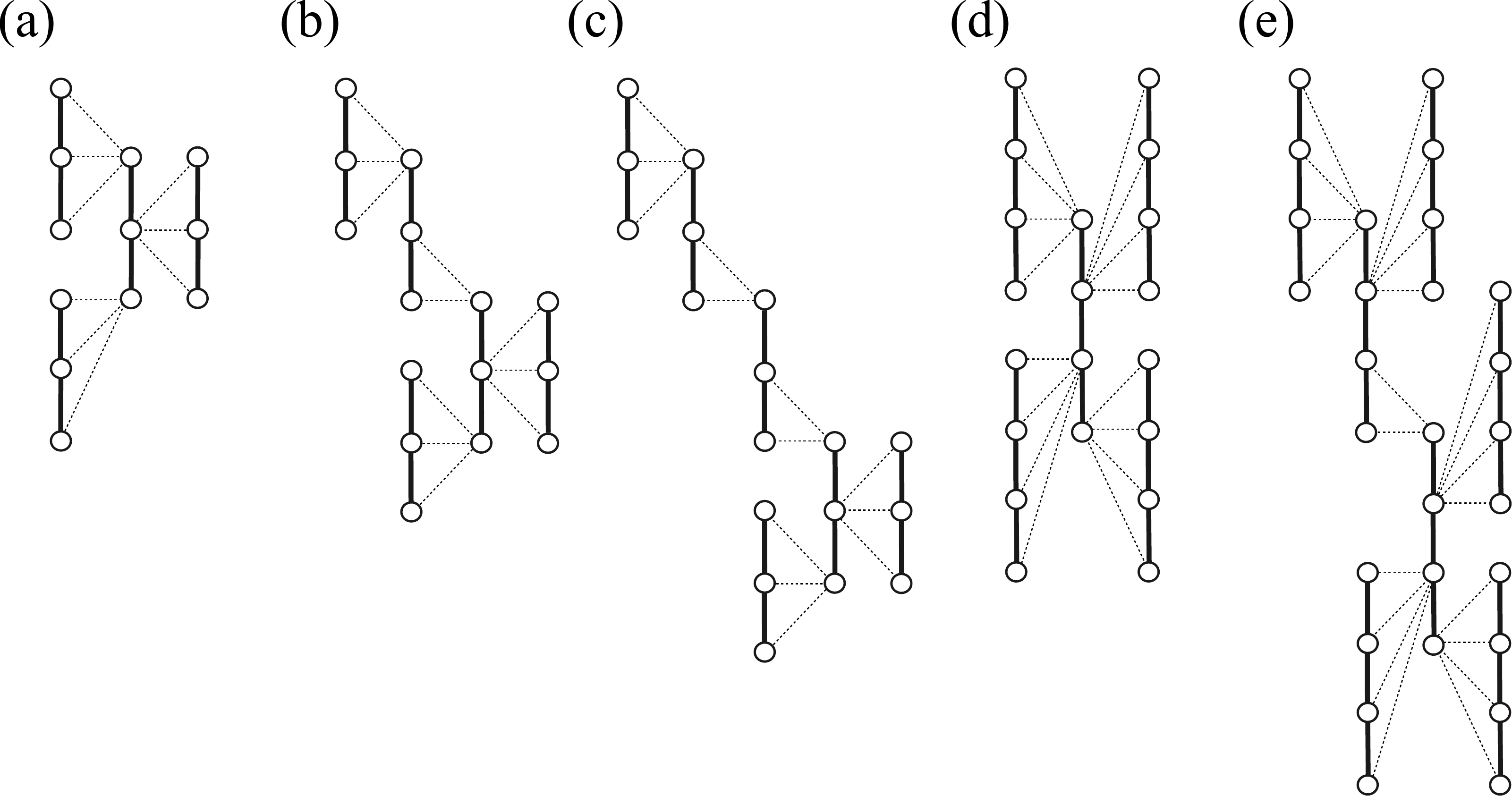}
	\caption{Every node represents a rank-one projector. A continuous vertical line between $d \ge 3$ nodes indicates that they are mutually orthogonal. Hence, in dimension $d$, in any KS assignment, one of them has to be assigned~$1$. A dashed line between two nodes indicates that there is a TIFS between (and including) them. Hence, in any KS assignment, both of them cannot be assigned~$1$. Construction to obtain a critical KS set in dimension $d\ge 3$ from $N \ge d+1$ disjoint bases: (a)~For $d=3$ and $N=d+1$. (b)~For $d=3$ and $N=d+2$. (c)~For $d=3$ and $N=d+3$. (d)~For $d=4$ and $N=d+1$. (e)~For $d=4$ and $N=d+2$. The construction works similarly for any $d\ge 3$ and $N \ge d+1$. In all cases, it is impossible to assign to the depicted nodes the values $0$ or $1$ satisfying that one of the $d$~nodes in each continuous vertical line must be~$1$, while nodes connected by a dashed line cannot both be~$1$. However, such an assignment is possible whenever we remove any of the depicted nodes. \label{fig1}}
\end{figure}


Step 1: A SI-C set is critical if by removing any of its elements the resulting set is not a SI-C set. A KS set is critical \cite{ZP93} if by removing any of its elements the resulting set is not a KS set. Here we show that every set $S$ producing SD-C can be extended into a critical SI-C set $S''= S \cup S'$. To prove this, we need the following result \cite{CG96,RRHPHH20}. In $d \ge 3$, given any two nonorthogonal rank-one projectors $\Pi_A$ and $\Pi_B$, there is a set of projectors $E$ such that, for any KS assignment~$f$, $f(\Pi_A) + f(\Pi_B) \le 1$. The set $\Pi_A \cup E \cup \Pi_B$ is called a true-implies-false set (TIFS) \cite{CPSS18}, definite prediction set \cite{CG96}, $01$-gadget \cite{RRHPHH20}, or Hardy-like proof \cite{XCG20}.
 
The construction of a critical SI-C set containing $S$ is as follows. Let $G$ be the graph of orthogonality of $S$. Let $N$ be the minimum number of disjoint bases that cover all the vertices of~$G$. If $S$ allows for SD-C, then $N \ge 3$ \cite{CSW14}. If $N < d+1$, then we add disjoint bases until the total of number of disjoint bases is $N+1$. Then, we use the construction shown, for $d=3$, in Fig.~\ref{fig1}(a) and, for $d=4$, in Fig.~\ref{fig1}(d), and which works similarly for any $d\ge 5$, based on creating TIFSs between some specific nodes. If $N>d+1$, then we use the construction shown, for different combinations of $d$ and $N$, in Figs.~\ref{fig1}(b), \ref{fig1}(c), or \ref{fig1}(e). In all cases, the resulting set is a critical KS set in dimension $d$ for the reasons explained in Fig.~\ref{fig1}.
If one removes any of the nodes in each of the constructions in Fig.~\ref{fig1}, then the resulting set admits a KS noncontextual assignment.
Some SI-C sets are not KS sets (e.g., \cite{YO12,BBC12,XCS15}). Hence, the resulting critical KS set could, in principle, not be a critical SI-C set. However, this problem can be solved by suitably choosing the extra nodes used for the TIFSs in Fig.~\ref{fig1} (see Appendices~\ref{sec2} and \ref{sec3}).

A minimal critical SI-C set is a critical SI-C set of minimum cardinality. The previous proof guarantees that critical SI-C sets containing $S$ exist. However, the method used in the proof does not guarantee that the resulting critical SI-C set is minimal. To obtain a minimal critical SI-C set $S''$ from $S$ we can use the following results. Let us call ${\cal G}$ the graph of orthogonality of $S''$ and let $d$ be the dimension of the Hilbert space. Necessary conditions for $S''$ to be a SI-C set are that the chromatic number of ${\cal G}$ satisfies $\chi ({\cal G}) > d$ \cite{Cabello11} and that the fractional chromatic number satisfies $\chi_f ({\cal G}) > d$ \cite{RH14,CKB15}. These conditions allow us to identify candidates to be minimal critical SI-C sets containing any given SD-C set. Then, we can use the necessary and sufficient condition for being a SI-C set \cite{CKB15} to check whether or not they are SI-C sets. This condition states that a set of rank-one projectors $S'' = \{\Pi_i,\ldots,\Pi_n\}$ is a SI-C set if and only if there are nonnegative numbers $w = (w_1, \ldots, w_n)$ and a number $0 \leq y < 1$ such that $\sum_{j \in \cal I} w_j \le y$ for all ${\cal I} $, where ${\cal I}$ is any independent set of ${\cal G}$, and $\sum_{i} w_i \Pi_i \ge \openone$.

In practice, finding a critical SI-C set containing $S$ is not a problem. However, proving that it has minimal cardinality may be difficult. See \cite{XCG20,CKP16} for examples of such proofs (see also Appendix~\ref{sec4}). Nevertheless, minimality is only required for elegance; to connect SD-C to nonlocality what matters is the criticality of the SI-C set. 


Step 2: As pointed out in \cite{CKB15}, the weights $w$ needed to guarantee that $S''$ is a SI-C set generate a noncontextuality inequality violated by any quantum state. The results in \cite{CSW14,Cabello16} allow us to express this inequality as
\begin{widetext}
	\begin{equation}
		\label{SI-C}
		\sum_{i \in(see  V({\cal G})} w_i P(\Pi_i =1) - \sum_{(i,j) \in E({\cal G})} \max(w_i, w_j) P(\Pi_i =1, \Pi_j =1) \stackrel{\mbox{\tiny{NCHV}}}{\leq} \alpha({\cal G},w),
	\end{equation}
\end{widetext}
where $P(\Pi_i =1, \Pi_j =1)$ is the probability of obtaining outcome $1$ in the measurement associated to $\Pi_i$ (which has possible outcomes $1$ and $0$) and also in the measurement associated to $\Pi_j$, $V({\cal G})$, and $E({\cal G})$ are the sets of vertices and edges of ${\cal G}$, respectively, $\alpha({\cal G},w)$ is the independence number of $({\cal G},w)$ [i.e., the graph in which weight $w_i$ is assigned to each $i \in V({\cal G})$], and NCHV stands for noncontextual hidden-variable theories. The independence number of a (weighted) graph is the cardinality of its largest set of vertices (taking their weights into account) such that no two are adjacent.


Step 3: This step has two ingredients. One is the following method, introduced in \cite{Stairs83,HR83} and used extensively since then to embed a KS set in a bipartite Bell scenario. In each run of the experiment, we prepare a pair of particles in the two-qudit maximally entangled state 
\begin{equation}
	\label{ent}
	|\Psi\rangle = \frac{1}{\sqrt{d}} \sum_{k=0}^{d-1} |kk\rangle,
\end{equation}
distribute one particle to Alice and the other to Bob, and allow Alice (Bob) to freely and independently choose and perform one measurement from $S''$ (from the set obtained by taking the complex conjugate of the elements in $S''$). Here, we apply this embedding not only to KS sets but to any SI-C set.

The second ingredient is the observation that the behavior produced by this state and these measurements violate the following Bell inequality:
\begin{widetext}
\begin{equation}
	\label{Bell}
	 \sum_{i \in V({\cal G})} w_i P(\Pi_i^A =1,\Pi_i^B =1) - \sum_{(i,j) \in E({\cal G})} \frac{\max(w_i, w_j)}{2} \left[ P(\Pi_i^A =1, \Pi_j^B =1)+P(\Pi_j^A =1, \Pi_i^B =1) \right] \stackrel{\mbox{\tiny{LHV}}}{\leq} \alpha({\cal G},w), 
\end{equation}
\end{widetext}
where $P(\Pi_i^A =1, \Pi_j^B =1)$ is the probability that Alice obtains outcome~$1$ for measurement $\Pi_i$ on her particle and Bob obtains outcome~$1$ for measurement $\Pi_j$ on his particle. LHV stands for local hidden-variable theories.

That (\ref{Bell}) is a Bell inequality follows from the fact that, for LHV theories, the maximum of the left-hand side of (\ref{Bell}) is always attained by a deterministic assignment for the outcomes of the elements of $S''$ in Alice's particle and a deterministic assignment for the outcomes of the elements of the complex conjugate of $S''$ in Bob's particle. To maximize the left-hand side of (\ref{Bell}), we need to maximize (taking into account the weights) the number of projectors $\Pi_i$ to which outcome~$1$ is assigned both in Alice's and Bob's particles, while minimizing the number of adjacent $\Pi_j$ to which outcome~$1$ is assigned, which is exactly the definition of independence number of a (weighted) graph $({\cal G}, w)$. We can translate this violation into a nonlocal game with quantum advantage following the method in \cite{BCPSW14} (Sec.~II.B4).

The interest of Bell inequality~\eqref{Bell} comes from the following observations. Noncontextuality inequalities of the form~\eqref{SI-C} are in one-to-one correspondence with Bell inequalities of the form~\eqref{Bell}. The noncontextual bound in \eqref{SI-C} is equal to the local bound in \eqref{Bell}. The quantum violation of~\eqref{SI-C} for the maximally mixed state using $S''$ is equal to the quantum violation of the Bell inequality \eqref{Bell} for the maximally entangled state (\ref{ent}) and using $S''$ in Alice's side and the complex conjugate of $S''$ in Bob's side. Moreover, if $S''$ admits a weight $w$ for which the left-hand side of (\ref{SI-C}) is represented in quantum theory by $\lambda \openone$ with $\lambda > \alpha({\cal G},w)$ (as is the case in many critical SI-C sets, e.g., \cite{YO12,CEG96,LBPC14}), then, all quantum states violate inequality (\ref{SI-C}) by the same value, and this violation coincides with that of the Bell inequality \eqref{Bell} for state (\ref{ent}).

Overall, step 3 is an interesting result by itself, as it applies to any SI-C set (and not only to sets that can be reduced to KS sets, as \cite{CHTW04,AGA12}) and preserves the gap between quantum and noncontextual theories (while previous methods \cite{CHTW04,AGA12,CAB12} do not).


{\em Converting KCBS contextuality into nonlocality.---}Here, we apply the method described above to a set of projectors \cite{CBTB13,MANCB14} leading to a violation of the KCBS inequality \cite{KCBS08}. The method works for any form of contextuality. The example has been chosen for its relevance and simplicity, as we can use a previous result \cite{CKP16} to identify $S''$.


Consider $S=\{\Pi_1,\ldots,\Pi_5\}$, where $\Pi_i = |v_i\rangle \langle v_i|$,
with
\begin{subequations}
	\label{set1}
	\begin{align}
		|v_1\rangle =& \left(1,0,0\right)^T,\\
		|v_2\rangle =& \tfrac{1}{\sqrt{2}} \left(0,1,1\right)^T,\\
		|v_3\rangle =& \tfrac{1}{\sqrt{3}} \left(1,-1,1\right)^T,\\
		|v_4\rangle =& \tfrac{1}{\sqrt{2}} \left(1,1,0\right)^T,\\
		|v_5\rangle =& \left(0,0,1\right)^T.
	\end{align}
\end{subequations}
These measurements violate the KCBS inequality \cite{KCBS08}, which can be written \cite{Cabello16} as
\begin{equation}
	\label{KCBS}
	\sum_{i \in V(G)} P(\Pi_i=1) - \sum_{(i,j) \in E(G)} P(\Pi_i=1,\Pi_j=1) \le \alpha (G),
\end{equation}
where $G$ is the graph in Fig.~\ref{fig2}(a), for which $\alpha (G)=2$. For example \cite{CBTB13,MANCB14}, the state $|\psi\rangle = \tfrac{1}{\sqrt{3}} \left(1,1,1\right)^T$ gives $2+\frac{1}{9}$, which violates inequality (\ref{KCBS}).


\begin{figure}[t]
	\includegraphics[width=8.6cm]{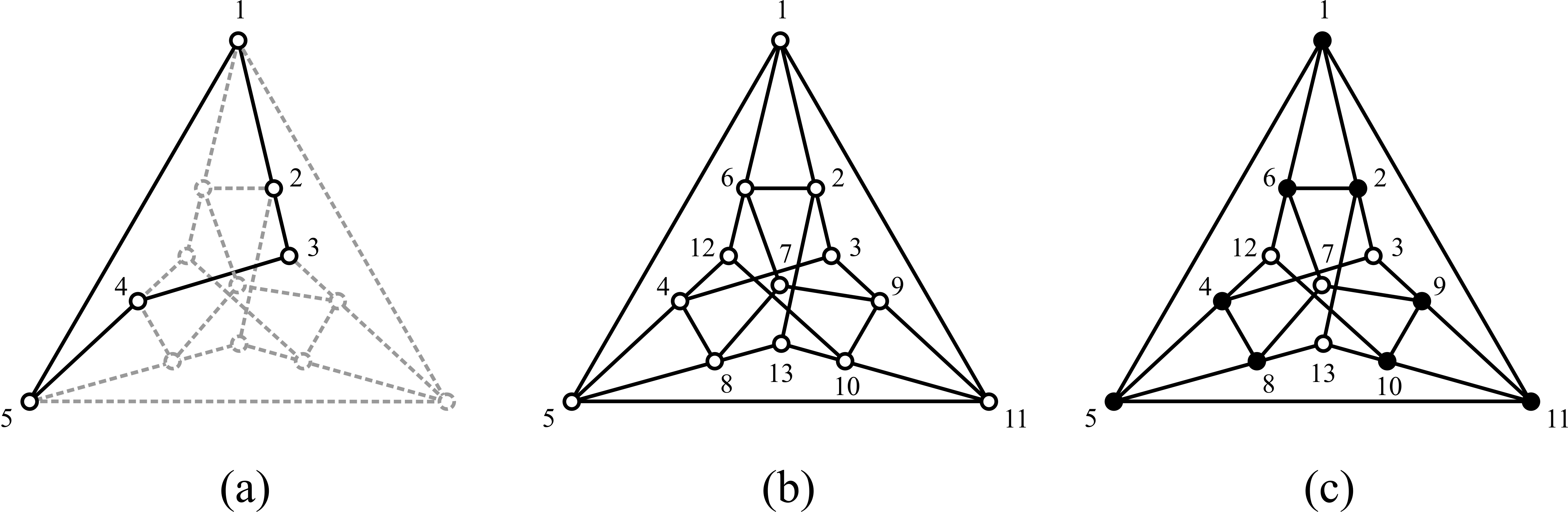}
	\caption{\label{fig2}(a) Five-vertex graph $G$ that represents the relations of orthogonality between the projectors $S=\{\Pi_1,\ldots,\Pi_5\}$ needed to violate the KCBS inequality (\ref{KCBS}). Projector $\Pi_i$ is represented by vertex $i$, mutually orthogonal projectors are represented by adjacent vertices. (b)~Extended $13$-vertex graph ${\cal G}$ representing the relations of orthogonality between elements of the smallest SI-C set $S''=\{\Pi_1,\ldots,\Pi_{13}\}$ that contains~$S$. (c)~Vertex-weighted graph $({\cal G},w)$ with the weights that produce the largest SI-C. Vertices in white have weight~$2$ and vertices in black have weight~$3$. These are the weights used in the SI-C inequality~(\ref{SI-C}) and the Bell inequality~(\ref{Bell}).}
\end{figure}


Step 1: The smallest critical SI-C set $S''$ that contains $S$ is the Yu-Oh set \cite{YO12}. This follows from the proof in \cite{CKB15} that the Yu-Oh set is the SI-C set of rank-1 projectors with minimum cardinality. Therefore, $S'=\{\Pi_6,\ldots,\Pi_{13}\}$, where $\Pi_i = |v_i\rangle \langle v_i|$,
with
\begin{subequations}
	\label{set2}
	\begin{align}
		|v_6\rangle =& \tfrac{1}{\sqrt{2}} \left(0,1,-1\right)^T,\\
		|v_7\rangle =& \tfrac{1}{\sqrt{3}} \left(1,1,1\right)^T,\\
		|v_8\rangle =& \tfrac{1}{\sqrt{2}} \left(1,-1,0\right)^T,\\
		|v_9\rangle =& \tfrac{1}{\sqrt{2}} \left(1,0,-1\right)^T,\\
		|v_{10}\rangle =& \tfrac{1}{\sqrt{2}} \left(1,0,1\right)^T,\\
		|v_{11}\rangle =& \left(0,1,0\right)^T,\\
		|v_{12}\rangle =& \tfrac{1}{\sqrt{3}} \left(-1,1,1\right)^T,\\
		|v_{13}\rangle =& \tfrac{1}{\sqrt{3}} \left(1,1,-1\right)^T.
	\end{align}
\end{subequations}
The graph ${\cal G}$ that represents the relations of orthogonality between the projectors $S''=\{\Pi_1,\ldots,\Pi_{13}\}$ is shown in Fig.~\ref{fig2}(b).


Step 2: The set of weights $\{w_1,\ldots,w_{13}\}$ leading to the
largest gap between quantum and noncontextual theories for inequality (\ref{SI-C}) for $S''$ is $w_i=2$ for $i=3,7,12,13$, and $w_i=3$, otherwise. See Fig.~\ref{fig2}(c). This follows from the observation that, in this case, the noncontextuality inequality (\ref{SI-C}) has $\alpha ({\cal G},w)=11$, while it is violated by any quantum state of dimension $d=3$, since, for any initial state (including the maximally mixed state), the left-hand side of (\ref{SI-C}) is $\frac{1}{3} (2 \times 4 + 3 \times 9) = 11 + \frac{2}{3}$.


Step 3: Distributing pairs of particles in the maximally entangled state (\ref{ent}), with $d=3$, between Alice and Bob and allowing each of them to perform a randomly chosen spacelike separated measurement from $S''$ (in this case, $S''$ and its complex conjugate are equal), we obtain a nonlocal behavior as the local bound of the Bell inequality (\ref{Bell}) is $\alpha ({\cal G},w)=11$, while the value for the left-hand side of (\ref{Bell}) is, again, $11 + \frac{2}{3}$. 


{\em Explanation, virtues, and limitations.---}Here, we give some intuition of how the method works. The set of states (in dimension $d \ge 3$) that yield contextual behaviors grows as the set of measurements grows from $S$ to $S''$. For example, while the state $|\psi'\rangle= \frac{1}{\sqrt{3}}(1,-1,1)^T$ does not violate inequality~(\ref{KCBS}), it violates a similar noncontextuality inequality replacing $S$ by $\{\Pi_1,\ldots,\Pi_9\}$ \cite{KK12}.
When all the measurements in $S''$ are used, then even the maximally mixed state produces contextuality and weights can be adjusted \cite{CKB15} to produce equal state-independent violation of a noncontextuality inequality for all states \cite{Cabello08,BBCP09,YO12,KBLGC12}.

The Bell inequality (\ref{Bell}) follows from the SI-C inequality (\ref{SI-C}) 
by noticing that (\ref{SI-C}) can be tested in experiments consisting of two sequential measurements on a maximally mixed state. We can assume that each of these measurements is performed by a different party. Sometimes Alice is the first to measure and Bob the second, and sometimes vice versa. Sometimes both parties measure the same $\Pi_i$, sometimes they measure different but compatible projectors. This view leads to the Bell inequality (\ref{Bell}) which shares the classical bound and it is also violated by the same amount when preparing pairs in state (\ref{ent}) and giving one particle to Alice and the other to Bob, as in this case Alice's and Bob's outcomes are perfectly correlated, and Alice's and Bob's local states are maximally mixed states. 


Virtues: (I) While inequalities (\ref{SI-C}) and (\ref{KCBS}) are noncontextuality inequalities that might only be testable by performing sequential nondemolition measurements on single systems \cite{KZG09,ZUZ13,LMZNCAH18}, inequality (\ref{Bell}) is a Bell inequality that can be tested by performing local measurements on spatially separated systems and can be converted into a nonlocal game.
(II) The ``compatibility'' or ``sharpness'' loophole \cite{GKCLKZGR10} in contextuality experiments with sequential measurements vanishes in the Bell test, as, there, measurements do not need to be ideal (or sharp) \cite{Cabello19} and observables on different particles are automatically compatible.
(III) The gap between quantum and local theories for the Bell inequality (\ref{Bell}) is the same as the gap between quantum and noncontextual theories for the SI-C~inequality~(\ref{SI-C}), and both are produced using the same measurements.
(IV) The violation of the Bell inequality (\ref{Bell}) by the measurements in $S''$ and state (\ref{ent}) vanishes whenever we remove from $S''$ any element of $S$. This follows from the fact that, in that case, inequality (\ref{SI-C}) is not violated by the maximally mixed state. Therefore, the maximally entangled state (\ref{ent}) fails to violate the Bell inequality (\ref{Bell}), as the local states of Alice and Bob are maximally mixed. This property follows from the fact that $S''$ is a critical SI-C set.
(V) There is no ``contextuality-nonlocality tradeoff'' \cite{KCK14,ZZLZSX16}. The quantum violations of the SI-C inequality (\ref{SI-C}) and the Bell inequality (\ref{Bell}) can be tested simultaneously in the same experiment. According to quantum theory, the experiment would give (equal) violations of both inequalities. The violation of (\ref{SI-C}) can be observed by allowing one of the parties, e.g., Alice, to perform sequential measurements. The violation of (\ref{Bell}) can be observed by considering the first (or second) measurements of Alice and the (only) measurements of Bob. It would be interesting to observe these simultaneous violations in an actual experiment.


Limitations: Except for the case $S=S''$, the nonlocal behavior resulting from the application of this method does not have the same gap between quantum and local theories than the gap between quantum and noncontextual theories of the original SD-C behavior. Arguably, no method exists that preserves this gap for all forms of state-dependent contextuality.


\begin{acknowledgments}
We thank Costantino Budroni and Zhen-Peng Xu for comments on an earlier version. This work was supported by Project Qdisc (Project No.\ US-15097), with FEDER funds, MINECO Project No.\ FIS2017-89609-P, with FEDER funds, and QuantERA Grant SECRET, by MINECO (Project No.\ PCI2019-111885-2).
\end{acknowledgments}




\appendix

\section{Basic definitions and results in contextuality for ideal measurements}
\label{sec1}


Here, by contextuality we mean contextuality produced by ideal measurements. For further details and other notions of contextuality the reader is referred to \cite{BCGKL21}. 

\begin{Definition}
	An {\em ideal measurement} of an observable $A$ is a measurement of $A$ that yields the same outcome when repeated and does not disturb any compatible observable.
\end{Definition}

\begin{Definition}
	Two observables $A$ and $B$ are {\em compatible} if there exists an observable $C$ such that, for every initial state $\rho$, for every outcome $a$ of $A$,
	\begin{equation}
		P\left(A=a\middle|\rho\right)=\sum_{c_a}{P\left(C=c_a\middle|\rho\right)}, 
	\end{equation} 
	and, for every outcome $b$ of $B$,
	\begin{equation} 
		P\left(B=b\middle|\rho\right)=\sum_{c_b}{P\left(C=c_b\middle|\rho\right)},
	\end{equation}
	where $P\left(A=a\middle|\rho\right)$ is the probability of, given state $\rho$, obtaining outcome $a$ for $A$.
	$C$ is called a {\em refinement} of $A$ (and $B$).
	Therefore, two observables are compatible when they have a common refinement.
\end{Definition}

Hereafter, for simplicity, every time all probabilities refer to the same initial state, we will write $P\left(A=a\right)$ rather than $P\left(A=a\middle|\rho\right)$.

\begin{remark}
	In quantum theory measurements are represented by POVMs.
\end{remark}

\begin{remark}
	\label{r2}
	One of the fundamental predictions of quantum theory is that the outcome-statistics of any measurement can be obtained with an ideal measurement. This follows from two results. (a) The observation \cite{Kleinmann14} that (L\"uders') state transformation produced by a self-adjoint operator \cite{Luders51} corresponds to the only process that can be associated with a measurement of an observable $A$ that does not disturb a subsequent measurement of any refinement of $A$. (b) Neumark's theorem stating that every POVM can be dilated to a self-adjoint operator \cite{Neumark40,Holevo80,AG93}. 
\end{remark}

Remark~\ref{r2} motivates the interest in studying correlations produced by ideal measurements of compatible observables for more general probabilistic theories.

\begin{remark}
	In quantum theory, every ideal measurement is represented by the spectral projectors of a self-adjoint operator and compatible observables correspond to commuting self-adjoint operators.
\end{remark}

\begin{Definition}
	A {\em context} is a set of ideal measurements of compatible observables.
\end{Definition}

\begin{Definition}
	A {\em contextuality scenario} is a number of ideal measurements, the cardinal of their respective sets of possible outcomes, and their relations of compatibility. 
\end{Definition}

\begin{Definition}
	A {\em behavior} for a contextuality scenario is a set of (normalized) probability distributions produced by ideal measurements satisfying the relations of compatibility of the scenario, one for each of the contexts, and such that the probability for every outcome of every measurement does not depend on the context (nondisturbance condition).
\end{Definition}




\begin{Definition}
	A behavior for a contextuality scenario is {\em contextual} if the probability distributions for each context cannot be obtained as the marginals of a global probability distribution on all observables. Otherwise the behavior is {\em noncontextual}.
\end{Definition}

\begin{Definition}
	The relations of compatibility between $N$~observables can be represented by an $N$-vertex graph, called the {\em graph of compatibility} of the scenario, in which each vertex represents an observable and adjacent vertices correspond to compatible observables.
\end{Definition}

\begin{theorem} \cite{Vorob'yev62}
	The only contextuality scenarios that admit contextual behaviors are those in which the graph of compatibility is nonchordal. That is, it contains, as induced subgraph, at least one cycle of four or more vertices (i.e., a square, a pentagon, an hexagon, etc.).
\end{theorem}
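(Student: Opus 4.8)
The plan is to prove Vorob'yev's theorem by establishing the equivalence between chordality of the compatibility graph and the nonexistence of contextual behaviors. Since a graph is chordal precisely when it contains no induced cycle of length four or more, the statement is equivalent to the assertion that \emph{a scenario admits a contextual behavior if and only if its compatibility graph is not chordal}. I would prove the two directions separately: the existence of a contextual behavior when the graph is nonchordal, and the impossibility of contextuality when it is chordal.

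For the first direction I would exhibit an explicit contextual behavior. Suppose the compatibility graph contains, as an induced subgraph, a cycle $v_1 - v_2 - \cdots - v_n - v_1$ with $n \ge 4$. Because the cycle is induced, two of its vertices are compatible only when adjacent on the cycle, so the relevant contexts are the edges $\{v_i, v_{i+1}\}$ (indices taken mod $n$) and no context contains two nonconsecutive vertices. Taking every observable dichotomic, I would prescribe perfectly correlated edge-distributions around the cycle, setting $v_i = v_{i+1}$ with certainty for $i = 1, \ldots, n-1$ and $v_n = 1 - v_1$ with certainty, while keeping every single-vertex marginal uniform. These context distributions satisfy the nondisturbance condition, since each shared single-vertex marginal agrees; yet any hypothetical global distribution would force $v_1 = v_2 = \cdots = v_n$ together with $v_n \neq v_1$, a contradiction. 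Hence no global distribution reproduces the context marginals, and the behavior is contextual.

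For the converse I would invoke the structure theory of chordal graphs. Every chordal graph admits a clique tree (junction tree) whose nodes are its maximal cliques and which satisfies the running intersection property: the intersection of any two maximal cliques is contained in every clique on the path joining them in the tree. Given a nondisturbing behavior, the maximal contexts are exactly these maximal cliques, and nondisturbance guarantees that the context distributions agree on all their overlaps. I would then construct a candidate global distribution by the junction-tree factorization, namely the product of the clique marginals divided by the separator marginals, and verify, by a leaf-removal induction along the clique tree, that this product marginalizes correctly onto each context distribution. The running intersection property is precisely what makes the successive gluing consistent, so that extending one clique at a time never conflicts with the previously assembled part.

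The main obstacle is this converse direction. One must reconstruct the full joint distribution from the overlapping marginals and prove that it restricts correctly to each context. The delicate points are controlling the separators given by the clique intersections, handling the denominators when separator marginals vanish (where the convention is to assign zero to the corresponding factor), and checking that the inductive gluing introduces no inconsistency. This is exactly the step where chordality, through the running intersection property, is indispensable: for a nonchordal graph the gluing around an induced cycle would fail, in accordance with the explicit counterexample of the first direction.
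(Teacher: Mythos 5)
The paper does not actually prove this theorem: it is stated with a citation to Vorob'yev's 1962 paper and used as a black box, so there is no in-paper argument to compare yours against. Judged on its own, your proof follows the standard route (the one used in the probabilistic-graphical-models literature): an explicitly inconsistent family of edge distributions around an induced cycle for one direction, and the clique-tree (junction-tree) gluing via the running intersection property for the other. Both halves are sound in outline, and chordality enters exactly where you say it does.

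Two points need tightening. First, in the nonchordal direction, a behavior must assign a distribution to \emph{every} context of the scenario, not only to the edges of the cycle; you should extend your cycle distributions to the rest of the scenario, e.g.\ by tensoring with deterministic outcomes for all observables outside the cycle. This works precisely because the cycle is induced: any context (clique) of the full compatibility graph meets the cycle in at most one edge, so the extension remains nondisturbing, and any global distribution for the full scenario would marginalize to one for the cycle, which you have excluded. Second, and more substantively, your converse uses the claim that ``nondisturbance guarantees that the context distributions agree on all their overlaps.'' Under the paper's literal definition of nondisturbance (only single-measurement probabilities are context-independent) this claim is false, and the theorem itself would fail: already for a triangle $\{a,b,c\}$ one can put a perfectly correlated distribution on the context $\{a,b\}$ while the distribution on $\{a,b,c\}$ marginalizes to an anticorrelated one, all singleton marginals being uniform; no global distribution reproduces both, yet the graph is chordal. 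So you must either read nondisturbance in the stronger, standard sense that the distribution assigned to a sub-context is the marginal of that assigned to any containing context (which holds automatically for behaviors produced by ideal measurements, the setting of the paper), or state that agreement on full separators is an additional hypothesis. With that reading made explicit, and the usual convention for vanishing separator marginals that you already mention, the leaf-removal induction along the clique tree goes through and the proof is complete.
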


\begin{Definition}
	An {\em induced subgraph} of a graph $G$ is a graph formed from a subset of the vertices of $G$ and all of the edges connecting pairs of vertices in that subset.
\end{Definition}

\begin{colloraly}
	The simplest scenario in which contextuality with ideal measurements is possible is the one consisting of four dichotomic observables whose graph of compatibility is a square. The {\em Clauser-Horne-Shimony-Holt \cite{CHSH69} Bell scenario} has this graph of orthogonality.
\end{colloraly}

\begin{theorem} \cite{XC19}
	Quantum theory produces contextual behaviors with ideal measurements in every scenario allowed by Vorob'ev's theorem.
\end{theorem}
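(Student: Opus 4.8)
The plan is to prove the converse of Vorob'ev's theorem in the quantum setting: whenever the compatibility graph $G$ of the scenario is nonchordal, quantum theory realizes a contextual behavior with ideal measurements. By Vorob'ev's theorem, nonchordality means that $G$ contains, as an induced subgraph, a cycle $C_n$ with $n \ge 4$. My first step would be to reduce the problem to this induced cycle. The key observation is that contextuality is monotone under passage from an induced subscenario to the full scenario: if the observables indexed by the vertices of $C_n$ already carry a quantum contextual behavior, then I assign a trivial (single-outcome, identity) ideal measurement to every vertex of $G$ outside $C_n$. Trivial observables commute with everything, so every edge of $G$ incident to a vertex outside $C_n$ is automatically compatible, while the required incompatibilities among the cycle vertices are supplied by the cycle construction. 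Since the abstract scenario constrains only the marginals on its listed contexts, any incidental compatibility among non-adjacent observables is immaterial. Taking the global state to be the cycle state tensored with an arbitrary state of the ancillary factor, the marginals on the contexts contained in $C_n$ coincide exactly with those of the standalone cycle behavior.

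The core of the argument is then to exhibit, for every $n \ge 4$, quantum ideal measurements $A_0,\dots,A_{n-1}$ whose commutation pattern is the $n$-cycle ($[A_i,A_{i+1}]=0$ with indices mod $n$, the remaining pairs incompatible) together with a state producing a behavior admitting no global joint distribution. For $n=4$ this is the CHSH scenario of the corollary above, which I would realize on a four-dimensional system with $A_0=Z\otimes\id$, $A_1=\id\otimes Z$, $A_2=X\otimes\id$, $A_3=\id\otimes X$ on a maximally entangled state, reproducing the CHSH violation. For general $n$ I would invoke the complete characterization of the noncontextuality inequalities of the $n$-cycle scenario \cite{AQB13}: for both parities the maximal quantum value strictly exceeds the noncontextual bound, the odd case containing the KCBS realization \cite{KCBS08} at $n=5$ as its prototype. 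Each such violation certifies the absence of a global joint distribution, i.e.\ a contextual quantum behavior on $C_n$.

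Finally I would close the monotonicity loop. Suppose, for contradiction, that the extended behavior on $G$ were noncontextual, i.e.\ admitted a global distribution over all observables reproducing every context marginal. Marginalizing this distribution onto the cycle observables would yield a global distribution reproducing the marginals on the contexts of $C_n$, contradicting the contextuality established in the previous step. Hence the extended behavior is contextual, and since $G$ was an arbitrary nonchordal compatibility graph, quantum theory produces a contextual behavior with ideal measurements in every scenario allowed by Vorob'ev's theorem.

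I expect the main obstacle to be the second step: producing, uniformly in $n$, explicit ideal measurements with the exact cyclic commutation structure and a state whose behavior is provably contextual, in particular reconciling the even and odd families within a single framework. The reduction and the extension by trivial observables are conceptually routine once monotonicity under induced subscenarios is isolated; the genuine content lies in the quantum realizability of every $n$-cycle.
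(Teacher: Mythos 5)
The paper does not actually prove this theorem: it is stated with a citation to \cite{XC19}, so your attempt has to be judged against what that result establishes. Your skeleton --- locate an induced cycle $C_n$, $n\ge 4$, via Vorob'ev's theorem, exhibit a quantum contextual behavior on the $n$-cycle scenario, and lift it to the full scenario by marginalization (any global distribution for $G$ would restrict to one for $C_n$) --- is the right one, the marginalization step is correct, and you rightly identify the quantum realizability of every $n$-cycle as the substantive ingredient. However, your one explicit construction is wrong: with $A_0=Z\otimes\openone$, $A_1=\openone\otimes Z$, $A_2=X\otimes\openone$, $A_3=\openone\otimes X$ and a maximally entangled state, the cycle correlations are $\langle A_0A_1\rangle=\langle A_2A_3\rangle=1$ and $\langle A_1A_2\rangle=\langle A_3A_0\rangle=0$, which satisfy every CHSH inequality (value $2$), so this behavior is noncontextual; one must rotate one side's observables, e.g.\ $\openone\otimes(Z\pm X)/\sqrt{2}$, to get a violation.

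The deeper gap is the step you dismiss as ``conceptually routine.'' A scenario fixes the relations of compatibility, and a behavior for it must be produced by ideal measurements \emph{satisfying those relations}; in the cited result (the point of the title ``contextuality from incompatibility'') the commutation pattern of the quantum measurements must reproduce the graph exactly, with non-adjacent pairs genuinely incompatible. Your trivial observables commute with everything, so your construction realizes a different scenario whose compatibility graph has strictly more edges (and whose outcome cardinalities need not match those prescribed); declaring the extra compatibilities ``immaterial'' is precisely what needs justification. The same defect occurs \emph{inside} the cycle: for even $n\ge 6$ the standard two-qubit realization makes every odd--even pair compatible, not only consecutive ones, so it does not have ``the exact cyclic commutation pattern'' you claim. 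Both problems are fixed by one idea missing from your proposal: a direct-sum (block) construction. Work on $\mathcal{H}_{\mathrm{cyc}}\oplus\mathcal{H}_{\mathrm{anc}}$, where $\mathcal{H}_{\mathrm{anc}}$ carries one qubit for each pair that must be made incompatible; give each measurement a cycle block (its cycle observable for $v\in C_n$, a multiple of the identity otherwise) and an ancilla block acting as Pauli $X$ on the shared qubit for one member of each such pair and as $Z$ for the other. Block-diagonal measurements commute iff both blocks commute, so the commutation pattern is exactly the scenario's graph, while a state supported on $\mathcal{H}_{\mathrm{cyc}}$ reproduces the cycle statistics untouched (note that tensoring with the Paulis instead of direct-summing fails: it dilutes each marginal by factors of $1/2$ and the violation disappears). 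With that repair, and correct cycle realizations (odd $n$: KCBS-type qutrit projectors, which do have exactly cyclic commutation; even $n$: the \cite{AQB13} violations plus the same block trick), your final marginalization paragraph closes the proof.
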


\begin{theorem} \label{bks} \cite{Bell66,KS67}
	In quantum theory, contextuality for ideal measurements requires quantum systems of dimension three or higher. 
\end{theorem}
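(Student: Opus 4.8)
The plan is to prove the contrapositive: to show that \emph{no} behavior produced by ideal measurements on a system of dimension $d=2$ can be contextual, so that contextuality forces $d\ge 3$. First I would record that on $\mathbb{C}^2$ an ideal measurement is represented, by the remarks above, by the spectral projectors of a self-adjoint operator, and that a self-adjoint operator on $\mathbb{C}^2$ is either a multiple of the identity $\id$ or has two distinct eigenvalues with one-dimensional eigenspaces. Hence every nondegenerate qubit observable determines a \emph{unique} orthonormal eigenbasis $\{|v\rangle,|v^\perp\rangle\}$, while degenerate observables are multiples of $\id$ and are compatible with everything.

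The key structural step is that commutation of nondegenerate qubit observables is transitive. Two nondegenerate observables on $\mathbb{C}^2$ commute if and only if they share their eigenbasis; consequently, if $A$ and $B$ share a basis and $B$ and $C$ share a basis, then, because that basis is unique, $A$ and $C$ share it as well. Thus, on any set of qubit observables, the compatibility relation partitions the nondegenerate observables into classes indexed by their common eigenbasis, each class forming a clique in the graph of compatibility and carrying no edges to the other classes; the remaining observables (multiples of $\id$) are universal vertices, adjacent to all others.

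Next I would invoke a small graph-theoretic lemma: a disjoint union of cliques is chordal, and adding a universal vertex to a chordal graph preserves chordality, since such a vertex supplies a chord for every cycle of length $\ge 4$ passing through it, while cycles avoiding it inherit a chord from the underlying chordal graph (applied one universal vertex at a time). Therefore the graph of compatibility of \emph{any} collection of ideal qubit measurements is chordal: it contains no induced cycle of four or more vertices. By Vorob'ev's theorem stated above, scenarios whose graph of compatibility is chordal admit only noncontextual behaviors. Hence every qubit behavior is noncontextual, which establishes that contextuality with ideal measurements requires $d\ge 3$.

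To complete the ``three or higher'' phrasing I would note that $d=3$ actually suffices, as already witnessed by the KCBS construction recalled later and by the theorem of \cite{XC19}, so three is the exact threshold. I expect the main obstacle to be the transitivity-of-commutation step together with the careful treatment of degenerate observables: one must be sure that ``observable'' here refers to a genuine ideal (projective) measurement and that multiples of $\id$ or coarse-grainings do not secretly produce an induced four-cycle. An alternative, more constructive route following Bell \cite{Bell66} would bypass Vorob'ev by exhibiting an explicit noncontextual hidden-variable model for all projective measurements on a qubit, for instance a hidden unit vector on the Bloch sphere whose orientation relative to the measurement direction reproduces the Born-rule marginals; there the obstacle shifts to verifying that a single global distribution simultaneously reproduces the statistics of every context.
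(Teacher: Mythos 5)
Your proof is correct, but it follows a genuinely different route from the one the paper points to: the paper offers no proof of this theorem at all, simply citing Bell (1966) and Kochen--Specker (1967), whose classical argument is the constructive one you sketch only as an alternative at the end --- an explicit noncontextual hidden-variable model (a hidden unit vector on the Bloch sphere) reproducing the statistics of every projective qubit measurement. Your main argument instead stays entirely inside the formal framework the paper's appendix sets up: you observe that commutation is transitive on nondegenerate qubit observables (each has a unique eigenbasis, so sharing a basis is an equivalence relation), that degenerate observables are multiples of $\id$ and hence universal vertices, so every qubit compatibility graph is a disjoint union of cliques plus universal vertices, which is chordal; Vorob'ev's theorem (Theorem 1 of the appendix) then forbids contextual behaviors. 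This is a clean structural argument and its key steps check out: the transitivity claim is exactly where nondegeneracy is needed (and you correctly quarantine the degenerate case, where transitivity would fail), and the chordality lemma is sound, since a cycle of length $\ge 4$ through a universal vertex always acquires a chord from that vertex while cycles avoiding it inherit chords inductively. What each approach buys: yours is self-contained given the paper's appendix and generalizes immediately to any theory in which compatibility of sharp qubit-like observables has this clique-plus-universal-vertex structure, but it leans on Vorob'ev's theorem, which is a nontrivial imported result; the Bell--KS route is elementary and constructive (it exhibits the global distribution explicitly rather than inferring its existence), which is why it is the standard citation, but it requires verifying the Born-rule marginals of an explicit model --- the obstacle you yourself identify. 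One small remark: the theorem only asserts necessity of $d\ge 3$, so your closing paragraph about $d=3$ being sufficient (KCBS) is welcome context but not needed for the statement.
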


\begin{theorem} \cite{KCBS08}
	The simplest scenario in which contextuality with ideal measurements on qutrits is possible is the one consisting of five dichotomic observables whose graph of compatibility is a pentagon. This is the {\em Klyachko-Binicio\u{g}lu-Can-Shumovsky (KCBS) scenario} \cite{KCBS08}.
\end{theorem}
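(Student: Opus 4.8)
The plan is to prove the two halves of the statement separately: \emph{achievability}, that the pentagon scenario does admit a contextual qutrit behavior, and \emph{minimality}, that no simpler scenario does. For achievability I would exhibit an explicit qutrit realization. Take five rank-one projectors $\Pi_i=|v_i\rangle\langle v_i|$ whose vectors are cyclically orthogonal, $v_1\perp v_2\perp v_3\perp v_4\perp v_5\perp v_1$ (for instance the vectors in Eq.~\eqref{set1}), so that the compatibility graph of the associated dichotomic observables $A_i=2\Pi_i-\id$ is exactly the pentagon, each adjacent pair commuting because the corresponding projectors are orthogonal. A direct check that the state $\tfrac{1}{\sqrt3}(1,1,1)^T$ returns $2+\tfrac19$ for the left-hand side of the KCBS inequality~\eqref{KCBS}, whose noncontextual bound is $\alpha(G)=2$, then certifies a contextual behavior on a single qutrit. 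This is the original KCBS construction \cite{KCBS08}.

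For minimality I would argue that every scenario strictly simpler than the pentagon fails on a qutrit. By Vorob'ev's theorem a contextual behavior requires the compatibility graph to contain an induced cycle on at least four vertices, and the only nonchordal graph on four vertices is the $4$-cycle (the CHSH/square scenario); any four-observable scenario that is not the square is chordal and hence noncontextual in every dimension. Thus it suffices to show that the square scenario admits no contextual behavior with dichotomic qutrit observables, which forces the number of observables up to five, where the pentagon already succeeds by the previous paragraph.

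The crux is therefore to rule out the square on a qutrit. Label the observables so that the two incompatible (non-commuting) diagonal pairs are $(A_1,A_3)$ and $(A_2,A_4)$, with all four cyclic edges commuting. I would fix $A_2$ and use that a nontrivial dichotomic observable on $\mathbb{C}^3$ has eigenspaces of dimensions $1$ and $2$; call them $V_+$ and $V_-$. Since $A_1$ and $A_3$ commute with $A_2$ they are block diagonal with respect to $V_+\oplus V_-$, acting as scalars on the one-dimensional $V_+$, so their non-commutativity is confined to the qubit $V_-$. There two non-commuting involutions generate the full matrix algebra, and one checks (using that $M_2(\mathbb{C})$ is simple, so the two blocks cannot be linked by a homomorphism) that $A_1$ and $A_3$ together generate the entire commutant of $A_2$, namely $\mathbb{C}\oplus M_2(\mathbb{C})$. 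Now $A_4$ commutes with both $A_1$ and $A_3$, hence with everything they generate, so by the double-commutant theorem $A_4$ lies in the bicommutant of $A_2$, i.e.\ in the abelian algebra generated by $A_2$ itself; therefore $[A_4,A_2]=0$, contradicting the incompatibility of $(A_2,A_4)$. Hence the square cannot be realized on a qutrit, completing the minimality argument.

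I expect this last step to be the main obstacle: the achievability half is a routine verification of orthogonality and of the inequality violation, and the reduction via Vorob'ev's theorem is immediate, but the rigorous claim that the two non-commuting block involutions $A_1,A_3$ generate the \emph{full} commutant of $A_2$ (so that $A_4$ is trapped in the bicommutant) is precisely where the qutrit dimension $d=3$ enters essentially, and it must be handled with care. One could alternatively phrase the same obstruction as the fact that the square scenario forces a tensor-product factorization $\mathbb{C}^{d_A}\otimes\mathbb{C}^{d_B}$ with $d_A,d_B\ge 2$, which is impossible for the prime dimension $3$, but making that factorization precise amounts to the same commutant computation.
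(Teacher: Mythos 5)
The paper does not actually prove this statement: it appears in Appendix~\ref{sec1} as a result quoted from the KCBS paper \cite{KCBS08}, with no argument given. So your proposal is not ``the same as the paper'' or ``different from the paper'' in the usual sense --- it is a self-contained proof where the paper offers only a citation. Judged on its own merits, your argument is essentially correct, and its two halves are the right ones. Achievability is the standard computation: the cyclically orthogonal vectors of Eq.~\eqref{set1} realize the pentagon as the commutation graph, the edge terms in \eqref{KCBS} vanish by orthogonality, and the state $\tfrac{1}{\sqrt 3}(1,1,1)^T$ gives $\sum_i |\langle v_i|\psi\rangle|^2 = 2+\tfrac19 > 2 = \alpha(G)$. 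For minimality, the reduction via Vorob'ev's theorem to excluding the $4$-cycle is correct (the square is the only nonchordal graph on at most four vertices), and your commutant computation is sound: since $A_1,A_3$ restrict to non-commuting involutions on the two-dimensional eigenspace of $A_2$, they generate a non-abelian unital $*$-subalgebra of $M_2(\mathbb{C})$, hence all of it; simplicity of $M_2(\mathbb{C})$ forbids a homomorphism linking the two blocks, so $A_1,A_3$ generate the full commutant $\mathbb{C}\oplus M_2(\mathbb{C})$ of $A_2$; therefore $A_4$ lies in the bicommutant of $A_2$, which is the abelian algebra spanned by $\id$ and $A_2$, contradicting $[A_2,A_4]\neq 0$. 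This is exactly where $d=3$ enters, and it is handled correctly.

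Three refinements would make the minimality half airtight. First, what you have shown is that any qutrit realization of the square scenario must have at least one \emph{commuting} diagonal pair; to conclude ``no contextual behavior'' you must add the (easy but necessary) observation that such a realization admits a global joint distribution --- e.g., glue the joint distributions of the two commuting triples along their common margin, or apply Vorob'ev to the realized, now chordal, compatibility graph --- so the resulting behavior is noncontextual. Incompatibility in the scenario does not by itself force non-commutativity in the realization. Second, if ``simplest'' is meant over all scenarios and not only dichotomic ones, you should dispose of four-observable scenarios with three-outcome measurements; this is in fact easier, since a qutrit observable with three distinct eigenvalues has abelian commutant, forcing its two neighbors in the square to commute with each other. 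Third, to get the pentagon as \emph{the} simplest five-observable scenario (not merely \emph{a} minimal one), note that every other nonchordal graph on five vertices contains an induced $C_4$, which your square argument (with the first refinement) already excludes on a qutrit.
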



\section{Results used in the proof of Theorem~\ref{tm}}
\label{sec2}


\begin{Definition} 
	A noncontextuality (NC) inequality is an inequality satisfied by any noncontextual behavior.
\end{Definition}

\begin{Definition}
	The {\em graph of orthogonality} of a set of projectors $\{\Pi_i\}$ is the one in which each vertex represents a projector and adjacent vertices correspond to orthogonal projectors.
\end{Definition}

\begin{Definition}
	The {\em independence number} of a graph $G$ is the cardinality of the maximal independent set of $G$. That is, the maximum number of $1$'s that can be assigned to the vertices of $G$ without violating the condition that two adjacent vertices cannot both have assigned the value~$1$.
\end{Definition}

\begin{theorem} \label{cswc} \cite{CSW14,Cabello16} 
	Every quantum contextual behavior produced by a set of projectors $S = \{\Pi_i,\ldots,\Pi_n\}$ violates a noncontextuality (NC) inequality of the form
	\begin{widetext}
		\begin{equation}
			\label{csw}
			\sum_{i \in V(G)} w_i P(\Pi_i =1) - \sum_{(i,j) \in E(G)} \max{(w_i, w_j)} P(\Pi_i =1, \Pi_j =1) \stackrel{\mbox{\tiny{NCHV}}}{\leq} \alpha(G,w),
		\end{equation}
	\end{widetext}
	where $G$ is the graph of orthogonality of $S$, 
	$V(G)$ is the set of vertices of $G$, 
	$E(G)$ is the set of edges of $G$, $w = (w_1, \ldots, w_n)$ are nonnegative numbers, and $\alpha(G,w)$ is the independence number of the weighted graph $(G,w)$ (i.e., of the graph $G$ in which vertex $i$ has associated weight $w_i$).
\end{theorem}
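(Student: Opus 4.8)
The plan is to prove the two halves of the inequality separately: first that the right-hand side $\alpha(G,w)$ is the exact maximum of the left-hand side over all noncontextual behaviors, and then that any contextual quantum behavior overshoots this value for a suitable nonnegative choice of weights $w$. Throughout I would use that a noncontextual behavior is, by definition, a convex mixture of deterministic assignments $f\colon V(G)\to\{0,1\}$, and that adjacency in $G$ means orthogonality, so a deterministic assignment respecting the exclusivity constraint $f(\Pi_i)f(\Pi_j)=0$ on edges is exactly the indicator vector of an independent set of $G$.

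For the classical bound I would first evaluate the left-hand side of \eqref{csw} on a single deterministic assignment with $1$-set $I\subseteq V(G)$, obtaining $V(I)=\sum_{i\in I}w_i-\sum_{\{i,j\}\in E(G),\,i,j\in I}\max(w_i,w_j)$. The key combinatorial lemma is that $V(\cdot)$ is maximized on an independent set: if $I$ contains an edge $\{i,j\}$ with $w_i\le w_j$, then deleting the lighter endpoint $i$ changes the value by $V(I)-V(I\setminus\{i\})=w_i-\sum_{k:\,\{i,k\}\in E(G),\,k\in I}\max(w_i,w_k)\le w_i-w_j\le 0$, since the edge $\{i,j\}$ alone already contributes $w_j$ to the deleted sum and all remaining terms are nonnegative. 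Iterating this deletion strictly reduces the number of internal edges without ever decreasing $V$, so the maximum is attained on some independent set $I^\ast$, where it equals $\sum_{i\in I^\ast}w_i\le\alpha(G,w)$. Extending to arbitrary noncontextual behaviors by convexity (the left-hand side is affine in the behavior) then gives the NCHV bound $\alpha(G,w)$, establishing that \eqref{csw} is a valid noncontextuality inequality for every $w\ge 0$.

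For the quantum violation I would exploit that, since adjacent vertices correspond to orthogonal projectors, $P(\Pi_i=1,\Pi_j=1)=\tr(\rho\,\Pi_i\Pi_j)=0$ on every edge, so the quantum value of the left-hand side of \eqref{csw} collapses to the linear functional $w\cdot p$ with $p=(P(\Pi_1=1),\dots,P(\Pi_n=1))$ and $P(\Pi_i=1)=\tr(\rho\,\Pi_i)$. I would then invoke the characterization that a behavior with orthogonality graph $G$ is noncontextual if and only if $p$ lies in the stable-set polytope $\mathrm{STAB}(G)$, the convex hull of the indicator vectors of the independent sets of $G$: one direction writes the global distribution as a mixture of independent-set indicators, the other reads off a global model from any such convex decomposition. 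By hypothesis the behavior is contextual, so $p\notin\mathrm{STAB}(G)$. Because $\mathrm{STAB}(G)$ is down-monotone in the nonnegative orthant and its support function in any nonnegative direction $w$ is precisely $\max_{I\text{ independent}}\sum_{i\in I}w_i=\alpha(G,w)$, a separating-hyperplane argument yields a nonnegative weight vector $w\ge 0$ with $w\cdot p>\alpha(G,w)$, which is exactly a violation of \eqref{csw}.

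The step I expect to be the main obstacle is the equivalence between contextuality of the behavior and the geometric condition $p\notin\mathrm{STAB}(G)$, together with the guarantee that the separating weight can be chosen nonnegative. The forward implication is immediate, but the converse---reconstructing a global noncontextual model from a point of $\mathrm{STAB}(G)$---relies on the fact that the vertex-probability data together with the exclusivity relations encode all the contextual content of the behavior, so that no finer (context-level) information can witness contextuality invisible in $p$; the nonnegativity of $w$ then rests on the down-monotonicity of $\mathrm{STAB}(G)$, which must be verified so that the separating hyperplane does not require negative weights falling outside the admissible family in \eqref{csw}.
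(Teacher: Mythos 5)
Your proposal is correct, but there is nothing in the paper to compare it against: the paper states this theorem as a cited result from the references [CSW14, Cabello16] and gives no proof of its own. What you have written is essentially a reconstruction of the graph-theoretic argument of those references. Your deletion lemma for the classical bound is exactly the reason the penalty terms $\max(w_i,w_j)P(\Pi_i=1,\Pi_j=1)$ (the contribution of Cabello16 to the form of the inequality) make the bound $\alpha(G,w)$ valid for \emph{all} noncontextual behaviors, not only those satisfying exclusivity; and your quantum part, in which orthogonality kills the penalty terms so the quantum value collapses to $w\cdot p$, is the CSW correspondence between behaviors and the stable-set polytope. The two steps you flag as the main obstacles do close, and routinely: (i)~for a quantum behavior generated by mutually orthogonal projectors, the joint distribution on any context is supported on the events ``exactly one projector fires'' or ``none fires,'' hence is completely determined by the vertex marginals $p$; therefore any convex decomposition $p=\sum_k \lambda_k \mathbf{1}_{I_k}$ into indicators of independent sets defines a global model reproducing the full behavior, which is precisely the converse implication (contextual $\Rightarrow p\notin\mathrm{STAB}(G)$) that you need; (ii)~$\mathrm{STAB}(G)$ is down-monotone, so if $c$ is any separating functional with $c\cdot p>\max_{x\in\mathrm{STAB}(G)}c\cdot x$, then $w$ defined by $w_i=\max(c_i,0)$ also separates, because for $x\in\mathrm{STAB}(G)$ the truncation of $x$ to the coordinates with $c_i\ge 0$ stays in $\mathrm{STAB}(G)$, while $w\cdot p\ge c\cdot p$ since $p\ge 0$; the support function of $\mathrm{STAB}(G)$ in the direction $w\ge 0$ is then $\alpha(G,w)$, as you use. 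With those two verifications spelled out, your argument is a complete and self-contained proof of the statement.
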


\begin{theorem} \label{cswc2} \cite{CSW14}
	A necessary condition for inequality \eqref{csw} to be violated is that $G$ contains, as induced subgraph, at least an odd cycle of size $5$ or larger (i.e., pentagon, heptagon, etc.) or one of their complements.
\end{theorem}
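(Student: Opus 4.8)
The plan is to reduce the statement to the structure theory of perfect graphs, following the correspondence established in \cite{CSW14}. The first step is to identify the two relevant bounds on the left-hand side of \eqref{csw} with invariants of $(G,w)$. By Theorem~\ref{cswc}, the maximum over noncontextual behaviors is the weighted independence number $\alpha(G,w)$. I would then argue that the supremum over all quantum behaviors with orthogonality graph $G$ is controlled by the weighted Lov\'asz theta number $\vartheta(G,w)$: writing each projector as $\Pi_i=|v_i\rangle\langle v_i|$, the unit vectors $|v_i\rangle$ satisfy $\langle v_i|v_j\rangle=0$ whenever $(i,j)\in E(G)$, so they constitute an orthonormal representation of the complement $\bar G$. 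Feeding this representation, together with the state $|\psi\rangle$, into the variational/SDP characterization of $\vartheta$ gives $\mathrm{LHS}_{\mathrm{Q}}\le \vartheta(G,w)$; this is precisely the quantum bound of the framework of \cite{CSW14}. Since $\alpha(G,w)\le\vartheta(G,w)$ always holds, a quantum violation of \eqref{csw} can occur only when $\alpha(G,w)<\vartheta(G,w)$ for the chosen weight $w$.

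The second step is purely graph-theoretic. I would invoke the Gr\"otschel--Lov\'asz--Schrijver characterization of perfect graphs: $\vartheta(G,w)=\alpha(G,w)$ for every nonnegative weight vector $w$ if and only if $G$ is perfect (equivalently, the theta body, the stable-set polytope, and the clique-constrained polytope all coincide). Consequently, if $G$ is perfect then $\mathrm{LHS}_{\mathrm{Q}}\le\vartheta(G,w)=\alpha(G,w)$ for all $w$, and \eqref{csw} admits no quantum violation whatsoever.

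The third step applies the Strong Perfect Graph Theorem, which states that $G$ is perfect precisely when neither $G$ nor $\bar G$ contains an induced odd cycle of length at least five; equivalently, $G$ is imperfect if and only if it contains, as an induced subgraph, an odd hole $C_{2k+1}$ with $k\ge 2$ or an odd antihole $\overline{C_{2k+1}}$. Taking the contrapositive of the second step, any violation of \eqref{csw} forces $G$ to be imperfect, hence to contain an induced odd cycle of size five or larger or the complement of one. This is exactly the asserted necessary condition.

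The main obstacle I anticipate is the first step, namely justifying rigorously that the quantum value is bounded above by $\vartheta(G,w)$. This requires expressing the probabilities $P(\Pi_i=1)$ and the exclusivity terms $P(\Pi_i=1,\Pi_j=1)$ through the orthonormal representation $\{|v_i\rangle\}$ and the handle $|\psi\rangle$, and then verifying that the maximization over all states and all Hilbert-space dimensions does not exceed the value of the Lov\'asz semidefinite program; I would lean on \cite{CSW14} for this identification, since it is the analytic core underlying Theorem~\ref{cswc}. Once this upper bound is secured, the remaining two steps are direct citations of the Gr\"otschel--Lov\'asz--Schrijver theorem and the Strong Perfect Graph Theorem, and their logical combination yields the claim immediately.
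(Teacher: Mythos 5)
Your proposal is correct and takes essentially the same approach as the paper's source: the paper states this theorem only as a citation to \cite{CSW14}, and the argument there is precisely yours --- the quantum value of \eqref{csw} is bounded by the weighted Lov\'asz number $\vartheta(G,w)$ (the penalty terms vanish for orthogonal projectors, and the vectors $|v_i\rangle$ with handle $|\psi\rangle$ feed into the orthonormal-representation characterization of $\vartheta$), the Gr\"otschel--Lov\'asz--Schrijver theorem gives $\vartheta(G,w)=\alpha(G,w)$ for all weights when $G$ is perfect, and the Strong Perfect Graph Theorem identifies imperfection with the presence of an induced odd hole or odd antihole of size five or larger. There is no gap; the one step you flagged as delicate (the $\vartheta$ upper bound on the quantum value) is exactly the core identification made in \cite{CSW14}.
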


\begin{remark}
	The KCBS inequality \cite{KCBS08} can be expressed as \eqref{csw}, where $G$ is a pentagon.
\end{remark}

\begin{Definition}
	The {\em complement of a graph} $G$, denoted $\overline{G}$, is a graph on the same vertices such that two distinct vertices of $\overline{G}$ are adjacent if and only if they are not adjacent in $G$. 
\end{Definition}

\begin{Definition}
	A {\em state-dependent contextuality (SD-C) set} is a set of projectors that produces contextual behaviors for some initial states.
\end{Definition}

\begin{Definition}
	A {\em state-independent contextuality (SI-C) set} is a set of projectors that produces contextual behaviors for any initial state.
\end{Definition}

\begin{Definition}
	A {\em $k$-coloring of a graph} $G$ is an assignment of one out of $k$ colors to each vertex of $G$ such that adjacent vertices are assigned different colors. 
\end{Definition}

\begin{Definition}
	The {\em chromatic number} of a graph $G$, denoted $\chi(G)$, is the minimal number $k$ such that a $k$-coloring of $G$ is possible. The chromatic number is also the minimal number of partitions of the graph into independent sets.
\end{Definition}

\begin{Definition}
	The {\em fractional chromatic number}, denoted $\chi_f (G)$, is the minimum of $\frac{a}{b}$ such that vertices have $b$ associated colors, out of $a$ colors, where adjacent vertices have associated disjoint sets of colors.
\end{Definition} 

\begin{remark}
	$\chi_f (G) \le \chi(G)$.
\end{remark}

\begin{theorem} \label{t6} \cite{Cabello11}
	For a set of rank-one projectors $\{\Pi_i\}$ in dimension $d$, the condition $\chi (G) > d$, for the orthogonality graph $G$ of $\{\Pi_i\}$, is necessary for SI-C.
\end{theorem}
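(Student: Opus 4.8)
The plan is to prove the contrapositive: assuming $\chi(G) \le d$, I would exhibit a single state whose behavior is noncontextual, so that $\{\Pi_i\}$ cannot be a SI-C set. The natural candidate is the maximally mixed state $\rho = \mathbb{I}/d$, because its symmetry collapses the correlations appearing in \eqref{csw} to quantities controlled purely by the dimension and the weights.

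First I would fix a proper $d$-coloring of $G$, which exists by hypothesis, and read it as a partition of $V(G)$ into $d$ independent sets $V_1, \ldots, V_d$. For any choice of nonnegative weights $w$, each color class is an independent set, so $\sum_{i \in V_k} w_i \le \alpha(G,w)$ by definition of the weighted independence number; summing over the $d$ classes yields the key inequality $\sum_{i} w_i \le d\,\alpha(G,w)$.

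Next I would evaluate the left-hand side of the CSW inequality \eqref{csw} on $\rho = \mathbb{I}/d$. Since every $\Pi_i$ is rank one, $P(\Pi_i=1)=\tr(\Pi_i \mathbb{I}/d)=1/d$, and since every edge of $G$ joins orthogonal projectors, each joint term $P(\Pi_i=1,\Pi_j=1)$ with $(i,j)\in E(G)$ vanishes. Hence the left-hand side equals $\tfrac{1}{d}\sum_i w_i$, which by the previous step is at most $\alpha(G,w)$. Thus, for every weight $w$, the maximally mixed state saturates at best but never violates \eqref{csw}. Invoking Theorem~\ref{cswc} in contrapositive form---every contextual quantum behavior must violate some inequality of the form \eqref{csw}---I conclude that the behavior produced by $\rho=\mathbb{I}/d$ is noncontextual, so $\{\Pi_i\}$ is not a SI-C set. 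This establishes that $\chi(G)>d$ is necessary for SI-C.

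The main obstacle, and the step deserving the most care, is the logical use of Theorem~\ref{cswc}: it guarantees only that contextuality \emph{forces} a violation of \emph{some} \eqref{csw} inequality, so I must argue over all weight vectors $w$ simultaneously rather than a single fixed one, and the coloring bound $\sum_i w_i \le d\,\alpha(G,w)$ is precisely what makes this quantification uniform. A secondary point to verify is that the orthogonality structure genuinely forces the edge terms to vanish for $\rho=\mathbb{I}/d$ and that rank-one-ness gives $P(\Pi_i=1)=1/d$; both are immediate for ideal (projective) measurements but should be stated explicitly, since the whole argument hinges on the left-hand side reducing to $\tfrac{1}{d}\sum_i w_i$.
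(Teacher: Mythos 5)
Your proof is correct. One thing to be aware of: the paper does not actually prove Theorem~\ref{t6}; it is imported by citation from \cite{Cabello11}, so there is no in-paper argument to compare against. Your route is sound as it stands: the coloring bound $\sum_i w_i \le d\,\alpha({G},w)$ holds uniformly in $w$ because each of the $d$ color classes is an independent set, the left-hand side of \eqref{csw} evaluated on $\rho=\openone/d$ collapses to $\tfrac{1}{d}\sum_i w_i$ (rank-one projectors give $P(\Pi_i=1)=1/d$, and orthogonality kills every edge term for \emph{any} state), and you correctly flagged the one delicate step, namely that noncontextuality must be inferred from the failure of \emph{every} inequality of the form \eqref{csw}, which is exactly what Theorem~\ref{cswc} in contrapositive form licenses. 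The only substantive remark is that you can get the same conclusion without leaning on the completeness direction of Theorem~\ref{cswc}, which is the strongest ingredient you invoke: a proper $d$-coloring (pad with empty color classes if $\chi(G)<d$) directly yields a noncontextual global distribution---draw a color $k\in\{1,\ldots,d\}$ uniformly at random and assign outcome $1$ to precisely the projectors of color $k$, outcome $0$ to the rest. Since mutually orthogonal projectors form a clique of $G$, they receive distinct colors, so at most one projector per context gets outcome $1$; this model reproduces $P(\Pi_i=1)=1/d$ and all vanishing edge probabilities, i.e., exactly the behavior of the maximally mixed state, exhibiting the noncontextual model explicitly rather than inferring its existence. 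That direct construction is the more elementary and self-contained argument (and is the idea behind the cited result); your version buys brevity at the price of invoking a characterization theorem.
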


\begin{theorem} \label{t6b} \cite{RH14,CKB15}
	For a set of rank-one projectors $\{\Pi_i\}$ in dimension $d$, the conditions $\chi_f (G) > d$ for the orthogonality graph $G$ of $\{\Pi_i\}$, is necessary for SI-C.
\end{theorem}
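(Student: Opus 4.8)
The plan is to prove the contrapositive: I will show that if $\chi_f(G) \le d$, then the set $\{\Pi_i\}$ fails to be a SI-C set because the \emph{maximally mixed state} already yields a noncontextual behavior. Since a SI-C set must produce contextuality for every state---in particular for $\id/d$---this establishes that $\chi_f(G) > d$ is necessary. First I would invoke the characterization of noncontextual behaviors underlying Theorem~\ref{cswc}. For a set of rank-one projectors with orthogonality graph $G$, adjacent (orthogonal) projectors represent mutually exclusive events, so $P(\Pi_i=1,\Pi_j=1)=0$ whenever $(i,j)\in E(G)$, and the deterministic noncontextual assignments are exactly the $\{0,1\}$-vectors supported on independent sets of $G$. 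Hence a behavior is noncontextual if and only if the vector $p=(P(\Pi_i=1))_{i}$ lies in the stable-set polytope $\mathrm{STAB}(G)=\mathrm{conv}\{\mathbf 1_I : I \text{ independent in } G\}$, the weighted inequalities \eqref{csw} being precisely the valid cuts $w\cdot p \le \alpha(G,w)$ of this down-monotone polytope. For $\id/d$ one computes $P(\Pi_i=1)=\tr(\Pi_i\,\id/d)=\tfrac1d$ for every $i$ (each $\Pi_i$ rank one), so the relevant point is the uniform vector $\tfrac1d\mathbf 1$.

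The combinatorial heart of the argument is the identity
\begin{equation}
\label{keystab}
\tfrac1d\,\mathbf 1 \in \mathrm{STAB}(G) \iff \chi_f(G)\le d .
\end{equation}
To see ``$\Leftarrow$'', I would take an optimal fractional coloring, i.e.\ weights $y_I\ge 0$ on the independent sets with $\sum_I y_I=\chi_f(G)=:s\le d$ and $\sum_{I\ni v} y_I\ge 1$ for all $v$; then $\sum_I (y_I/s)\,\mathbf 1_I$ is a convex combination lying in $\mathrm{STAB}(G)$ whose every coordinate is $\ge 1/s\ge 1/d$, and down-monotonicity of $\mathrm{STAB}(G)$ (the independent sets form an independence system, so coordinates may be freely decreased by reweighting $\mathbf 1_{I}$ towards $\mathbf 1_{I\setminus\{v\}}$) gives $\tfrac1d\mathbf 1\in\mathrm{STAB}(G)$. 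The converse runs the same computation backwards: a representation of $\tfrac1d\mathbf 1$ as a convex combination of independent-set indicators yields, after rescaling by $d$, a fractional coloring of weight $\le d$. Equivalently, \eqref{keystab} is the standard fact that $\chi_f(G)$ equals the reciprocal of $\max\{t:t\mathbf 1\in\mathrm{STAB}(G)\}$.

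Combining the two ingredients closes the proof: if $\chi_f(G)\le d$, then $\tfrac1d\mathbf 1\in\mathrm{STAB}(G)$ by \eqref{keystab}, so the behavior of $\{\Pi_i\}$ on $\id/d$ is noncontextual, whence $\{\Pi_i\}$ is not a SI-C set; contrapositively, SI-C forces $\chi_f(G)>d$. I expect the main obstacle to be not the combinatorics of \eqref{keystab}---which is elementary linear programming over the stable-set polytope---but the rigorous justification that the set of noncontextual behaviors coincides exactly with $\mathrm{STAB}(G)$, i.e.\ that exclusivity together with the nondisturbance condition imposes no noncontextual constraint beyond membership in the stable-set polytope. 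This is precisely the content of the CSW framework invoked in Theorem~\ref{cswc}, and it is where all the physics of the statement resides; the rest is a transparent computation of $\tfrac1d\mathbf 1$ for the maximally mixed state.
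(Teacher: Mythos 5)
The paper itself contains no proof of Theorem~\ref{t6b}; it is imported by citation from \cite{RH14,CKB15}. Your proposal is a correct, self-contained reconstruction of exactly the standard argument behind those references: the maximally mixed state gives the uniform marginal vector $\tfrac{1}{d}\mathbf{1}$, and $\tfrac{1}{d}\mathbf{1}\in\mathrm{STAB}(G)$ if and only if $\chi_f(G)\le d$ (your LP rescaling argument for this equivalence is sound, and only the ``$\Leftarrow$'' direction is even needed for the contrapositive), so any set whose orthogonality graph has $\chi_f(G)\le d$ fails to produce contextuality for $\openone/d$ and hence is not SI-C. The one step you leave compressed---that membership of the marginal vector in $\mathrm{STAB}(G)$ yields a noncontextual model for the \emph{full} behavior on all contexts, not merely the pairwise exclusivity constraints---closes without difficulty: contexts are cliques of $G$, an independent set meets a clique in at most one vertex, so the mixture of deterministic independent-set assignments reproduces every context's joint distribution (including the all-zero outcome, whose probability is $1-\sum_{i\in C}\tfrac{1}{d}$ on both sides); this is the content of the CSW correspondence you invoke via Theorem~\ref{cswc}, and it is indeed where the physical content of the statement lives.
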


\begin{theorem} \label{t66} \cite{CKB15}
	A set of rank-one projectors $S = \{\Pi_i,\ldots,\Pi_n\}$ is a SI-C set if and only if there are nonnegative numbers $w = (w_1, \ldots, w_n)$ and a number $0 \leq y < 1$ such that $\sum_{j \in \cal I} w_j \le y$ for all ${\cal I} $, where ${\cal I}$ is any independent set of the graph of orthogonality
	of $S$, and $\sum_{i} w_i \Pi_i \ge \openone$.
\end{theorem}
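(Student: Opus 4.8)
The plan is to treat the two implications separately, in both cases using Theorem~\ref{cswc} to identify the noncontextual bound of \eqref{csw} with the weighted independence number. For the easy (``if'') direction I would simply check that the stated data yield a single inequality of the form \eqref{csw} that is violated by every state. The key observation is that for rank-one projectors the only nontrivial compatibility is orthogonality, so quantum mechanically every adjacent pair obeys $P(\Pi_i=1,\Pi_j=1)=0$ and, for an arbitrary state $\rho$, the left-hand side of \eqref{csw} reduces to $\tr(\rho\sum_i w_i\Pi_i)\ge\lambda_{\min}(\sum_i w_i\Pi_i)\ge 1$ by the hypothesis $\sum_i w_i\Pi_i\ge\openone$. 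Since the hypothesis $\sum_{j\in{\cal I}}w_j\le y$ for all independent sets ${\cal I}$ says precisely that $\alpha(G,w)\le y<1$ (for $G$ the orthogonality graph of $S$), the noncontextual bound lies strictly below the quantum value for every $\rho$, so $S$ is SI-C.

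The substance is in the converse, where the plan is a separating-hyperplane argument in behavior space. I would represent a behavior by $p=(P(\Pi_i=1))_i$, take the noncontextual behaviors to be the stable-set polytope $\mathrm{STAB}(G)$ (the convex hull of indicator vectors of the independent sets of $G$), and take the quantum behaviors to be $\mathcal{Q}=\{(\tr(\rho\Pi_i))_i:\rho\succeq 0,\ \tr\rho=1\}$, which is convex and compact as an affine image of the state space. By definition $S$ is SI-C exactly when no state is noncontextual, i.e.\ when $\mathcal{Q}\cap\mathrm{STAB}(G)=\emptyset$. Strict separation of these disjoint compact convex sets then yields weights $w$ and constants $c_1>c_2$ with $\tr(\rho\sum_i w_i\Pi_i)\ge c_1$ for all $\rho$ and $\langle w,q\rangle\le c_2$ for all $q\in\mathrm{STAB}(G)$.

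Next I would bring this certificate into the stated form. The crucial point is that $w$ can be taken nonnegative: because $\mathrm{STAB}(G)$ is down-closed (every subset of an independent set is independent), zeroing out a negative component $w_k$ leaves $\langle w,q\rangle\le c_2$ intact---the maximizing $q$ may be chosen with $q_k=0$---while it only raises $\tr(\rho\sum_i w_i\Pi_i)$, so both separation inequalities survive. Once $w\ge 0$, maximizing over the vertices of $\mathrm{STAB}(G)$ gives $\max_{\cal I}\sum_{j\in{\cal I}}w_j=\alpha(G,w)\le c_2<c_1\le\lambda_{\min}(\sum_i w_i\Pi_i)$. Finally I would rescale $w$ by $\mu:=\lambda_{\min}(\sum_i w_i\Pi_i)$, which is positive since $\mu>\alpha(G,w)\ge\max_i w_i>0$; this forces $\sum_i w_i\Pi_i\ge\openone$ and collapses the gap to $\alpha(G,w)<1$, so $y:=\alpha(G,w)$ completes the required data.

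The hard part will be the necessity direction, and specifically the two points that make the separating functional match \eqref{csw}: justifying that the noncontextual behaviors are captured by $\mathrm{STAB}(G)$---so that the separation constant on that side is governed by the weighted independence number rather than by some larger polytope---and guaranteeing that the separating normal can be chosen with $w\ge 0$. The latter is exactly what down-closedness of $\mathrm{STAB}(G)$ secures; without it, $\max_{q\in\mathrm{STAB}(G)}\langle w,q\rangle$ need not equal $\alpha(G,w)$ and the certificate would fail to be a noncontextuality inequality of the required form.
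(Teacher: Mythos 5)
The paper does not prove Theorem~\ref{t66}: it is quoted from Ref.~\cite{CKB15} and used as a black box (the only surrounding remark is that $w$ generates a noncontextuality inequality of the form \eqref{csw} violated by every state), so there is no internal proof to compare yours with; judged against the literature, your argument is correct and is essentially the standard convex-separation/duality route by which the result of \cite{CKB15} is obtained. Sufficiency: correct --- distinct commuting rank-one projectors are orthogonal, so the quantum value of the left-hand side of \eqref{csw} is $\tr(\rho\sum_i w_i\Pi_i)\ge\lambda_{\min}(\sum_i w_i\Pi_i)\ge 1>y\ge\alpha(G,w)$ for every $\rho$. Necessity: your three ingredients all hold up. (i)~The quantum behavior of $\rho$ is noncontextual iff $(\tr(\rho\Pi_i))_i\in\mathrm{STAB}(G)$: exclusivity forces any reproducing global distribution to be supported on independent sets, and conversely a mixture of independent-set indicators with the right one-point marginals reproduces all within-context statistics, since those are fixed by exclusivity plus the marginals; this is your deferred lemma, and you should write it out, because it is exactly what turns ``$S$ is SI-C'' into the disjointness $\mathcal{Q}\cap\mathrm{STAB}(G)=\emptyset$ that the separation theorem needs. (ii)~Passing to the positive part $w^+$ of $w$ preserves both separation inequalities: $\langle w^+,p\rangle\ge\langle w,p\rangle$ on the quantum side, while $\langle w^+,\mathbf{1}_I\rangle=\langle w,\mathbf{1}_{I'}\rangle$ with $I'=\{i\in I:w_i>0\}$ still independent, so the maximum over $\mathrm{STAB}(G)$ does not increase. (iii)~The rescaling is legitimate: $\lambda_{\min}(\sum_i w_i\Pi_i)\ge c_1>c_2\ge\alpha(G,w)\ge\max_i w_i$, and $w^+\ne 0$ (hence $\max_i w_i>0$) because $w^+=0$ would force $c_1\le 0\le c_2$, contradicting strict separation since $0\in\mathrm{STAB}(G)$. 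With (i) made explicit, the proof closes.
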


In particular, $w$ gives rise to a NC inequality violated by any quantum state, inequality that can be written as (\ref{csw}), where $G$ is the graph of orthogonality of $S$.

\begin{Definition}
	\label{kss}
	A {\em Kochen-Specker (KS) assignment} to a set of rank-one projectors is an assignment of $0$ or $1$ satisfying that: (I)~two orthogonal projectors cannot both have assigned~$1$, (II)~for every set of mutually orthogonal projectors summing the identity, one of them must be assigned~$1$.
\end{Definition}

\begin{Definition}
	A {\em KS set} is a set of rank-one projectors which does not admit a KS assignment.
\end{Definition}

\begin{theorem} \cite{KS67}
	KS sets exist in dimension $d \ge 3$, but not in $d=2$.
\end{theorem}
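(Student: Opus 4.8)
The plan is to prove the two directions separately: the negative statement for $d=2$ is a structural fact about orthogonality in the plane, whereas the positive statement for $d\ge 3$ combines the uncolorability of the three-dimensional case with a lifting argument. I would dispose of $d=2$ first, since it is the easy half.

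For $d=2$, the key observation is that the orthogonal complement of a one-dimensional subspace of $\mathbb{C}^2$ is one-dimensional, so every rank-one projector $\Pi=|v\rangle\langle v|$ has a \emph{unique} orthogonal partner $\Pi^\perp=|v^\perp\rangle\langle v^\perp|$, and any set of mutually orthogonal rank-one projectors summing to $\openone$ must be a pair $\{\Pi,\Pi^\perp\}$ (three mutually orthogonal one-dimensional subspaces cannot coexist in a plane). Hence the orthogonality graph of any set $S$ of rank-one projectors in $\mathbb{C}^2$ is a disjoint union of edges and isolated vertices, i.e.\ a matching. I would then exhibit an explicit KS assignment: on each orthogonal pair contained in $S$ assign $1$ to one element and $0$ to the other, and assign $0$ to every projector of $S$ with no orthogonal partner in $S$. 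Condition~(I) holds because no two projectors carrying a $1$ are adjacent, and condition~(II) holds because every complete basis contained in $S$ is one of these pairs and therefore carries exactly one $1$. This shows that no set in $d=2$ is a KS set.

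For $d\ge 3$ I would first settle the core case $d=3$ by producing a finite set of rank-one projectors in $\mathbb{C}^3$ admitting no $\{0,1\}$-assignment satisfying (I) and (II)—this is the content of the original Kochen--Specker theorem. One may either invoke a concrete uncolorable configuration (the original $117$-vector set, or a smaller modern one) whose non-colorability is a finite verification over its orthogonal triads, or give the geometric argument that a $\{0,1\}$-valued frame function on the directions of $\mathbb{R}^3$ is forced to assign equal values to sufficiently close directions and therefore cannot consistently distinguish orthogonal ones. Either route supplies a KS set in $d=3$.

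To reach \emph{every} $d\ge 3$, I would lift a $d=3$ KS set to higher dimension. The naive embedding $\mathbb{C}^3\hookrightarrow\mathbb{C}^d$—padding the KS-set vectors with zeros and completing each orthogonal triad to a full $d$-basis with a single fixed frame $\{e_4,\dots,e_d\}$ of the complementary subspace—does \emph{not} work directly, and this is exactly where I expect the main obstacle to lie: an assignment could place a $1$ on one auxiliary vector $e_j$ and $0$ everywhere else, satisfying (I) and (II) vacuously on every completed basis, thereby escaping the intended contradiction. I would close this escape route using true-implies-false gadgets, as in the construction of Fig.~\ref{fig1}, which joins $N\ge d+1$ disjoint bases so that the auxiliary vectors are pinned and any candidate assignment is driven back onto the original triads, reproducing the $d=3$ contradiction. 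This produces a (critical) KS set in each dimension $d\ge 3$ and completes the proof.
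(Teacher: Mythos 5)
Your proposal is correct in substance, but it is worth noting that the paper does not actually prove this statement: it records it as a known result, citing Kochen and Specker's original construction, so your write-up is being compared against a citation rather than an argument. Your $d=2$ half is complete and is the standard argument: in $\mathbb{C}^2$ each rank-one projector has a unique orthogonal partner, so the orthogonality graph is a matching, and the assignment you describe satisfies conditions (I) and (II) of Definition~\ref{kss}. For $d\ge 3$ your outline is sound, and your observation that the naive padding $\mathbb{C}^3\hookrightarrow\mathbb{C}^d$ fails (a single auxiliary vector set to $1$ satisfies both conditions vacuously) is exactly the reason nontrivial higher-dimensional constructions are needed. However, your route has an unnecessary detour and a slight mischaracterization. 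The TIFS machinery you invoke (Theorems~\ref{t10b} and \ref{t10d2}, Fig.~\ref{fig1}) does not need a pre-existing $d=3$ KS set to be ``lifted'': given any $d+1$ disjoint bases in dimension $d\ge 3$, TIFSs exist between any two nonorthogonal rank-one projectors by an explicit construction (which does not presuppose the KS theorem), and placing them as in Fig.~\ref{fig1}(a)/(d) yields a contradiction directly --- one node of the distinguished basis must be assigned $1$ by condition (II), and the TIFSs then force every node of the corresponding partner basis to be $0$, violating (II) for that basis. So the contradiction lives entirely in the gadget structure; the assignment is not ``driven back onto the original triads,'' and the $d=3$ case (your invocation of the $117$-vector set or the frame-function argument) is subsumed rather than used. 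Streamlining accordingly, your proof becomes: the matching argument for $d=2$, plus TIFS existence and the Fig.~\ref{fig1} construction for every $d\ge 3$, which is arguably cleaner than citing \cite{KS67} and is self-contained modulo the gadget existence result.
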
 

\begin{theorem} \cite{CEG96,XCG20}
	The KS set of minimal cardinality has $18$ rank-one projectors in $d=4$.
\end{theorem}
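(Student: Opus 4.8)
The plan is to separate the claim into a constructive upper bound---an explicit $18$-element KS set in $d=4$---and a matching lower bound ruling out KS sets of cardinality at most $17$ in every dimension $d\ge 3$.

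For the upper bound I would display the Cabello--Estebaranz--Garc\'ia-Alcaine set of $18$ rays in $\mathbb{R}^4\subset\mathbb{C}^4$, arranged into $9$ orthogonal bases of size $4$ in which every ray lies in exactly two bases. Non-existence of a KS assignment then follows from a parity count: such an assignment forces exactly one value $1$ per basis, so the total over the $9$ bases is odd; but because each ray belongs to exactly two bases, the same total counts every assigned $1$ twice and is therefore even, a contradiction. Hence the set is a KS set of cardinality $18$.

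For the lower bound I would first reduce to a purely combinatorial question. In dimension $d$ any $d$ mutually orthogonal rank-one projectors sum to the identity, so a $d$-clique of the orthogonality graph $G$ is precisely a context, and a KS assignment is a labeling of $V(G)$ by $\{0,1\}$ with no two adjacent vertices both $1$ and at least one $1$ in every $d$-clique (equivalently, exactly one). Thus being a KS set is a property of $(G,d)$ alone, and it suffices to show that no graph on at most $17$ vertices that is orthogonally representable in some $\mathbb{C}^d$ is KS-noncolorable. To render this finite I would bound the admissible dimensions: Theorems~\ref{t6} and~\ref{t6b} require $\chi(G)>d$ and $\chi_f(G)>d$, and the cardinality of a KS set grows with $d$, so only small $d$ can possibly produce fewer than $18$ rays; the case $d=3$ is excluded because its minimal KS set is known to be far larger than $18$ (the smallest known having $31$ rays), and $d\ge 5$ by the same growth, leaving $d=4$ as the decisive case.

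The main obstacle is exactly this $d=4$ lower bound. I would settle it by an exhaustive computer-assisted search: enumerate candidate orthogonality graphs on $n\le 17$ vertices up to isomorphism, discard those with no $4$-clique (trivially colorable) and those violating the chromatic conditions, test the survivors for orthogonal representability in $\mathbb{C}^4$ (a feasibility problem for a system of polynomial orthogonality constraints), and decide KS-colorability of each representable graph with a SAT or constraint solver. If every representable graph on at most $17$ vertices is colorable, then no KS set of size $\le 17$ exists in $d=4$, completing the proof. The delicate points are keeping the enumeration tractable---pruning with graph-isomorphism reduction and with the overlapping-clique structure forced by realizability---and treating realizability correctly, since a combinatorially noncolorable graph matters only if it is actually realizable, while for the bound it is enough to verify that every realizable small graph is colorable.
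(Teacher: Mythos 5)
Your two-part architecture (explicit $18$-vector set plus parity argument for the upper bound; combinatorial reduction plus computer-assisted exhaustion for the lower bound) is essentially the one behind the result as the paper uses it: the paper itself gives no proof but imports the theorem from \cite{CEG96} (the construction of nine bases of four vectors each, every vector in exactly two bases, and exactly the odd-versus-even counting you describe) and from \cite{XCG20} (the minimality, i.e., the Peres conjecture, proved by a graph-theoretic reduction combined with exhaustive search). Your upper-bound half is correct and complete, and your reduction of ``KS set'' to a property of the orthogonality graph together with faithful realizability is sound, including your correct observation that only realizable noncolorable graphs matter.

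The genuine gap is in how you dispose of dimensions other than $d=4$. The chromatic-number conditions (Theorems~\ref{t6} and~\ref{t6b}) only force $n \ge \chi(\mathcal G) \ge d+1$ vertices, so for $n \le 17$ they eliminate nothing below $d=16$; your phrase ``the cardinality of a KS set grows with $d$'' is, at this level of argument, just $n \ge d+1$, and the dismissal of all $d \ge 5$ ``by the same growth'' has no argument behind it. Likewise, for $d=3$ you appeal to the smallest \emph{known} set (31 rays), which is an upper bound on the minimum and rules out nothing; the correct tool is the lower bound of \cite{PMMM05} (any KS set in $d=3$ has at least $22$ rays, quoted as a theorem in Appendix~\ref{sec2}). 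To make the dimension reduction rigorous you need the gadget machinery the paper collects: every KS set contains a 01-gadget/TIFS \cite{RRHPHH20}, and the minimal TIFS in dimension $d$ has $d+5$ rank-one projectors (Theorem~\ref{t10m}, \cite{CPSS18}), so a KS set with at most $17$ vectors forces $d \le 12$; the remaining dimensions $5 \le d \le 12$ still demand dedicated arguments or searches. That, together with taming the enumeration (all graphs on $\le 17$ vertices up to isomorphism number well beyond $10^{26}$, so one must enumerate only the clique/basis structures forced by realizability), is precisely where the bulk of the work in \cite{XCG20} lies; as written, your lower bound covers only $d=4$ modulo a feasible search, covers $d=3$ for an invalid reason, and leaves $5 \le d \le 16$ entirely open.
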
 

\begin{remark} \cite{Peres93,CK06,CK09}
	The KS set of minimal cardinality in $d=3$ known has $31$ rank-one projectors. 
\end{remark}

\begin{theorem} \cite{PMMM05}
	The KS set of minimal cardinality in $d=3$ must have $22$ or more rank-one projectors. 
\end{theorem}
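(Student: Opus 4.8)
The plan is to reduce the statement to an exhaustive but heavily pruned search over orthogonality structures, since a sharp lower bound of exactly $22$ is not expected to follow from a purely analytic argument. First I would encode any KS set $S$ in $d=3$ by its orthogonality hypergraph $H$: the vertices are the rays (rank-one projectors) and the hyperedges are the orthogonal triads of $S$ that sum to the identity, which are the only complete contexts available in $d=3$. A KS assignment is then precisely a choice of one vertex per hyperedge marked $1$ such that no two $1$-vertices are adjacent in the orthogonality graph $G$, so ``being a KS set'' becomes the purely combinatorial condition that $H$ admits no such marking. I would then restrict attention to \emph{critical} KS sets, since a minimal KS set is necessarily critical: removing any ray restores colorability. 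Criticality forces every ray to lie in at least two triads (a ray in at most one triad can always be consistently assigned), which bounds the number of hyperedges from below in terms of the number $n$ of vertices.

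Second, I would extract the structural necessary conditions that drive the pruning. From the triad structure one gets directly that $\chi(G)\ge 4$: if $G$ admitted a proper $3$-coloring into independent sets $I_1,I_2,I_3$, then every triad, being a triangle in $G$, would receive all three colors, so $I_1$ would contain exactly one ray per triad and be independent, i.e.\ a KS assignment; this is the $d=3$ instance of Theorem~\ref{t6}, and the sharper fractional condition $\chi_f(G)>3$ of Theorem~\ref{t6b} prunes further. Geometric realizability in $\mathbb{C}^3$ supplies additional constraints: each ray $v$ has a two-dimensional orthocomplement, and within the plane $v^\perp\cong\mathbb{C}^2$ every ray has a \emph{unique} orthogonal partner, so the triads through $v$ match the neighbors of $v$ pairwise. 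These ``link'' conditions, together with $\chi(G)\ge 4$ and criticality, cut the admissible hypergraphs down drastically.

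Third, I would generate, up to isomorphism, every orthogonality hypergraph on $n\le 21$ vertices meeting the combinatorial conditions above, discard those that admit a KS marking, and submit each surviving non-colorable candidate to a realizability test: assign unit vectors to the vertices and decide whether the system of inner-product equations (orthogonality for every edge, each triad a basis of $\mathbb{C}^3$) has a solution. The claim then reduces to showing that no hypergraph with $n<22$ is simultaneously non-colorable \emph{and} realizable. The main obstacle is twofold: the combinatorial explosion of candidate hypergraphs, and the realizability decision itself, which is a question in complex (semi)algebraic geometry requiring, e.g., Gr\"obner-basis or resultant computations to \emph{certify} non-realizability rather than merely fail to find vectors. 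Making this feasible hinges entirely on the step-two pruning and on propagating ``forced'' orthogonalities (once enough edges are fixed, further orthogonalities become geometrically implied, frequently producing a contradiction), so that the residual search becomes a verified, exhaustive computer check rather than an intractable brute force.
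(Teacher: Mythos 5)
Two things need saying here. First, the paper does not actually prove this statement: it is imported verbatim from the cited reference [PMMM05], whose argument is precisely the kind you outline --- an exhaustive, computer-aided generation of orthogonality hypergraphs (MMP diagrams) combined with algebraic tests of realizability in dimension~$3$. So your overall strategy (reduce to hypergraph enumeration over critical candidates, prune with $\chi(G)>3$ and $\chi_f(G)>3$ as in Theorems~\ref{t6} and~\ref{t6b}, then certify non-realizability of the non-colorable survivors) is essentially the method by which this bound is in fact established, and your observation that the triads through a ray $v$ pair up its neighbors inside $v^{\perp}\cong\mathbb{C}^2$ is a genuine constraint used in such searches.

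However, one of your pruning rules has a real soundness gap, and in a proof whose entire content is exhaustiveness, an unsound pruning rule is fatal. You claim that criticality forces every ray to lie in at least two triads, ``since a ray in at most one triad can always be consistently assigned.'' This is not justified under the definition of a KS assignment used here (Definition~\ref{kss}): condition~(I) constrains \emph{all} orthogonal pairs, not only pairs lying inside complete triads of the set. If $v$ lies in the single triad $\{v,a,b\}$ but is also orthogonal to some $w\in S$ with which it forms no triad in $S$, then every KS assignment $f$ of $S\setminus\{v\}$ may simultaneously force $f(a)=f(b)=0$ and $f(w)=1$; in that case $v$ can be assigned neither $0$ (violating condition~(II) for its triad) nor $1$ (violating condition~(I) against $w$), so $S$ is a KS set in which $v$ has triad-degree one, and your search would never generate its hypergraph. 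Your argument is valid only under the weaker convention in which condition~(I) is restricted to pairs inside complete bases, which is not the convention of this paper, and the minimal cardinalities under the two conventions need not coincide. You must either prove the degree-$\ge 2$ lemma for the present definition (tracking external orthogonalities explicitly) or drop it and enumerate the larger class; everything else in your plan survives, but as written the claimed exhaustive check does not cover all candidate KS sets with fewer than $22$ rays.
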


\begin{theorem}\label{t12} \cite{BBCP09}
	Every KS set is a SI-C set.
\end{theorem}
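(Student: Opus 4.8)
The plan is to reduce noncontextuality of the behavior to a combinatorial value assignment and then invoke the nonexistence of a KS assignment. First I would recall that, by definition, a behavior is noncontextual exactly when the distributions of all contexts are the marginals of a single global probability distribution over joint value assignments to the projectors of the KS set $S$. Decomposing this global distribution into its support shows that every noncontextual behavior is a convex mixture of \emph{deterministic} assignments $v\colon S \to \{0,1\}$, each of which must be consistent within every context.

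The key step is to identify these deterministic assignments with KS assignments. I would fix a context, i.e., a maximal set of mutually orthogonal rank-one projectors; in dimension $d$ such a set consists of $d$ projectors that sum to the identity and hence form a complete basis. An ideal measurement of a complete basis yields exactly one outcome, so any deterministic, context-independent value assignment must set exactly one of these $d$ projectors to $1$. This is precisely condition (II) of a KS assignment, and it immediately entails condition (I), that two orthogonal projectors are never both assigned $1$. Conversely, every $\{0,1\}$ assignment satisfying (I) and (II) defines a consistent deterministic noncontextual model. Thus the extreme points of the noncontextual polytope are exactly the KS assignments of $S$.

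The argument then closes quickly. Because $S$ is a KS set it admits no KS assignment, so the noncontextual polytope has no extreme points and, being the convex hull of finitely many deterministic assignments, is therefore empty: \emph{no} behavior on this scenario is noncontextual. On the other hand, for any state $\rho$ the quantum rule $P(\Pi_i=1)=\tr(\rho\,\Pi_i)$, supplemented by the joint probabilities of commuting (orthogonal) projectors, defines a legitimate nondisturbing behavior, since commuting projectors yield context-independent marginals. As this behavior cannot lie in an empty polytope, it is contextual for every $\rho$, which is exactly the assertion that $S$ is a SI-C set.

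I expect the middle step to be the main obstacle, specifically the rigorous justification of the completeness requirement (II). It rests on the facts that every maximal clique in the orthogonality graph of rank-one projectors in dimension $d$ has size $d$ and sums to the identity, and that the context-independence of $v$ must hold consistently across overlapping contexts. Pinning down these complete contexts inside $S$, and verifying that ``exactly one outcome per complete measurement'' is compatible with a single global $v$, is where the reasoning needs the most care.
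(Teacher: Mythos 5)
The paper itself gives no proof of this theorem; it is stated with a citation to \cite{BBCP09}, where the result is established by constructing an explicit correlation inequality violated by every quantum state. Judged on its own, your proposal contains a genuine error in the step you yourself flagged as delicate. The claim that ``the noncontextual polytope has no extreme points and \ldots is therefore empty: no behavior on this scenario is noncontextual'' is false in the framework the paper uses. A noncontextual behavior is any family of context distributions arising as marginals of a global distribution over assignments $v\colon S\to\{0,1\}$, and nothing in that definition forces the outcomes of a context to be exhaustive: the assignment $v\equiv 0$, in which every projector always yields outcome $0$, defines a perfectly consistent, nondisturbing, noncontextual behavior for \emph{any} set of projectors, KS or not. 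So the noncontextual polytope is never empty, and its extreme points are \emph{not} in bijection with KS assignments. Condition (II) of Definition~\ref{kss} is not a constraint on noncontextual models; it is a constraint that the \emph{quantum} behavior happens to satisfy with probability one. (Your identification would only be valid if one builds exclusivity and completeness of each basis into the very definition of the scenario, as in hypergraph-based frameworks; that is not the convention here, nor the one your own setup uses.) A secondary inaccuracy: maximal cliques of the orthogonality graph of $S$ need not have $d$ elements or sum to the identity; condition (II) quantifies only over those subsets of $S$ that do sum to the identity, and a KS set must contain at least one such basis, since otherwise $v\equiv 0$ would vacuously be a KS assignment.

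The argument can be repaired by anchoring the determinism constraints to the behavior being reproduced rather than to the polytope. Fix an arbitrary state $\rho$ and suppose its behavior on $S$ is noncontextual, i.e., admits a global distribution $\mu$. For every pair of orthogonal projectors, quantum theory gives $P(\Pi_i=1,\Pi_j=1)=0$, and for every basis $B\subseteq S$ the quantum probability that exactly one projector in $B$ yields outcome $1$ equals $\sum_{\Pi\in B}\tr(\rho\,\Pi)=1$. Each of these finitely many constraints forces a set of assignments of full $\mu$-measure; intersecting them, $\mu$ is supported on assignments satisfying (I) and (II), i.e., on KS assignments. Since $S$ is a KS set, that support is empty, contradicting the normalization of $\mu$. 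Hence the behavior of every state $\rho$ is contextual, which is the statement that $S$ is a SI-C set. Note that even this corrected argument is weaker than the cited one: it places the quantum behavior outside the noncontextual polytope but does not by itself yield a noise-robust inequality with a state-independent gap, which is what \cite{BBCP09} and Theorem~\ref{t66} provide and what the paper's Step~2 actually requires.
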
 

\begin{theorem} \cite{YO12,BBC12,XCS15}
	Not every SI-C set of rank-one projectors is a KS set. 
\end{theorem}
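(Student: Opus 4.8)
The plan is to prove the statement constructively, by exhibiting a single set of rank-one projectors that is a SI-C set and yet admits a Kochen--Specker assignment in the sense of Definition~\ref{kss}. The natural witness is exactly the set $S''$ already used in the main text, namely the Yu--Oh set of $13$ rays in dimension $d=3$ defined by Eqs.~(\ref{set1})--(\ref{set2}), whose orthogonality graph ${\cal G}$ is Fig.~\ref{fig2}(b). It then suffices to verify the two properties separately for this single set.

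First I would certify that the Yu--Oh set is SI-C using the necessary and sufficient condition of Theorem~\ref{t66}. The weights of Step~2 do the work: with $w_i=2$ for $i\in\{3,7,12,13\}$ and $w_i=3$ otherwise one has $\sum_i w_i\,\Pi_i=\lambda\,\openone$ with $\lambda=\tfrac{35}{3}$, since the weighted sum of these $13$ projectors is proportional to the identity. Rescaling to $w'_i=w_i/\lambda$ gives $\sum_i w'_i\,\Pi_i=\openone\ge\openone$, while $\alpha({\cal G},w)=11$, so every independent set ${\cal I}$ obeys $\sum_{j\in{\cal I}}w'_j\le 11/\lambda=\tfrac{33}{35}=:y<1$. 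Theorem~\ref{t66} then yields that the set is SI-C. Equivalently, since the edge terms of (\ref{SI-C}) vanish for orthogonal projectors, the left-hand side of (\ref{SI-C}) equals $\tfrac{35}{3}=11+\tfrac23>\alpha({\cal G},w)=11$ on every state, so no state is noncontextual.

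The heart of the argument is to show the same set is \emph{not} a KS set, i.e.\ that it \emph{does} admit a KS assignment. In $d=3$ a maximal mutually orthogonal subset of $S''$ summing to $\openone$ is precisely a triangle of ${\cal G}$, so conditions (I)--(II) of Definition~\ref{kss} reduce to asking for an independent set of ${\cal G}$ meeting every triangle in exactly one vertex. I would first read off the triangles from the inner products in Eqs.~(\ref{set1})--(\ref{set2}); a direct check gives exactly four complete bases, $\{\Pi_1,\Pi_2,\Pi_6\}$, $\{\Pi_1,\Pi_5,\Pi_{11}\}$, $\{\Pi_4,\Pi_5,\Pi_8\}$, $\{\Pi_9,\Pi_{10},\Pi_{11}\}$, whereas $\Pi_3,\Pi_7,\Pi_{12},\Pi_{13}$ lie in no triangle and so are unconstrained by (II). I would then exhibit the assignment $f(\Pi_1)=f(\Pi_4)=f(\Pi_{10})=1$ and $f=0$ otherwise, and verify that $\{\Pi_1,\Pi_4,\Pi_{10}\}$ is an independent set hitting each of the four triangles exactly once; by Definition~\ref{kss} this is a valid KS assignment, so $S''$ is not a KS set, which together with the previous paragraph proves the theorem.

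The main obstacle is conceptual rather than computational: one must keep sharply apart the two inequivalent notions of ``assignment'' in play. Being SI-C forbids any noncontextual probabilistic (value) model reproducing the quantum predictions --- this is what Theorem~\ref{t66} certifies through the strict gap $\lambda>\alpha({\cal G},w)$ --- whereas being a KS set forbids a purely combinatorial $\{0,1\}$ assignment obeying only the orthogonality/completeness rules of Definition~\ref{kss}. The theorem is exactly the statement that these two can disagree, and the subtle point is that the existence of the KS assignment $\{\Pi_1,\Pi_4,\Pi_{10}\}$ in no way contradicts state-independent contextuality (indeed, it is compatible with the necessary conditions $\chi({\cal G})>d$ and $\chi_f({\cal G})>d$ of Theorems~\ref{t6} and~\ref{t6b}). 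The only remaining care is bookkeeping: I must make sure the triangle enumeration is exhaustive, so that condition (II) is imposed on every basis actually present in $S''$ and on no spurious one, since a single overlooked triangle could break the proposed assignment.
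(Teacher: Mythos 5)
Your proposal is correct and takes essentially the same route as the paper: the paper's proof is a one-line citation that the SI-C sets of \cite{YO12,BBC12,XCS15} are not KS sets, and you use precisely the first of these witnesses, the Yu--Oh set. The only difference is that you make the verification self-contained --- certifying SI-C via the weights $w$ and Theorem~\ref{t66} (indeed $\sum_i w_i\Pi_i=\tfrac{35}{3}\openone$ with $\alpha({\cal G},w)=11<\tfrac{35}{3}$), and exhibiting the valid KS assignment $\{\Pi_1,\Pi_4,\Pi_{10}\}$ hitting each of the four bases $\{\Pi_1,\Pi_2,\Pi_6\}$, $\{\Pi_1,\Pi_5,\Pi_{11}\}$, $\{\Pi_4,\Pi_5,\Pi_8\}$, $\{\Pi_9,\Pi_{10},\Pi_{11}\}$ exactly once --- details the paper delegates to the cited references.
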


\begin{proof}
	The SI-C sets in \cite{YO12,BBC12,XCS15} are not KS sets.
\end{proof}

\begin{Definition} \cite{ZP93}
	A KS set $S$ is {\em critical} if by removing any element of $S$ the resulting set is not a KS set.
\end{Definition}

\begin{remark} \cite{Pavicic17}
	The original KS set \cite{KS67} is critical.
\end{remark}

\begin{Definition}
	A SI-C set $S$ is {\em critical} if by removing any element of $S$ the resulting set is not a SI-C set.
\end{Definition}

\begin{remark} \cite{YO12}
	\label{t10f}
	Not every critical KS set is a critical SI-C set.
\end{remark}

\begin{proof}
	The critical KS sets of Conway and Kochen \cite{Peres93,CK06,CK09} (with $31$~rank-one projectors), Sch\"utte and Bub \cite{Bub96,Bub97} (with $33$~rank-one projectors), and Peres \cite{Peres91} (with $33$~rank-one projectors and isomorphic \cite{GA10} to the one proposed by Penrose \cite{Penrose00}), all contain the $13$~rank-one projectors of the set of Yu and Oh \cite{YO12}, which is a critical SI-C set.
\end{proof}

\begin{remark} \cite{X21}
	The critical KS set with $18$ projectors in $d=4$ \cite{CEG96} is a critical SI-C set.
\end{remark}

\begin{theorem} \cite{CKP16}
	The SI-C set of rank-one projectors of minimal cardinality is the set of $13$ rank-one projectors in $d=3$ of Yu and Oh \cite{YO12}.
\end{theorem}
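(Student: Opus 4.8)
The plan is to prove the statement in two halves---an upper bound (the $13$-ray Yu--Oh set is SI-C) and a matching lower bound (no SI-C set of rank-one projectors, in any dimension, has fewer than $13$ elements)---and then to record that $13$ rays force the Yu--Oh configuration. The upper bound is immediate from Theorem~\ref{t66}: exhibiting nonnegative weights $w$ and a number $0 \le y < 1$ with $\sum_{j \in \mathcal I} w_j \le y$ for every independent set $\mathcal I$ of the orthogonality graph and $\sum_i w_i \Pi_i \ge \openone$ certifies that the Yu--Oh set is a SI-C set. The weights displayed in Fig.~\ref{fig2}(c), suitably rescaled, furnish such a certificate, so the minimal cardinality is at most $13$.

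The core of the argument is the lower bound. Suppose, for contradiction, that a SI-C set $S$ of rank-one projectors with orthogonality graph $\mathcal G$ and $n \le 12$ exists in some dimension $d$. By Theorem~\ref{bks} we have $d \ge 3$, and by Theorem~\ref{t6b} the fractional chromatic number obeys $\chi_f(\mathcal G) > d$; since trivially $\chi_f(\mathcal G) \le n \le 12$, this already caps the dimension, $3 \le d \le 11$. Thus only finitely many dimensions, and within each only finitely many graphs, can occur. I would then, for every such $d$, enumerate the graphs $\mathcal G$ on at most $12$ vertices passing the necessary tests $\chi_f(\mathcal G) > d$ (Theorem~\ref{t6b}, which already subsumes $\chi(\mathcal G)>d$ of Theorem~\ref{t6}), retain only those admitting a faithful orthogonal representation by distinct unit vectors in $\mathbb C^d$ (whose cliques have size at most $d$), and for each such realization test the exact SI-C criterion of Theorem~\ref{t66} as a feasibility problem: search for $w \ge 0$ and $y < 1$ with $\sum_{j \in \mathcal I} w_j \le y$ on every independent set and $\sum_i w_i \Pi_i \ge \openone$. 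Infeasibility across all cases yields the contradiction, so $n \ge 13$.

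The decisive obstacle is this finite-but-large elimination, for two reasons. First, the realizability filter is genuinely geometric rather than combinatorial: deciding which abstract graphs admit a faithful orthogonal representation in $\mathbb C^d$ is a question about a real algebraic variety, and the constraint is subtle yet powerful---for instance, in $d=3$ any vertex orthogonal to two others forces those two into a common two-dimensional subspace, where orthogonality reduces to a perfect matching, so that many $4$-chromatic candidates (e.g.\ the wheel $W_5$, which satisfies both $\chi > 3$ and $\chi_f = 7/2 > 3$) are nonetheless not realizable. Second, one must solve the Theorem~\ref{t66} program over a continuum of representations for each admissible graph; in practice this step is carried out by a computer-assisted exhaustive search. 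A final, independent step verifies that the unique $13$-vertex solution coincides, up to unitary equivalence and relabeling, with the Yu--Oh set, completing the identification.
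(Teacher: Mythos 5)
The paper never proves this theorem: it is stated with a citation to Ref.~\cite{CKP16} and no proof is given, so your proposal can only be judged against the proof in that reference. The parts of your outline that can be checked are sound and do match its broad strategy. The upper bound is correct: with the weights of Fig.~\ref{fig2}(c) one gets $\sum_i w_i \Pi_i = \tfrac{35}{3}\openone$ and $\alpha({\cal G},w)=11$, so rescaling by $\tfrac{3}{35}$ satisfies Theorem~\ref{t66} with $y=\tfrac{33}{35}<1$. The dimension cap $3\le d\le 11$ obtained from $d<\chi_f({\cal G})\le n\le 12$ is also correct, and reducing the lower bound to graph-theoretic filters (Theorems~\ref{t6} and~\ref{t6b}), a realizability filter, and the exact criterion of Theorem~\ref{t66} is indeed the shape of the argument in Ref.~\cite{CKP16}.

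The genuine gap is the step you yourself flag as decisive. The criterion of Theorem~\ref{t66} is not a property of the graph alone: the constraint $\sum_i w_i \Pi_i \ge \openone$ depends on the projectors, so to exclude a candidate graph ${\cal G}$ you must prove that \emph{no} point of the semialgebraic variety of faithful orthogonal representations of ${\cal G}$ in $\mathbb{C}^d$ admits feasible weights---a statement universally quantified over a continuum. A ``computer-assisted exhaustive search'' can only test finitely many realizations, so it can never certify this, and your proof therefore does not close as stated. (The statement is decidable in principle by quantifier elimination over the reals, but that is not what you propose and is hopeless at this problem size.) What is actually needed---and what constitutes the real content of Ref.~\cite{CKP16}---is problem-specific structure: realization-independent necessary conditions and structural lemmas, for instance that in $d=3$ two distinct rays have at most one common orthogonal ray, so orthogonality graphs contain no $C_4$ subgraph and every vertex neighborhood induces a disjoint union of edges and isolated vertices; without such lemmas even the bare enumeration (roughly $1.6\times 10^{11}$ graphs on $12$ vertices, to be examined for each admissible $d$) is out of reach. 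Finally, the uniqueness claim---that any $13$-ray SI-C set is, up to unitaries and relabeling, the Yu--Oh set---is asserted as ``a final, independent step'' with no argument, although it requires the same kind of analysis over the strictly larger class of $13$-vertex graphs.
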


\begin{theorem} \cite{Toh13a,Toh13b}
	There are SI-C sets of projectors not all of them of rank one.
\end{theorem}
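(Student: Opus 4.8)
The plan is to prove the statement constructively, by exhibiting an explicit set $S$ of projectors, at least one of rank greater than one, and showing that the behavior it produces is contextual for every initial state $\rho$. Since the graph-theoretic criteria available to us (Theorems~\ref{t6}, \ref{t6b}, and \ref{t66}) are stated only for rank-one projectors, I would not try to apply them directly; instead I would exhibit a state-independent \emph{algebraic} obstruction to any noncontextual value assignment.

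The most economical candidate is the Peres--Mermin square in dimension $d=4$. Its nine dichotomic observables are the two-qubit Pauli products $\sigma_x\otimes\openone$, $\openone\otimes\sigma_x$, $\sigma_x\otimes\sigma_x$, $\openone\otimes\sigma_y$, $\sigma_y\otimes\openone$, $\sigma_y\otimes\sigma_y$, $\sigma_x\otimes\sigma_y$, $\sigma_y\otimes\sigma_x$, and $\sigma_z\otimes\sigma_z$, arranged in a $3\times 3$ grid so that the three entries in each row and in each column mutually commute, hence form a context. Each such observable has eigenvalues $\pm 1$ with eigenspaces of dimension two, so its spectral projectors have rank two; I would take $S$ to be the collection of all these rank-two projectors, which is therefore a set of projectors not all of rank one. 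To see that $S$ gives state-independent contextuality, note that the operator product of the three observables in each row equals $+\openone$, as does the product along the first two columns, while the product along the third column equals $-\openone$. Because each such product is a multiple of the identity, a measurement of any row or column returns that fixed product value ($+1$ or $-1$) with certainty on \emph{every} state $\rho$. In any noncontextual model each observable carries a context-independent value in $\{\pm1\}$ respecting these six product constraints; multiplying the three row constraints gives $+1$ for the product of all nine values, while multiplying the three column constraints gives $-1$ for the same product, a contradiction. Hence no global value assignment, and a fortiori no global probability distribution, can reproduce the behavior of $S$ for any $\rho$, so $S$ is a SI-C set.

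The main obstacle is precisely the mismatch between the framework and the example: the necessary-and-sufficient weight criterion and the chromatic-number bounds are formulated for rank-one sets, so the higher-rank case cannot be certified by the same bookkeeping and must instead rest on the direct parity argument above. An alternative route, closer to \cite{Toh13a,Toh13b}, is to start from a known rank-one SI-C set (such as the Yu--Oh set) and replace a mutually commuting subset spanning a common subspace by a single higher-rank projector onto that subspace; the delicate point there is to verify that this coarse-graining preserves, rather than destroys, the state-independent obstruction, which would require re-examining the weight inequalities of Theorem~\ref{t66} for the coarse-grained outcomes.
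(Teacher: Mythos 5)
Your proof is correct, but it takes a genuinely different route from the paper. The paper gives no argument at all for this theorem: it is established purely by citation to the explicit constructions of Toh \cite{Toh13a,Toh13b}, namely a three-qubit ($d=8$) set mixing $16$ rank-one with $14$ rank-two projectors, and another made of $30$ rank-two projectors. You instead build a self-contained example in $d=4$: the $18$ rank-two spectral projectors $(\openone\pm A)/2$ of the nine Peres--Mermin observables, certified state-independent by the parity argument (your row and column products are right: the three rows and the first two columns multiply to $+\openone$, the third column to $-\openone$). The argument is sound; the only step worth tightening is the ``a fortiori'' passage from deterministic assignments to global probability distributions: since the six product constraints are probability-one events in \emph{every} state, every deterministic assignment in the support of a putative global distribution would have to satisfy all six simultaneously, which the parity contradiction forbids --- that is the precise reason mixtures fail too. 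As for what each approach buys: the paper's citation ties the theorem to the specific constructions it attributes and to the surrounding literature on higher-rank KS sets, while your construction is lower-dimensional, elementary, and verifiable in a few lines; it also dovetails nicely with the paper's own remark that operator-based SI-C proofs such as \cite{Peres90,Mermin90} are customarily converted into \emph{rank-one} SI-C sets --- you observe that simply not performing that conversion, i.e., keeping the rank-two spectral projectors themselves, already witnesses the theorem. Your closing sketch about coarse-graining the Yu--Oh set is not needed given the main construction and, as you yourself note, would require nontrivial extra verification.
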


\begin{theorem} \cite{XYK21}
	For $n=1,2,3$, the SI-C set made of rank-$n$ projectors (all of them) having minimal cardinality has $13$ rank-one projectors in $d=3$.
\end{theorem}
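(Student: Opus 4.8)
The plan is to reduce the rank-$2$ and rank-$3$ cases to the rank-$1$ case, for which \cite{CKP16} already identifies the Yu-Oh set as the minimal SI-C set, with $13$ projectors in $d=3$; concretely, I read the statement as the assertion that a SI-C set all of whose projectors have rank $n\in\{1,2,3\}$ must contain at least $13$ projectors, the bound being saturated by the rank-$1$ Yu-Oh set. Two cardinality-preserving, structure-preserving operations will do most of the work. The first is \emph{complementation}, $\Pi_i\mapsto\openone-\Pi_i$: in any context the outcomes of $\openone-\Pi_i$ are the deterministic flips of those of $\Pi_i$, so a joint probability distribution for one behavior exists iff it does for the other; hence a set of rank-$n$ projectors in dimension $d$ is SI-C iff its rank-$(d-n)$ complement is. The second is \emph{decomposition}: writing $\Pi_i=\sum_{k=1}^n P_{i,k}$ with $\{P_{i,k}\}_k$ orthonormal and giving each $P_{i,k}$ the weight $w_i$, the operator inequality $\sum_i w_i\Pi_i\ge\openone$ is inherited, and since each block $\{P_{i,k}\}_k$ is a clique while cross-block non-adjacency forces the parent indices to be independent in the original graph, every independent set of the new orthogonality graph has weight $\le y<1$; by Theorem~\ref{t66} the $nm$ projectors $\{P_{i,k}\}$ then form a rank-$1$ SI-C set, so $nm\ge 13$.

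These tools settle the extreme cases. Rank-$2$ contextuality first appears at $d=3$, where two rank-$2$ projectors commute exactly when their rank-$1$ complements are orthogonal; complementation puts this class in bijection with rank-$1$ SI-C in $d=3$, giving minimum $13$, attained by the complement of the Yu-Oh set. Rank-$3$ contextuality first appears at $d=4$, where complementation sends it to rank-$1$ SI-C in $d=4$, whose cardinality is $\ge 13$ by the global minimality in \cite{CKP16}. At the other extreme, tracing $\sum_i w_i\Pi_i\ge\openone$ yields $\sum_i w_i\ge d/n$, while the fractional-coloring argument behind Theorems~\ref{t6} and~\ref{t6b} (now with $\tr\Pi_i=n$) gives $\sum_i w_i<\chi_f(\mathcal G)\le m$; hence $m>d/n$, so any hypothetical set with $m<13$ is confined to dimension $d<13n$. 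For $n\in\{2,3\}$ this leaves only finitely many pairs $(n,d)$ to rule out.

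The hard part will be exactly this intermediate range---most conspicuously rank-$2$ in $d=4$, which is self-complementary---where decomposition only yields $m\ge 13/n$, the bound $m>d/n$ only yields $m>2$, and complementation does not reduce the rank to $1$. My plan is to close it by adapting the minimality search of \cite{CKP16}: under the necessary conditions $\chi_f(\mathcal G)>d/n$ and $\chi(\mathcal G)>d/n$, enumerate candidate orthogonality graphs on fewer than $13$ vertices and test, via Theorem~\ref{t66}, whether any admits a rank-$n$ projector realization in the relevant dimension, certifying computationally that none does. For the single most delicate case $n=2$, $d=4$ I would instead use the Pl\"ucker embedding $\mathrm{Gr}(2,4)\hookrightarrow\mathbb P(\wedge^2\mathbb C^4)$: since $\langle v_1\wedge v_2,\,w_1\wedge w_2\rangle=\det[\langle v_i,w_j\rangle]$, orthogonal $2$-planes map to orthogonal lines in $\mathbb C^6$, transporting the orthogonality graph to a rank-$1$ configuration where the $13$-vertex lower bound can be brought to bear.
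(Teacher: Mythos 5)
Before anything else: the theorem you were asked to prove is not proved in this paper at all---it is quoted from \cite{XYK21}---so your attempt has to be judged on its own merits rather than against an in-paper argument. Its sound parts are the outcome-flip (complementation) lemma, which correctly settles rank $2$ in $d=3$ and rank $3$ in $d=4$ by reduction to the rank-one theorem of \cite{CKP16}, and the rank-one case itself. The fatal gap is that your other two tools, decomposition and the dimension bound $m>d/n$, both start from weights $w_i$ satisfying $\sum_i w_i \Pi_i \ge \openone$ together with the independent-set condition \emph{for the rank-$n$ set}; Theorem~\ref{t66} guarantees such weights only for rank-one sets, and for $n\ge 2$ they need not exist. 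Concretely, take the $18$-projector KS set of \cite{CEG96} in $d=4$ and complement it: by your own first lemma the resulting $18$ rank-$3$ projectors form a SI-C set, yet no two rank-$3$ projectors in $d=4$ can be orthogonal, so the orthogonality graph of this set is empty, every subset of it is independent, and your weight conditions demand simultaneously $\sum_i w_i \le y < 1$ and $\tr\bigl(\sum_i w_i \Pi_i\bigr) = 3\sum_i w_i \ge 4$, which is impossible. For $n\ge 2$, SI-C is simply not governed by the orthogonality graph (commuting, non-orthogonal projectors create contexts the graph cannot see); supplying the correct generalization of Theorem~\ref{t66} is precisely the technical core of \cite{XYK21}, and you have assumed it implicitly. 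With it gone, the bounds $m\ge 13/n$ and $d<13n$ are unsupported, so rank $2$ for every $d\ge 4$ and rank $3$ for every $d\ge 5$ remain open---over infinitely many dimensions, not ``finitely many pairs.''

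Two further problems would remain even if that were repaired. First, for the surviving cases your text offers a plan (``enumerate candidate orthogonality graphs \ldots\ certifying computationally'') rather than a proof; certifying that no graph on at most $12$ vertices admits a rank-$n$ realization obeying the operator inequality is a nonconvex feasibility problem, and carrying out such searches is exactly what makes \cite{CKP16} and \cite{XYK21} substantial papers. Second, the Pl\"ucker-embedding shortcut for $n=2$, $d=4$ fails twice over: $\det[\langle v_i,w_j\rangle]$ vanishes for many \emph{non}-orthogonal pairs of $2$-planes (e.g.\ $\mathrm{span}\{e_1,e_2\}$ and $\mathrm{span}\{\cos\theta\, e_2+\sin\theta\, e_3,\; e_4\}$), so the induced rays in $\mathbb{C}^6$ acquire spurious orthogonalities; and, more fundamentally, the $13$-ray bound of \cite{CKP16} applies to sets of rays that are themselves SI-C, while nothing in your construction shows that the Pl\"ucker image is SI-C in dimension $6$---SI-C is not a property of the orthogonality graph alone, and the operator condition is not transported by the map $\Pi \mapsto \Lambda^2(\Pi)$.
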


\begin{remark}
	To every SI-C set of rank-$n$ projectors one can associate a SI-C set of rank-one projectors.
\end{remark}

\begin{remark} \cite{Peres91,KP95}
	Every proof of state-independent contextuality made of self-adjoint operators which are not rank-$n$ projectors (e.g., \cite{Peres90,Mermin90}) has associated a SI-C set of rank-one projectors. 
\end{remark}

\begin{Definition}
	A set of projectors $S=\{\Pi_A=|\psi_A\rangle \langle \psi_A| \} \cup \{\Pi_i,\ldots,\Pi_n\} \cup \{\Pi_B =|\psi_B\rangle \langle \psi_B|\}$ in dimension $d \ge 3$, where $\Pi_A$ and $\Pi_B$ are nonorthogonal and $\{\Pi_i,\ldots,\Pi_n\}$ are not necessarily rank-one projectors, is a {\em true-implies-false set (TIFS)} if, for any KS assignment~$f$, $f(\Pi_A) + f(\Pi_B) \le 1$. Therefore, $f(\Pi_A)=1$ implies $f(\Pi_B) = 0$, and $f(\Pi_B)=1$ implies $f(\Pi_A) = 0$.
\end{Definition}

The name TIFS is taken from \cite{CPSS18}. TIFSs made of rank-one projectors are called {\em definite prediction sets} in \cite{CG96}, {\em $01$-gadgets} in \cite{RRHPHH20}, and {\em Hardy-like proofs} in \cite{XCG20} (after the observation \cite{CEG96} that Hardy's proof of Bell nonlocality \cite{Hardy93} can be represented by a TIFS). 

\begin{theorem} \cite{CPSS18}
	The TIFS with the smallest number of rank-one projectors is the TIFS in dimension $3$ introduced in \cite{KS65} and shown in Figs.~\ref{specker} and \ref{kochen}.
\end{theorem}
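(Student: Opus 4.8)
The plan is to prove the claim in two parts: an upper bound establishing that the construction of \cite{KS65} is a genuine TIFS made of rank-one projectors, and a matching lower bound showing that no TIFS with fewer rank-one projectors exists in any dimension $d \ge 3$.

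For the upper bound I would first write down explicit rays in $\mathbb{R}^3$ realizing the orthogonality graph depicted in Figs.~\ref{specker} and \ref{kochen}, and verify, by computing pairwise inner products (equivalently, by inspecting the Gram matrix), that adjacent vertices correspond to orthogonal projectors, that $\Pi_A$ and $\Pi_B$ are nonorthogonal, and that every triangle representing a complete basis consists of three mutually orthogonal rays spanning $\mathbb{C}^3$. I would then establish the true-implies-false property by constraint propagation rather than by enumerating all assignments: assume $f(\Pi_A)=1$, apply rule~(I) of Definition~\ref{kss} to force $f=0$ on every ray orthogonal to $\Pi_A$, then repeatedly apply rule~(II) to any basis in which two of its three rays have already been forced to $0$, forcing $f=1$ on the third. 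Following this propagation through the graph should force $f=1$ on some ray orthogonal to $\Pi_B$, whence rule~(I) gives $f(\Pi_B)=0$; this proves $f(\Pi_A)+f(\Pi_B)\le 1$ for every KS assignment, and the projector count is read off from the figures.

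For the lower bound I would argue that the minimum is attained in $d=3$ and then bound the cardinality there. The reduction to $d=3$ rests on the observation that in dimension $d$ every complete basis contains $d$ rays, so forcing a $1$ by rule~(II) first requires forcing $0$ on $d-1$ rays; larger $d$ therefore demands more, not fewer, intermediate projectors in every forcing chain, so no TIFS smaller than the $d=3$ optimum can live in higher dimension (Theorem~\ref{bks} already confines us to $d\ge 3$). Within $d=3$ I would characterize any TIFS by its orthogonality graph $G$ carrying two distinguished nonadjacent vertices $A,B$, and show that the implication $f(A)=1\Rightarrow f(B)=0$ forces the existence of a forcing chain from $A$ to a neighbor of $B$; lower-bounding the length and branching of such a chain yields a lower bound on $|V(G)|$.

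The delicate point, and the main obstacle, is that a combinatorial orthogonality graph that would support the forcing need not be faithfully realizable by actual rays in $\mathbb{C}^3$: realizability imposes positive-semidefiniteness and rank-$3$ constraints on the Gram matrix that eliminate many small candidate graphs, while conversely many realizable graphs admit a KS assignment with $f(A)=f(B)=1$ and hence fail to be TIFSs. I therefore expect the lower bound to require either an exhaustive, computer-assisted enumeration of all faithfully realizable orthogonality graphs on fewer vertices, checking that each either fails realizability in $d=3$ or admits an assignment with $f(A)=f(B)=1$, or an appeal to the classification of minimal gadgets in \cite{RRHPHH20,CPSS18}. Reconciling the purely combinatorial forcing argument with the geometric realizability constraint in dimension three is where the real work lies.
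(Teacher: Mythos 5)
First, a point of comparison: the paper does not prove this statement at all---it is imported with a citation from \cite{CPSS18}, where the result actually established is the one restated in the appendix as Theorem~\ref{t10m} (for every $d\ge 3$, the minimal TIFS of rank-one projectors has $d+5$ elements), from which the present theorem follows by taking $d=3$: eight projectors, the ``bug'' of Figs.~\ref{specker} and \ref{kochen}. So your proposal must be judged as a reconstruction of the external proof, and your upper-bound half is correct and is exactly the propagation argument sketched in the caption of Fig.~\ref{specker}.

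The genuine gaps are in the lower bound. (i)~Your dimension-reduction step is not a proof: being a TIFS is the \emph{global} non-existence of a KS assignment with $f(\Pi_A)=f(\Pi_B)=1$, and nothing in Definition~\ref{kss} guarantees that this non-existence is witnessed by a step-by-step ``forcing chain'' from $A$ to a neighbor of $B$; a candidate set could exclude such an assignment through a global contradiction that no local propagation exhibits. Hence both your claim that larger $d$ forces longer chains (and therefore larger sets) and your plan to lower-bound $|V(G)|$ by chain length and branching rest on an unproved structural assumption about how TIFSs fail to admit assignments. (ii)~The fallback you offer---appealing to the classification of minimal gadgets in \cite{RRHPHH20,CPSS18}---is circular, since \cite{CPSS18} is precisely the source of the theorem to be proved, and the $01$-gadget minimality statement of \cite{RRHPHH20} is essentially the same result. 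What remains, and what the cited proof actually carries out, is the other alternative you name: an enumeration of candidate orthogonality graphs on fewer vertices combined with realizability constraints for rays in dimension~$3$. You correctly locate where the real work lies, but the proposal does not perform it, so the minimality claim is left unestablished.
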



\begin{figure}[t!]\centering
	\includegraphics[width=.5\textwidth]{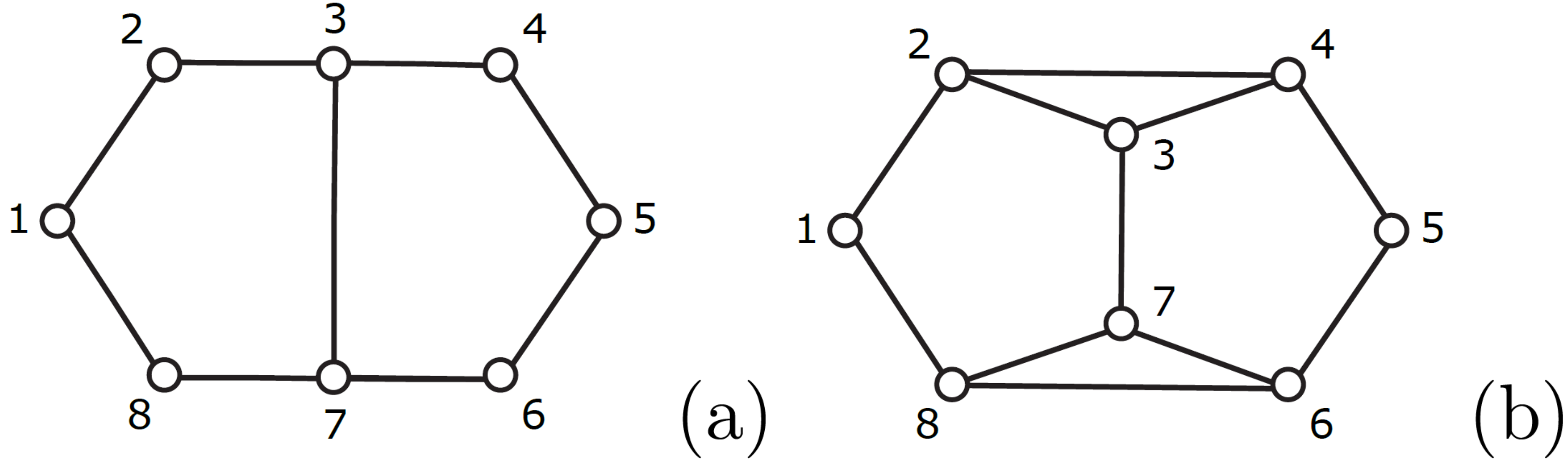}
	\caption{Two different ways to present the graph of orthogonality of the simplest TIFS in dimension~$3$. Projectors are represented by nodes. (a) Projectors in the same straight line are mutually orthogonal. (b) Projectors in the same clique (subset of vertices such that every two vertices are adjacent) are mutually orthogonal. Assuming $d=3$ and all projectors of rank-one, if $1$ is true, then, by condition~(I) in Definition~\ref{kss}, both $2$ and $8$ must be false. Then, by condition~(II) in Definition~\ref{kss}, at least one of $4$ and $6$ must be true. Then, by (II), $5$ be false. In this case $A=1$ and $B=5$. The figure is taken from \cite{BCGKL21}. \label{specker}}
\end{figure}


\begin{figure}[t!]\centering
	\includegraphics[width=.25\textwidth]{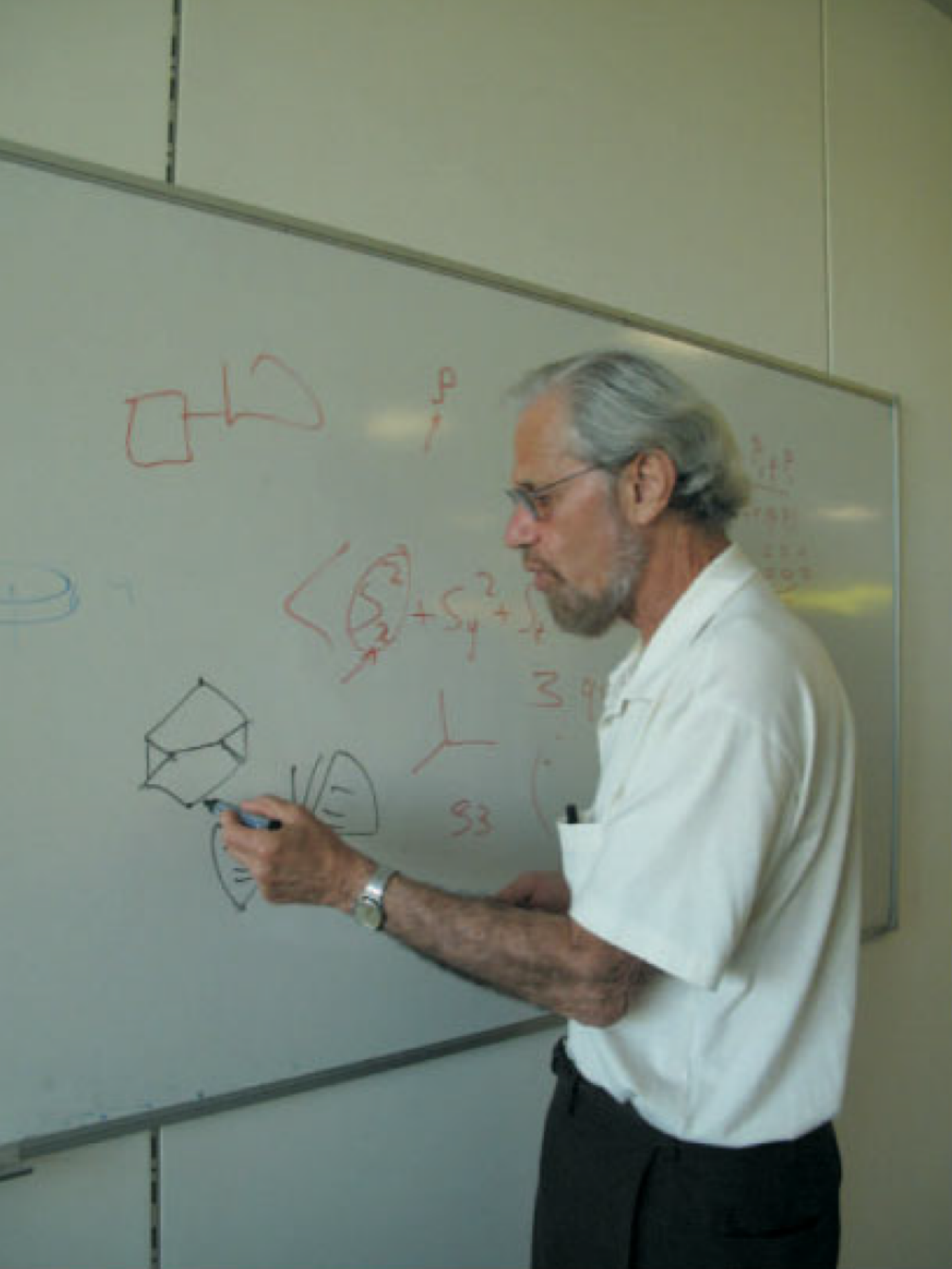}
	\caption{Simon Kochen drawing the simplest TIFS (see Fig.~\ref{specker}) in Z\"urich in 2009. Ernst Specker called this TIFS ``the bug'' \cite{CPSS18}. The photo is taken from the initial online version of \cite{CT11}. \label{kochen}}
\end{figure}


\begin{theorem} \cite{KS67}
	Given a TIFS in $d=3$, one can construct a critical KS set in dimension $d=3$ by concatenating this TIFS (i.e., by adding to this TIFS suitably rotated versions of the TIFS). See Fig.~\ref{ks117}.
\end{theorem}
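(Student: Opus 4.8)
The plan is to read the defining property of the TIFS as an \emph{effective orthogonality}. By definition every KS assignment $f$ obeys $f(\Pi_A)+f(\Pi_B)\le 1$, so $|\psi_A\rangle$ and $|\psi_B\rangle$ can never both be assigned $1$ and thus behave, for every KS assignment, like a genuinely orthogonal pair, even though the angle $\theta_0$ between them lies strictly between $0$ and $\pi/2$. The entire construction rests on this one observation: inserting a rotated copy of the given TIFS between two directions at angle $\theta_0$ imposes the constraint ``not both $1$'' on that pair, i.e.\ it adds an \emph{edge} to the orthogonality graph joining two vertices that are not genuinely orthogonal. Because the copy can be rotated rigidly about the $\Pi_A$-axis, we may place such a $\theta_0$-edge in any orientation, and this is exactly the role of the ``suitably rotated versions'' in the statement.

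With this tool the first main step is to exhibit a finite family of directions in $d=3$ whose genuine orthogonality relations alone admit a KS assignment, but for which the additional $\theta_0$-edges supplied by the inserted copies make conditions (I) and (II) of Definition~\ref{kss} jointly unsatisfiable. I would build the family by \emph{concatenating} rotated copies of the TIFS (whose basic logic is the one recalled for the bug in Fig.~\ref{specker}, where $f(\Pi_A)=1$ forces $f(\Pi_B)=0$): starting from a direction forced to carry the value $1$, the attached copies force an associated family of directions to $0$, which in turn, through orthogonal triads and condition~(II), force further directions to $1$. Chaining these forced values through enough rotated copies so that the chain closes into a loop which eventually forces two genuinely orthogonal directions to both carry $1$---contradicting (I)---produces a finite set of rank-one projectors admitting no KS assignment, i.e.\ a KS set. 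This is precisely the $117$-projector configuration of \cite{KS67} shown in Fig.~\ref{ks117}.

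To upgrade ``KS set'' to ``critical KS set'' I would check that deleting any single projector restores KS colorability. There are two kinds of projectors. Deleting one internal to a concatenated copy destroys that copy's true-implies-false link, so its effective edge disappears and the forced-value chain breaks; the remaining constraints can then be met by an explicit assignment. Deleting one of the shared apex or triad directions likewise opens the loop. In either case the reduced set admits a KS assignment, so no proper subset is a KS set, which is the definition of criticality.

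I expect the main obstacle to be the geometric realization in the second step: arranging finitely many directions in $\mathbb{R}^3$ so that the available $\theta_0$-edges---whose angle is fixed by the given TIFS, with only the freedom of rigid rotation---close into an unsatisfiable loop, and, harder still, doing so \emph{tightly} enough that the criticality check succeeds at every single vertex. The delicate part is not any one estimate but the combinatorial bookkeeping of which removals re-admit an assignment; the effective-orthogonality reduction makes the underlying logic transparent but does not by itself guarantee the minimally rigid (critical) closure that the statement demands.
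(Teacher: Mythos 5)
Your proposal takes essentially the same route as the paper: the paper's own argument (given in the caption of Fig.~\ref{ks117}) is precisely your chain of forced values---one member of an orthogonal triad must be true, each inserted rotated TIFS forces an associated direction to false, orthogonal triads then force a new direction to true, and the loop closes on two directions that cannot both be true---realized concretely by the $117$-vector construction of \cite{KS67}. Your criticality check by single-vertex deletion (each removal breaks a TIFS link or opens the loop, restoring a KS assignment) is likewise the idea the paper relies on, there asserted via citation rather than re-proved, so the two approaches coincide.
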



\begin{figure}[t!]\centering
	\includegraphics[width=.5\textwidth]{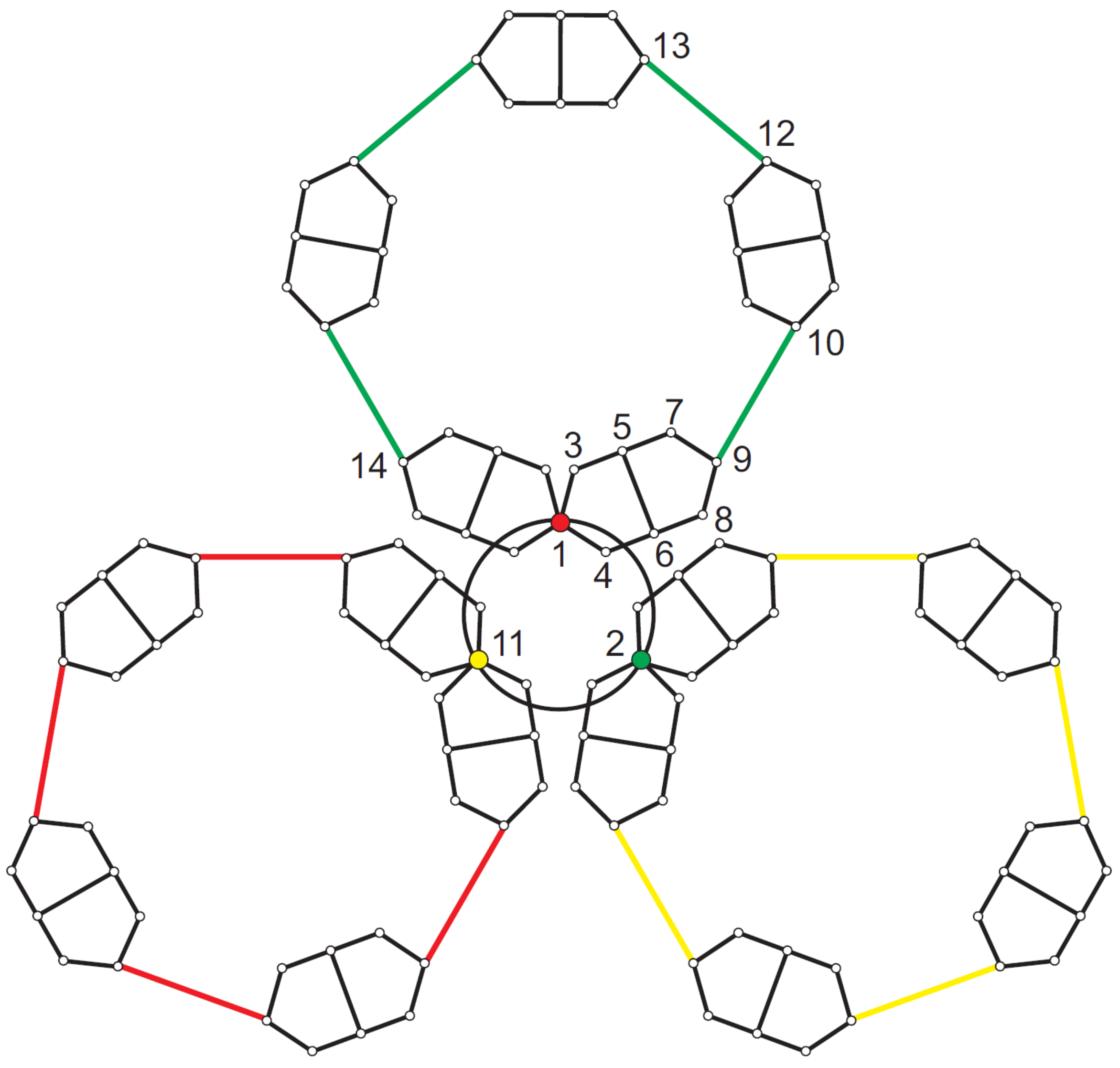}
	\caption{Graph of orthogonality of the $117$~rank-one projectors in dimension $3$ in the proof of KS \cite{KS67}. Nodes in the same straight line or circumference represent mutually orthogonal projectors. The red node is orthogonal to all nodes
		connected to the red edge. Similarly for the green and yellow
		nodes. That the set does not admit a KS assignment can be seen as follows.
		One of the nodes $1$, $2$, and $11$ has to be true. The symmetry of
		the graph allows us to assume without loss of generality that
		it is node $1$. Then, node $9$ must be false, because of the TIFS between node $1$ and node $9$ (see Fig.~\ref{specker}). Then, since
		$2$, $9$, and $10$ are mutually orthogonal and $2$ is connected to $1$,
		node $10$ must be true. Applying the same argument, $12$ must
		be false and $13$ must be true, since $\{12, 13,
		2\}$ form a basis.
		Repeating it again twice, $14$ must be true. However, $1$ and $14$
		cannot be both true. The figure is taken from \cite{BCGKL21}. \label{ks117}}
\end{figure}


\begin{remark} 
	\label{rm}
	The original KS set \cite{KS67} is a critical KS set which, for some choices of the TIFS involved in its construction, is also a critical SI-C set. This follows from the freedom for choosing the specific rank-one projectors corresponding to the $15$~TIFS in Fig.~\ref{ks117}.
\end{remark}

\begin{theorem}\label{t10k} \cite{CG96}
	For any $d \ge 3$, there is a TIFS with $d+5$ rank-one projectors.
\end{theorem}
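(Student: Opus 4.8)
The plan is to reduce the general-$d$ statement to the $d=3$ ``bug'' (the simplest TIFS, Fig.~\ref{specker}) by padding it with fixed ancillary rays. First I would fix the base case: the $8=3+5$ rank-one projectors of the bug constitute a TIFS in $d=3$ whose nonorthogonal endpoints I denote $\Pi_A$ and $\Pi_B$, exactly as recalled in the caption of Fig.~\ref{specker}. For $d>3$ I would realize the bug inside a three-dimensional subspace $V\subset\mathbb{C}^d$ and choose an orthonormal basis $\{|w_1\rangle,\dots,|w_{d-3}\rangle\}$ of the complementary subspace $V^\perp$. The proposed set is then $S=\{\text{the }8\text{ bug rays}\}\cup\{|w_1\rangle\langle w_1|,\dots,|w_{d-3}\rangle\langle w_{d-3}|\}$, which has exactly $8+(d-3)=d+5$ rank-one projectors, with $\Pi_A$ and $\Pi_B$ still nonorthogonal since they are unchanged.

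Next I would pin down the orthogonality structure of $S$. Because $V^\perp\perp V$, every ancillary ray is orthogonal to every bug ray and the ancillary rays are mutually orthogonal, so the only subsets of $S$ summing to $\id$ (i.e.\ the $d$-element bases of $\mathbb{C}^d$) are those of the form $\{\text{bug triangle}\}\cup\{|w_1\rangle,\dots,|w_{d-3}\rangle\}$, where a bug triangle is a triple of mutually orthogonal bug rays summing to the identity on $V$. Indeed, any full basis drawn from $S$ can contain at most three bug rays, as they all lie in $V$, and to reach cardinality $d$ it must then include all $d-3$ ancillary rays; fewer bug rays cannot be completed to a basis within $S$.

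The heart of the argument is the verification of the TIFS condition for an arbitrary KS assignment $f$ (Definition~\ref{kss}), which splits into two cases. If $f(|w_i\rangle\langle w_i|)=1$ for some $i$, then condition (I) of Definition~\ref{kss} together with $|w_i\rangle\perp\Pi_A$ forces $f(\Pi_A)=0$, so $f(\Pi_A)+f(\Pi_B)\le 1$ holds trivially. If instead all ancillary rays are assigned $0$, then condition (II) applied to each augmented basis forces one of the three bug rays of the corresponding triangle to be assigned $1$; this recovers precisely the $d=3$ rule (II) restricted to the bug, so the original bug argument of Fig.~\ref{specker} applies verbatim and yields $f(\Pi_A)=1\Rightarrow f(\Pi_B)=0$. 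In both cases $f(\Pi_A)+f(\Pi_B)\le 1$, which establishes that $S$ is a TIFS with $d+5$ rank-one projectors.

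I expect the main obstacle to be the bookkeeping of this case split rather than any deep computation. One must check that augmenting every bug triangle with the \emph{same} fixed ancillary rays yields genuine $d$-dimensional bases and introduces no spurious orthogonalities or additional identity-summing subsets that could impose unwanted constraints, and one must correctly recognize that a ``true'' ancillary ray does not break the construction but instead trivially forces $f(\Pi_A)=0$. The base-case bug argument itself is routine given Fig.~\ref{specker}, so the only genuinely new content is confirming that the $d=3$ rule (II) is faithfully recovered from the $d$-dimensional rule (II) exactly in the regime where it is needed.
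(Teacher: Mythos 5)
Your proof is correct: the counting $8+(d-3)=d+5$ is right, and the case split (if some ancillary ray is assigned $1$, condition~(I) forces every bug ray, in particular $\Pi_A$ and $\Pi_B$, to be $0$; if all ancillary rays are $0$, condition~(II) on the augmented bases restores the three-dimensional rule~(II) for the bug, so the argument of Fig.~\ref{specker} applies) is exactly the right argument, including the key observation that the only identity-summing subsets are bug triads completed by all $d-3$ ancillary rays. The paper itself offers no proof but cites \cite{CG96}, and the construction there is essentially this one---the Specker bug realized in a three-dimensional subspace, padded with $d-3$ mutually orthogonal rays spanning the complement---so your approach matches the cited source.
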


\begin{theorem}\label{t10m} \cite{CPSS18}
	For any $d \ge 3$, the TIFS of rank-one projectors with minimum cardinality has $d+5$ rank-one projectors.
\end{theorem}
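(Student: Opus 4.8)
The upper bound is exactly the content of Theorem~\ref{t10k}, which exhibits a TIFS with $d+5$ rank-one projectors, so the task reduces to proving the matching lower bound: every TIFS of rank-one projectors in dimension $d$ has at least $d+5$ elements. The plan is to argue entirely at the level of KS assignments (Definition~\ref{kss}), using that $S=\{\Pi_A\}\cup\{\Pi_i,\ldots,\Pi_n\}\cup\{\Pi_B\}$ is a TIFS precisely when imposing $f(\Pi_A)=1$ propagates, through the orthogonality and completeness rules, to $f(\Pi_B)=0$. The basic bookkeeping is that orthogonality can force a projector only to $0$ (being orthogonal to one already assigned $1$), whereas the value $1$ can be forced only by completeness, i.e.\ by a full basis of $d$ mutually orthogonal projectors of $S$ in which every other element is already $0$.

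First I would trace the forcing backwards from $\Pi_B$. Because $\Pi_A\not\perp\Pi_B$, the value $f(\Pi_B)=0$ cannot be forced directly by $\Pi_A$; it must be forced by orthogonality to some $\Pi_x$ with $x\neq A$ whose value $1$ is in turn forced by a completeness constraint, i.e.\ by a full basis $\mathcal{B}_1=\{\Pi_x,\Pi_{u_1},\ldots,\Pi_{u_{d-1}}\}\subseteq S$ with $f(\Pi_{u_i})=0$ for all $i$. Next I would show that a single basis cannot suffice: if every $\Pi_{u_i}$ were forced to $0$ by orthogonality to $\Pi_A$ alone, then $\Pi_A$ and $\Pi_x$ would both be orthogonal to the $(d-1)$-dimensional span of the $\Pi_{u_i}$, forcing $\Pi_A=\Pi_x$ and hence $\Pi_A\perp\Pi_B$, a contradiction. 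Therefore some $\Pi_{u_i}$ is not orthogonal to $\Pi_A$, its value $0$ must be forced by a second projector assigned $1$, and (since $1$-values come only from completeness and the relevant projector differs from $\Pi_x$ on pain of circularity) that value requires a second, genuinely distinct full basis $\mathcal{B}_2$.

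The counting then proceeds from $\Pi_A$, $\Pi_B$, and the two bases. Since the bases can overlap only in their forced-$0$ parts, writing $W$ for the span of $\mathcal{B}_1\cap\mathcal{B}_2$, the non-shared parts of each basis are orthonormal bases of $W^{\perp}$, and the projectors actually carrying the implication act nontrivially on $W^{\perp}$. Because dimension two admits no true-implies-false obstruction---the simplest TIFS, the ``bug'' of Fig.~\ref{specker}, is irreducibly three-dimensional, in line with Theorem~\ref{bks}---the space $W^{\perp}$ must have dimension at least $3$, so $|\mathcal{B}_1\cap\mathcal{B}_2|\le d-3$ and $|\mathcal{B}_1\cup\mathcal{B}_2|\ge d+3$. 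Together with $\Pi_A$, which lies in neither basis, this already shows $|S|$ grows linearly in $d$ at the right rate and reveals the extremal configuration ($d-3$ shared ``padding'' directions completing a three-dimensional bug core), while pinning the exact constant $d+5$ requires the sharper argument below.

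To make the ``three-dimensional core'' step fully rigorous for all $d$, the cleanest route is an induction establishing $m(d)\ge m(d-1)+1$, where $m(d)$ is the minimum cardinality in dimension $d$. Given a minimal TIFS $S$ containing a projector $\Pi_v$ orthogonal to every other element, restriction to $v^{\perp}\cong\mathbb{C}^{d-1}$ turns $S\setminus\{\Pi_v\}$ into a TIFS in dimension $d-1$: every $d$-element basis of $S$ must contain $\Pi_v$, so each such completeness constraint descends to a $(d-1)$-dimensional one, and any KS assignment on $v^{\perp}$ extends by $f(\Pi_v)=0$ to a KS assignment of $S$, whence $f(\Pi_A)+f(\Pi_B)\le 1$ is inherited. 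Iterating down to $d=3$ reduces everything to verifying directly, from the finite orthogonality-graph structure, that no TIFS has fewer than $8$ rank-one projectors in dimension three. The main obstacle is precisely the two ingredients this induction rests on: proving that a minimal TIFS must contain such a fully detached projector for $d>3$ (equivalently, that minimality forces the $d-3$ padding directions of the construction in Theorem~\ref{t10k}), and settling the finite but delicate $d=3$ base case.
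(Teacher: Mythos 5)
First, a point of reference: the paper does not prove this statement at all. Theorem~\ref{t10m} is imported verbatim from Ref.~\cite{CPSS18}, where the minimality of the $(d+5)$-element TIFS is established; the present paper only uses it. So your attempt has to stand on its own, and as written it is a plan rather than a proof: its two load-bearing steps are exactly the ones you concede, in your final sentence, to be unproven ``obstacles.''

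Concretely, there are three gaps. (1)~Your backward-forcing analysis presupposes that the TIFS property is always witnessed by unit propagation --- every $0$ forced by orthogonality to a $1$, every $1$ forced by a completed basis. But the definition only asserts that no KS assignment gives $f(\Pi_A)=f(\Pi_B)=1$; such non-existence can come from a global obstruction with no deduction chain, and it also holds vacuously when no KS assignment with $f(\Pi_A)=1$ exists at all, a degenerate case you never exclude. Hence the existence of your bases $\mathcal{B}_1$ and $\mathcal{B}_2$ is not established. (2)~Even granting them, and granting $\dim W^{\perp}\ge 3$ (which you only motivate --- you do not show that $\Pi_A$, $\Pi_B$, and the forcing structure live in $W^{\perp}$), the count $|\mathcal{B}_1\cup\mathcal{B}_2|+1\ge d+4$ falls one short of the target, since $\Pi_B$ may itself lie in $\mathcal{B}_1$; you acknowledge this. (3)~The induction $m(d)\ge m(d-1)+1$ rests on the claim that every \emph{minimal} TIFS in dimension $d>3$ contains a projector orthogonal to all other elements. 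That is a strong structural assertion --- a priori a minimal TIFS could consist of vectors spanning $\mathbb{C}^d$ with no detached direction --- and proving it, together with the base case $m(3)=8$, \emph{is} the mathematical content of the theorem; neither is supplied. What your write-up actually establishes is the upper bound (which is just Theorem~\ref{t10k}) and the dimensional-reduction step conditional on the detached projector; the lower bound, which is the substance of the cited result, remains open in your argument.
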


\begin{theorem}\label{t10n} \cite{CBTB13}
	Any SD-C set of rank-one projectors whose orthogonality graph is an odd cycle of size~$5$ or larger can be extended to a TIFS. These SD-C sets are of fundamental importance for contextuality because of Theorem~\ref{cswc2}.
\end{theorem}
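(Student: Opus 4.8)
The plan is to realize the TIFS by completing the cyclic orthogonalities to full bases and then propagating the Kochen--Specker constraints of Definition~\ref{kss} around the cycle. Write the set as $S=\{\Pi_1,\dots,\Pi_n\}$ with $\Pi_i=|v_i\rangle\langle v_i|$, $n$ odd, $n\ge 5$, and $v_i\perp v_{i+1}$ cyclically (indices mod $n$); the prototypical and minimal case is $d=3$ (qutrits), and the construction below adapts to higher $d$ by completing each two-dimensional span of an edge with an orthonormal basis of its orthogonal complement. By Theorem~\ref{cswc2} this odd-cycle shape is exactly the orthogonality structure for which an inequality of the form~\eqref{csw} can be violated, which is why these are the fundamental SD-C sets to treat.

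First I would adjoin, for every edge $\{v_i,v_{i+1}\}$, the unique unit vector $u_i$ orthogonal to both, so that $\{v_i,v_{i+1},u_i\}$ is an orthonormal basis of $\mathbb{C}^3$ to which condition~(II) of Definition~\ref{kss} applies. Designating as the candidate TIFS endpoints two projectors that are \emph{non}adjacent in the cycle, hence nonorthogonal---say $\Pi_A=\Pi_1$ and a suitable $\Pi_B$---I would then run the forcing argument: assuming $f(\Pi_1)=1$, conditions~(I) and~(II) immediately give $f(v_2)=f(v_n)=0$ and $f(u_1)=f(u_n)=0$, after which I would propagate these zeros through the chain of completed bases $\{v_i,v_{i+1},u_i\}$, exploiting the odd length to close the propagation and arrive at $f(\Pi_B)=0$. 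For $n=5$ this reproduces Specker's bug, the minimal TIFS of Theorem~\ref{t10m} obtained from the pentagon by adjoining the completing vectors, and the same scheme should extend to every odd $n$ by concatenation of bug-like blocks.

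The main obstacle is geometric realizability. The naive basis completions create branchings: $f(v_2)=0$ alone only forces $f(v_3)+f(u_2)=1$, not $f(v_3)=0$, so the auxiliary projectors must carry additional cross-orthogonalities that collapse every branch into a single implication chain terminating at $\Pi_B$. The crux is therefore to show that the continuous freedom in choosing the cycle vectors $v_i$---constrained only by the $n$ cyclic orthogonalities---suffices to place all completing and auxiliary vectors consistently in $\mathbb{C}^3$ for every odd $n\ge 5$, while simultaneously keeping $\Pi_A$ and $\Pi_B$ nonorthogonal so the gadget does not trivialize. I would handle this by an explicit parametrized construction, verifying it first for the pentagon and then inductively adjoining one block per added pair of vertices, checking at each stage that the imposed orthogonality relations remain satisfiable in three dimensions.
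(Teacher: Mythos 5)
The paper itself offers no proof of this statement---it is imported verbatim from \cite{CBTB13}---so your proposal has to stand against that paper's construction. Your setup (complete edges to bases, propagate the constraints of Definition~\ref{kss}) is the right starting point, and you correctly identify the fatal branching: with all completions $u_i$ adjoined and no further orthogonalities, $f(v_2)=0$ only gives $f(v_3)+f(u_2)=1$. Indeed, assigning $1$ to \emph{any} independent set of cycle vertices and $1$ to $u_i$ exactly on the uncovered edges is a valid KS assignment, so no choice of two cycle vertices as endpoints can ever work at this stage. But the step that overcomes this obstacle---the entire content of the theorem---is never supplied, and the escape route you propose is illegitimate: you invoke ``the continuous freedom in choosing the cycle vectors $v_i$,'' whereas the theorem quantifies over \emph{any given} SD-C set whose orthogonality graph is the odd cycle. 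The $v_i$ are fixed data; only the adjoined projectors are at your disposal. Tuning the $v_i$ could at best show that \emph{some} realization of the odd cycle extends to a TIFS, a strictly weaker statement.

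Your anchor case is also factually wrong: with both endpoints among the cycle vertices ($\Pi_A=\Pi_1$, $\Pi_B$ nonadjacent to it), $n=5$ does \emph{not} reproduce Specker's bug. In the bug (Fig.~\ref{specker}) one endpoint is one of the three \emph{added} vectors, namely the apex $\psi$ orthogonal to the two completing vectors $u_1$ of $\{v_1,v_2\}$ and $u_4$ of $\{v_4,v_5\}$; the other endpoint is $v_3$, the vertex left uncovered by those two disjoint edges. The bug is not a TIFS between two pentagon vertices: assigning $1$ to $v_1$, $v_3$, $u_4$ and $0$ elsewhere satisfies conditions (I) and (II) of Definition~\ref{kss}. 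The working pentagon argument is: adjoin $u_1$, $u_4$, $\psi$; then $f(\psi)=1$ forces $f(u_1)=f(u_4)=0$, so each basis $\{v_1,v_2,u_1\}$ and $\{v_4,v_5,u_4\}$ puts its $1$ on a cycle vertex, and $f(v_3)=1$ would force $f(v_1)=f(v_5)=1$, contradicting $v_1\perp v_5$. This yields $8=d+5$ vectors, consistent with Theorem~\ref{t10m}; one must still check that $\psi$ is nonorthogonal to $v_3$, which is where the SD-C hypothesis enters. Finally, this apex trick does not survive your ``concatenation of bug-like blocks'' for odd $n\ge 7$ with fixed vectors: $\psi$ would have to be orthogonal to $(n-1)/2\ge 3$ generically independent completing vectors in $\mathbb{C}^3$, which is impossible. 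For arbitrary given vectors the robust route is Theorem~\ref{t10b}: insert a $01$-gadget between two nonadjacent (hence nonorthogonal) cycle vertices and observe that adjoining the remaining cycle projectors cannot destroy the TIFS property, since every KS assignment of the enlarged set restricts to a KS assignment of the gadget.
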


\begin{theorem}\label{t10} \cite{CG96}
	Given a TIFS in $d \ge 3$, one can construct a critical KS set in any dimension $D \ge d$.
\end{theorem}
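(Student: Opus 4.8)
The plan is to exploit the fact that a TIFS acts, for the purposes of any KS assignment, as an \emph{effective orthogonality} between two non-orthogonal directions, and then to glue together orthonormal bases with such effective-orthogonality constraints until no consistent $0/1$ labelling survives. Writing the TIFS as $\{\Pi_A\}\cup E\cup\{\Pi_B\}$ with $\Pi_A=|a\rangle\langle a|$ and $\Pi_B=|b\rangle\langle b|$, the defining property $f(\Pi_A)+f(\Pi_B)\le 1$ is symmetric in $A$ and $B$: it simply forbids $a$ and $b$ from both being true, exactly as a genuine orthogonality edge would, but now between vectors at an angle $\theta$ with $\cos\theta=|\langle a|b\rangle|\in(0,1)$. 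First I would record this as the basic gadget and note that, by Theorem~\ref{t10k}, such a gadget exists in every dimension $\ge 3$, and that any unitary image of a TIFS is again a TIFS, so we may place its two endpoints wherever the geometry demands at the cost of the auxiliary projectors in $E$.

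Next I would carry out the closure in the base dimension $D=d$, generalizing the concatenation of Fig.~\ref{ks117} to arbitrary $d$. I would assemble $N\ge d+1$ disjoint orthonormal bases and connect selected vertices across different bases by rotated copies of the full gadget, following the template of Fig.~\ref{fig1}. The point of the construction is a counting argument: each basis forces exactly one of its $d$ vertices to be labelled $1$ (condition~(II) of Definition~\ref{kss}), while every orthogonality edge and every gadget forbids a pair of vertices from being jointly labelled $1$ (condition~(I) and the TIFS property). I would show that the chosen pattern of edges and gadgets makes the ``one-true-per-basis'' constraints incompatible with the ``not-both-true'' constraints, so that no KS assignment exists: following the forced labels around the configuration returns a vertex forced to be simultaneously $1$ and $0$, which is precisely the obstruction made pictorial in Fig.~\ref{fig1}. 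The resulting set of rank-one projectors is then a KS set in dimension $d$.

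To reach an arbitrary $D>d$ I would run the same template directly in $\mathbb{C}^D$, using a TIFS in dimension $D$ supplied by Theorem~\ref{t10k} (equivalently, embedding the given $d$-dimensional gadget in $\mathbb{C}^D$ and completing every basis with a shared set of $D-d$ auxiliary orthonormal vectors). The delicate point here, and what I expect to be the main obstacle, is that padding the bases with extra directions lets a would-be KS assignment ``cheat'' by placing the label $1$ on an auxiliary vector, which collapses condition~(II) inside the original subspace and destroys the contradiction. I would neutralize this by adjoining further rotated gadgets that force every auxiliary direction to be labelled $0$, thereby reducing any KS assignment on the $D$-dimensional set to one on the $d$-dimensional core, which was already shown impossible. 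Making this forcing both sufficient and economical, so that no spurious sub-KS-set is created, is the technical heart of the argument.

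Finally I would establish criticality: removing any single projector must restore the existence of a KS assignment. For a vertex lying in the interior of a gadget, its removal breaks the corresponding ``not-both-true'' constraint; for a basis vertex, its removal leaves an incomplete basis to which condition~(II) no longer applies; in either case one constraint of the incompatible system is relaxed and an explicit consistent labelling becomes available, exactly as asserted in the caption of Fig.~\ref{fig1}. The remaining work is bookkeeping, namely verifying that \emph{every} vertex participates essentially in the unique contradiction so that the construction contains no redundant projector, which is again most conveniently organized through the counting argument above.
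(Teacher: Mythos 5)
You are reconstructing the right route, but note first that the paper itself never proves Theorem~\ref{t10}: it is quoted from \cite{CG96}, and the closest the paper comes is its own Fig.~\ref{fig1} machinery (Theorems~\ref{t10d} and~\ref{t10d2}), which is indeed the template you follow. Measured against that template, your plan has one genuine gap, and it sits exactly where you write that ``any unitary image of a TIFS is again a TIFS, so we may place its two endpoints wherever the geometry demands.'' The first half is true; the second half does not follow. A unitary preserves inner products, so rotated copies of the single given TIFS can realize a dashed edge only between pairs of rays at the one fixed angle $\theta_0=\arccos|\langle a|b\rangle|$ of that TIFS. But the pattern of Figs.~\ref{fig1}(a) and~\ref{fig1}(d) needs a dashed edge between the $i$-th vector $u_i$ of $B_0$ and \emph{every} vector of the basis $B_i$, and since $\sum_{j}|\langle u_i|b_{i,j}\rangle|^2=1$ for any unit vector and any orthonormal basis of $\mathbb{C}^d$, all $d$ of those pairs can be at angle $\theta_0$ only if $\cos^2\theta_0=1/d$. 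For a generic given TIFS this is impossible, so the construction cannot be run with unitary copies alone; no alternative wiring pattern escapes this counting either, because every pattern fixes the admissible angles of its dashed edges. What is missing is the composition step that manufactures a TIFS between \emph{any} prescribed nonorthogonal pair from a given one (chaining true-implies-false with true-implies-true gadgets), i.e., precisely Theorem~\ref{t10b} of \cite{CG96,RRHPHH20}. That is a separate, nontrivial construction, not a consequence of unitary invariance; your proof must either invoke it as an ingredient or rebuild it, and currently it does neither.

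Two smaller points. Your diagnosis of the $D>d$ ``cheating'' problem is correct, but your fix --- gadgets ``that force every auxiliary direction to be labelled $0$'' --- is left without a mechanism: in this kind of constraint system the way to force a single vertex to $0$ is to connect it (by orthogonality or by TIFSs) to every element of some complete basis of the set, so that the vertex made true by condition~(II) of Definition~\ref{kss} kills it; this again presupposes TIFSs at prescribed angles. A cleaner observation is that TIFSs, unlike KS sets, \emph{do} survive padding, precisely because their defining property is conditional: if a shared auxiliary vector is assigned $1$, both endpoints are $0$ by condition~(I) and $f(\Pi_A)+f(\Pi_B)\le 1$ holds trivially; if all auxiliary vectors are $0$, condition~(II) applied to the padded bases restores the $d$-dimensional argument. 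Hence one can lift every gadget to $\mathbb{C}^D$ and run the whole construction natively there (or simply take a $D$-dimensional TIFS from Theorem~\ref{t10k}). Finally, criticality is not mere bookkeeping: it additionally requires the TIFSs used to be critical themselves and their internal vertices to be chosen so that no accidental orthogonalities arise between different gadgets --- the same caveat the paper makes (``suitably choosing the extra nodes'') in Appendix~\ref{sec3}.
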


\begin{theorem}\label{t10b} \cite{CG96,RRHPHH20}
	Given any two nonorthogonal rank-one projectors in any dimension, there is a TIFS between them.
\end{theorem}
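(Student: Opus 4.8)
\emph{The plan.} The strategy is to reduce the statement to a \emph{true-implies-true} chain closed off by a single orthogonality edge, to build the links of that chain from copies of the simplest TIFS, and then to lift the $d=3$ construction to all dimensions. Write $\Pi_A=|\psi_A\rangle\langle\psi_A|$, $\Pi_B=|\psi_B\rangle\langle\psi_B|$ and set $t:=|\langle\psi_A|\psi_B\rangle|$, so $t\in(0,1)$ (nonorthogonal and distinct). Since the defining inequality $f(\Pi_A)+f(\Pi_B)\le 1$ just says ``not both assigned~$1$,'' and $f(\Pi_A)=1\Rightarrow f(\Pi_B)=0$ is the contrapositive of $f(\Pi_B)=1\Rightarrow f(\Pi_A)=0$, it suffices to force $f(\Pi_A)=1\Rightarrow f(\Pi_B)=0$. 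I would obtain this by picking the unit vector $|\psi_C\rangle$ proportional to the projection of $|\psi_A\rangle$ onto the hyperplane $\psi_B^\perp$. Then $\Pi_C\perp\Pi_B$ and the angle between $\Pi_A$ and $\Pi_C$ equals $\arcsin t<\pi/2$, so $\Pi_A,\Pi_C$ are nonorthogonal. If I can build a set forcing $f(\Pi_A)=1\Rightarrow f(\Pi_C)=1$, then condition~(I) of Definition~\ref{kss} applied to $\Pi_C\perp\Pi_B$ gives $f(\Pi_B)=0$, and the union of that set with $\{\Pi_C,\Pi_B\}$ is the desired TIFS.

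\emph{Elementary true-implies-true gadget.} The engine is a gadget that forces a nearby ray true, and I would construct it first in $d=3$. Given a target $\Pi_c$ at small angle $\delta>0$ from a source $\Pi_a$, choose an orthonormal basis $\{|b\rangle,|c\rangle,|c'\rangle\}$; the partners $|b\rangle,|c'\rangle$ lie in the plane $c^\perp$, which is tilted from $a^\perp$ by $\delta$, so they have small but (generically) nonzero overlap with $|a\rangle$. Using the simplest TIFS (the Kochen--Specker ``bug,'' Fig.~\ref{specker}) twice, I would attach one bug forcing $f(\Pi_a)=1\Rightarrow f(\Pi_b)=0$ and a second forcing $f(\Pi_a)=1\Rightarrow f(\Pi_{c'})=0$; both overlaps $|\langle a|b\rangle|,|\langle a|c'\rangle|\le\sin\delta$ fall in the range of a single bug once $\delta$ is small. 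Condition~(II) of Definition~\ref{kss}, applied to the basis $\{|b\rangle,|c\rangle,|c'\rangle\}$, then forces $f(\Pi_c)=1$. This yields a true-implies-true gadget valid for every angle in an interval $(0,\delta_0]$, built using only the primitive TIFS already available.

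\emph{Chaining and lifting.} True-implies-true gadgets compose: the union of a gadget from $\Pi_a$ to $\Pi_x$ and one from $\Pi_x$ to $\Pi_y$ gives $f(\Pi_a)=1\Rightarrow f(\Pi_x)=1\Rightarrow f(\Pi_y)=1$. Choosing rays $\Pi_A=\Pi_{x_0},\Pi_{x_1},\dots,\Pi_{x_k}=\Pi_C$ along the geodesic from $\Pi_A$ to $\Pi_C$ with each consecutive angle $\le\delta_0$, a chain of $k=\lceil \arcsin t/\delta_0\rceil$ gadgets forces $f(\Pi_A)=1\Rightarrow f(\Pi_C)=1$; since $\arcsin t<\pi/2$, $k$ is finite. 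Together with $\Pi_C\perp\Pi_B$ this settles $d=3$. For $d\ge4$ I would embed the whole construction in a $3$-dimensional subspace $W\supseteq\mathrm{span}(|\psi_A\rangle,|\psi_B\rangle)$ and fix an orthonormal basis $|e_4\rangle,\dots,|e_d\rangle$ of $W^\perp$, completing every basis of $W$ used above into a genuine basis of the full space. Because $\Pi_A\perp\Pi_{e_j}$, the hypothesis $f(\Pi_A)=1$ forces $f(\Pi_{e_j})=0$ for all $j$ by condition~(I), so on the completed bases condition~(II) reduces to condition~(II) inside $W$ and the $d=3$ argument transfers verbatim; this is the mechanism behind Theorem~\ref{t10}.

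\emph{Main obstacle.} The routine but delicate part is \emph{general position}: across the union of all the bugs (each carrying internal free parameters) and all chain links, I must choose the auxiliary rays so that the projectors are pairwise distinct and the only orthogonality relations present are the intended ones, so that no spurious edge or accidental basis introduces an extra KS constraint that could short-circuit or invalidate the implications. I expect this to be dispatched by a genericity/perturbation argument, together with the quantitative check that the single-bug overlap interval indeed contains $(0,\sin\delta_0]$ and that the geodesic intermediate rays admit basis partners nonorthogonal to their sources. Determining the exact realizable-angle interval $\delta_0$ of the elementary gadget is the only genuine computation the argument requires.
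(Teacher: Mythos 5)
The paper itself contains no proof of Theorem~\ref{t10b}: it is imported from \cite{CG96,RRHPHH20}. Judged against those sources, your proposal is correct and follows essentially the same route, which goes back to the chaining in the original $117$-vector proof of \cite{KS67} (Fig.~\ref{ks117}): convert true-implies-false into a chain of true-implies-true links, realize each link by attaching copies of the Specker bug (Fig.~\ref{specker}) to two members of an orthonormal basis of the target so that condition~(II) of Definition~\ref{kss} forces the third member true, compose the links along the geodesic, close the chain with a single orthogonality to $\Pi_B$, and lift to $d>3$ by completing every basis with a fixed orthonormal basis of $W^\perp$. Two remarks. First, the fact you flag as ``the only genuine computation'' does hold: the bug is realizable precisely for endpoint overlaps in $(0,\tfrac{1}{3}]$ (a one-parameter family of realizations interpolates continuously from overlap $\tfrac{1}{3}$ down to $0$), so one may take $\delta_0=\arcsin(\tfrac{1}{3})$, and your elementary gadget then exists for every $0<\delta\le\delta_0$ once the two basis partners are chosen nonorthogonal to the source, which a generic rotation about the target achieves. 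Second, your ``main obstacle'' is not an obstacle under the paper's definition of TIFS: accidental orthogonalities or coincidences among auxiliary rays only impose additional constraints on KS assignments, so they can never invalidate the derived implication $f(\Pi_A)=1\Rightarrow f(\Pi_B)=0$ (at worst the set becomes a KS set, which satisfies the TIFS condition vacuously); general position matters only for the stronger $01$-gadget notion of \cite{RRHPHH20}, which additionally requires the set to admit KS assignments with $f(\Pi_A)=1$.
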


\begin{theorem}\label{t10d} \cite{RRHPHH20}
	Every set $\{B_0,B_1,B_2,B_3\}$ of four disjoint bases of rank-one projectors in dimension $3$ such that there is a TIFS between every projector $\Pi_i \in B_0$, with $i=1,\ldots,d$, and every projector in $B_i$, is a critical KS set. See Fig.~\ref{gadgetKS3}.
\end{theorem}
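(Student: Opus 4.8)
The plan is to prove the two defining properties of a critical KS set in turn: that the set admits no KS assignment, and that deleting any single projector creates one. I use throughout that in $d=3$ each basis is a triple of mutually orthogonal rank-one projectors summing to $\openone$, so by conditions~(I) and~(II) of Definition~\ref{kss} every KS assignment $f$ makes \emph{exactly one} projector of each complete basis equal to~$1$. Writing $B_0=\{\Pi_1,\Pi_2,\Pi_3\}$, I picture each index $i$ as a hub-and-spoke unit in which the hub $\Pi_i$ is joined by a TIFS to each of the three spokes of $B_i$, and I take the TIFSs to be minimal (Theorem~\ref{t10m}) and mutually non-interfering, i.e.\ sharing only their prescribed endpoints, so that their interiors can be filled in independently.

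For the no-assignment half I would argue by contradiction. Given a KS assignment $f$, condition~(II) applied to the complete basis $B_0$ picks out a unique true hub $\Pi_{i^\ast}$, and applied to the complete basis $B_{i^\ast}$ picks out a unique true spoke $\Pi^\ast\in B_{i^\ast}$. But the hypothesised TIFS between $\Pi_{i^\ast}$ and $\Pi^\ast$ forces $f(\Pi_{i^\ast})+f(\Pi^\ast)\le 1$, contradicting $f(\Pi_{i^\ast})=f(\Pi^\ast)=1$. The interior bases of the TIFSs only tighten the constraints, so they cannot avert the contradiction and no KS assignment exists.

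For criticality I would exhibit an explicit assignment after each deletion; the three cases sever the three links of the chain above. (a)~Deleting an element of $B_0$ leaves $B_0$ incomplete, so condition~(II) no longer forces a true hub; setting every surviving hub to~$0$ puts all spoke-TIFSs in their $A=0$ regime, whereupon each complete $B_1,B_2,B_3$ receives one true spoke and the interiors are filled by the restriction of the TIFS assignment with false hub. (b)~Deleting a spoke of some $B_i$ leaves $B_0$ complete, so a true hub is still forced; the choice $i^\ast\neq i$ revives the contradiction, so one must take $i^\ast=i$, and since $B_i$ is now incomplete its surviving spokes may all be~$0$, consistently with being driven false by the true hub. (c)~Deleting an interior projector of the TIFS joining $\Pi_i$ to a spoke $b$ destroys that gadget's implication (a minimal TIFS is automatically critical, since a proper sub-TIFS on the same endpoints would be smaller than the $d+5$ bound of Theorem~\ref{t10m}), so the reduced gadget admits a joint assignment $\Pi_i=b=1$; taking $\Pi_i$ as the forced true hub and $b$ as the resulting true spoke, while the other two spokes of $B_i$ are driven false, again leaves exactly one true projector in $B_i$.

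The logical skeleton is easy; the real work, and the main obstacle, is the bookkeeping that glues these partial assignments into one global KS assignment. One must verify that the nine TIFS interiors admit simultaneous KS assignments compatible with the prescribed endpoint pattern, which is where both standing assumptions enter: non-interference lets the interiors be chosen branch by branch, and minimality (hence criticality) of the individual TIFSs is exactly what makes case~(c) work---were an interior projector redundant, deleting it would leave the implication, and thus the cascade and the KS property, intact, so the set would fail to be critical. Flagging this dependence seems important, since the criticality of the whole construction genuinely inherits from the criticality of the gadgets used to build it.
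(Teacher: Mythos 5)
Your first half is the paper's own proof: the paper establishes non-colorability in the caption of Fig.~\ref{gadgetKS3} (one node of the central triangle $B_0$ must be true by condition~(II) of Definition~\ref{kss}; the TIFSs then force the corresponding outer triangle entirely false, contradicting~(II)), and your version---true hub, then (II) forces a true spoke, then the TIFS forbids the pair---is the same argument in contrapositive form. The second half is where you add real content: the paper never proves criticality (it cites \cite{RRHPHH20}, and its figure captions assert only that deleting a \emph{depicted} node restores colorability, saying nothing about the gadget interiors). Your three deletion cases, the use of restrictions of KS assignments as KS assignments of subsets, and above all your diagnosis that criticality of the whole set must be inherited from criticality and non-interference of the individual gadgets, are correct and necessary: as you note, the statement is false for arbitrary TIFSs (pad one gadget with a redundant interior node and the set remains a KS set, hence is not critical). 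This matches the paper's own tacit position that the TIFS nodes must be ``suitably chosen'' (cf.\ Remark~\ref{rm} and the proof of Theorem~\ref{tm}).

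There is, however, one genuine defect in how you discharge these hypotheses. You assume the nine gadgets are \emph{minimal} in the sense of Theorem~\ref{t10m}, but minimal $(d+5)$-node TIFSs cannot be placed between arbitrarily prescribed endpoints: the hub--spoke pairs and their overlaps are fixed by the given bases, and the bug of Fig.~\ref{specker} is realizable only for special values of that overlap; Theorem~\ref{t10b} guarantees \emph{some} TIFS between any nonorthogonal pair, not one of cardinality $d+5$. The repair is cheap, because case~(c) uses minimality only as a proxy for criticality: any TIFS between the prescribed endpoints contains an inclusion-minimal sub-TIFS between them, and deleting an interior node of an inclusion-minimal TIFS yields, by definition of non-TIFS, a coloring with both endpoints true. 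A second, related point: your gluing in cases (a) and (b) silently requires each gadget to admit KS assignments realizing the endpoint patterns $(1,0)$, $(0,1)$, and $(0,0)$. This does not follow from the TIFS definition---a set whose interior is itself a KS set, or one that forces its hub endpoint to $0$ in every coloring, is vacuously a TIFS but realizes only some or none of these patterns---so it must be stated as a hypothesis (it holds for the $01$-gadgets constructed in \cite{RRHPHH20}). With ``minimal'' replaced by ``inclusion-minimal'' and this realizability made explicit alongside your non-interference assumption, your criticality argument is sound and supplies precisely what the paper leaves to the cited reference.
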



\begin{figure}[t!]\centering
	\includegraphics[width=.36\textwidth]{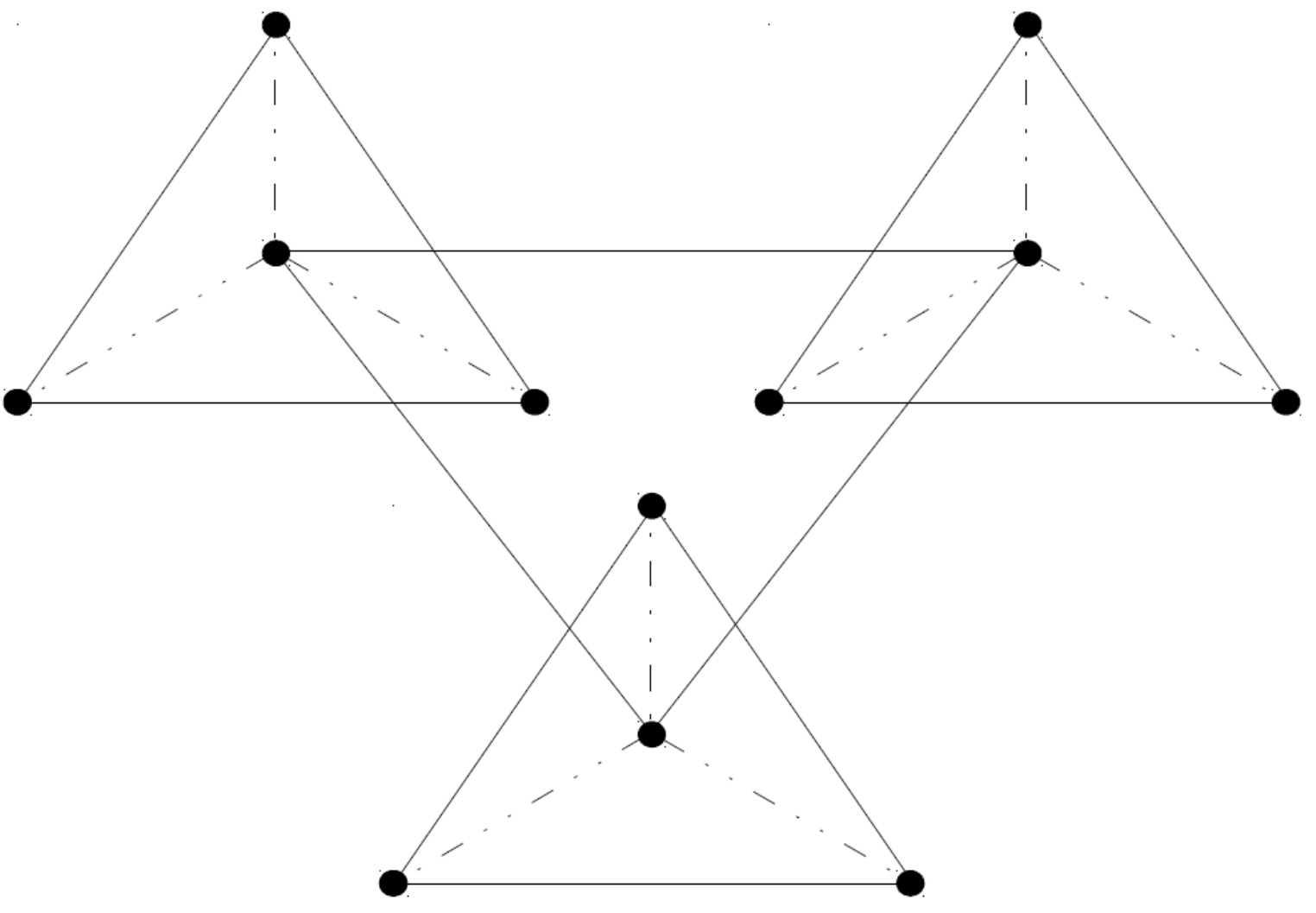}
	\caption{Nodes represent rank-one projectors. A continuous line between two nodes indicates that they are orthogonal. A dashed line between two nodes indicates that there is a TIFS between them. The figure illustrates Theorem~\ref{t10d}. Due to condition~(II) in Definition~\ref{kss}, one of the nodes of the central triangle has to be true. Then, due to condition~(I) in Definition~\ref{kss}, the three nodes of the corresponding small triangle have to be false, in contradiction to condition~(II). The figure is taken from \cite{RRHPHH20}. \label{gadgetKS3}}
\end{figure}


\begin{theorem}\label{t10d2}
	In any dimension $d \ge 3$, every set of $N$ disjoint bases of rank-one projectors can be extended to a critical KS set.
\end{theorem}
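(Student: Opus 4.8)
The plan is to build the critical KS set around a dimension-$d$ generalization of the ``bug'' gadget of Theorem~\ref{t10d}, treating three regimes of $N$. If $N<d+1$ I would first add disjoint bases until there are exactly $d+1$, which is always possible: the rank-one projectors in $\mathbb{C}^d$ form a continuum, whereas the finitely many bases already fixed exclude only finitely many rays, so a generic orthonormal basis is disjoint from all of them. Since this only \emph{adds} projectors, the given bases survive as a subset of the final set, and it suffices to construct the critical KS set for $N\ge d+1$, which I do by a base case $N=d+1$ and an extension for $N>d+1$.

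For $N=d+1$, label the bases $B_0,B_1,\dots,B_d$ and write $B_0=\{\Pi_1,\dots,\Pi_d\}$. For each $i$ and each $p\in B_i$ nonorthogonal to $\Pi_i$, I insert a TIFS between $\Pi_i$ and $p$, which exists by Theorem~\ref{t10b}; I take each gadget to be a minimal TIFS (Theorem~\ref{t10m}). Pairs that are already orthogonal satisfy condition~(I) of Definition~\ref{kss} directly, so in every case a KS assignment with $f(\Pi_i)=1$ forces $f(p)=0$ for all $p\in B_i$. The KS property is then immediate: by condition~(II) some $\Pi_i\in B_0$ is assigned $1$, which makes all of $B_i$ equal to $0$, contradicting condition~(II) for the basis $B_i$. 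This is the forcing argument of Fig.~\ref{fig1} and Theorem~\ref{t10d}, run in dimension $d$.

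Criticality for $N=d+1$ I would check by deleting a single projector and exhibiting a KS assignment. Deleting a projector of $B_0$ makes $B_0$ incomplete, so condition~(II) no longer constrains it; assigning all of $B_0$ the value $0$ satisfies every TIFS vacuously and lets each $B_i$ carry a single $1$. Deleting a projector of a satellite $B_i$ lets me set $f(\Pi_i)=1$ and all other center projectors to $0$: then $B_i$ is incomplete and may be all $0$, while the untriggered satellites each carry one $1$. Deleting an internal node of the TIFS joining $\Pi_i$ to some $p\in B_i$ removes that constraint, so I may set $f(\Pi_i)=f(p)=1$ with the rest of $B_i$ equal to $0$. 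The one nonroutine point is that a minimal TIFS must be \emph{critical} as a gadget---puncturing it permits both endpoints to be $1$---which I would verify on the explicit $(d+5)$-ray TIFS of Theorem~\ref{t10k}.

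For $N>d+1$ the extra bases must be woven into the forcing so that each stays essential, and here lies the main obstacle: one cannot merely attach a spare basis, since the $(d+1)$-base bug is already a KS set and the dangling basis would then be deletable without restoring colorability. Following Figs.~\ref{fig1}(b), \ref{fig1}(c) and~\ref{fig1}(e), I would instead replace some single spokes of the star by longer \emph{forcing chains}: a true projector TIFS-linked to $d-1$ of the $d$ projectors of the next basis forces its last projector true by condition~(II), and iterating drives the truth along a chain of bases until it closes into a violation of condition~(I) or~(II). The $N-1$ non-central bases are distributed one per chain among the $d$ projectors of $B_0$, so that every choice of a true center projector still triggers a chain ending in a contradiction while each non-central basis sits on a single chain. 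Criticality again follows by routing truth through the break---deleting any projector severs exactly one chain, after which one sets the matching $\Pi_i\in B_0$ true, propagates up to the break, and leaves the other center projectors untriggered. The hard part is the bookkeeping that every such deletion reopens a globally consistent assignment and that, with nothing removed, each of the $d$ chains genuinely closes; this is exactly where the geometric freedom in placing TIFS (Theorems~\ref{t10b} and~\ref{t10k}) and the gadget-criticality of each TIFS are needed.
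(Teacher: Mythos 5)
Your proposal follows essentially the same route as the paper's proof: the identical three-way case split on $N$ (pad with generic extra bases when $N<d+1$, the star construction of Theorem~\ref{t10d} generalized to dimension $d$ when $N=d+1$, and replacing single spokes by forcing chains of bases when $N>d+1$, which is exactly what Figs.~\ref{fig1}(b), \ref{fig1}(c), and \ref{fig1}(e) depict). You in fact supply more detail than the paper, whose proof is essentially a pointer to Fig.~\ref{fig1} and Theorem~\ref{t10d}; the residual bookkeeping you flag (criticality of each TIFS as a gadget, and global consistency of the assignment after a deletion, which requires suitably choosing the TIFS projectors) is likewise left implicit in the paper.
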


\begin{proof}
	If $N = d+1$, one can use the construction in Theorem~\ref{t10d} extended to arbitrary $d \ge 3$, as shown in Figs.~\ref{fig1}(a) and \ref{fig1}(d).
	If $N < d+1$, one can add disjoint bases until $N'= d+1$ and then apply the method in Figs.~\ref{fig1}(a) and \ref{fig1}(d).
	If $N>d+1$, one can use the construction in Figs.~\ref{fig1}(b), \ref{fig1}(c), and \ref{fig1}(e).
\end{proof}


\section{Any SD-C set can be extended to a critical SI-C set}
\label{sec3}


\begin{theorem}
	\label{tm}
	Any SD-C set $S$ of projectors can be extended to a critical SI-C set containing all the projectors of $S$.
\end{theorem}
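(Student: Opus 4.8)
The plan is to follow the three-stage strategy of Step~1: embed $S$ into a union of disjoint bases, close this union into a critical KS set (which is automatically SI-C), and then upgrade criticality ``as a KS set'' to criticality ``as a SI-C set.'' A preliminary observation organizes the last stage. Adding rank-one projectors to a SI-C set preserves SI-C: if $R$ is SI-C with weights $w$ and $R\subseteq R'$, then extending $w$ by zeros keeps $\sum_i w_i\Pi_i\ge\openone$, while every independent set of the orthogonality graph of $R'$ restricts to an independent set of that of $R$, so the bound $\sum_{j\in\mathcal I}w_j\le y<1$ of Theorem~\ref{t66} persists. Hence SI-C is monotone under inclusion, and a SI-C set $T$ is critical if and only if it is \emph{minimal} (contains no proper SI-C subset): were $R\subsetneq T$ SI-C, then for any $x\in T\setminus R$ monotonicity would make $T\setminus\{x\}\supseteq R$ SI-C, contradicting criticality. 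The goal therefore reduces to producing a minimal, equivalently critical, SI-C set that contains $S$.

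For the first two stages, let $G$ be the orthogonality graph of $S$ and write $\Pi_i=|v_i\rangle\langle v_i|$. Cover the vertices of $G$ by cliques; since each clique is a set of mutually orthogonal unit vectors, it extends, using fresh mutually disjoint vectors, to a full orthonormal basis in dimension $d$, and doing this clique by clique yields $N$ disjoint bases whose union contains $S$. Because $S$ is SD-C, $G$ contains an induced odd cycle of length at least $5$ or its complement (Theorem~\ref{cswc2}), which forces a minimal such cover to have $N\ge 3$; in any event the construction tolerates any $N$. I would then apply Theorem~\ref{t10d2} to extend this union of disjoint bases to a critical KS set $T$ in dimension $d$ (completing to $N'=d+1$ bases if $N<d+1$ and using the gadget patterns of Fig.~\ref{fig1} otherwise). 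By construction $S\subseteq T$, and by Theorem~\ref{t12} the set $T$ is SI-C; we thus already possess a SI-C set containing every projector of $S$.

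The main obstacle is the final upgrade, which is precisely the content of Remark~\ref{t10f}: criticality of $T$ as a KS set only guarantees that $T\setminus\{x\}$ admits a KS assignment for each $x$, not that $T\setminus\{x\}$ fails to be SI-C, since some SI-C sets (e.g.\ Yu--Oh) are not KS sets. Equivalently, $T$ may contain a proper SI-C subset and so fail to be minimal. To rule this out I would exploit the freedom, guaranteed by Theorem~\ref{t10b}, in realizing each true-implies-false gadget: a fixed gadget orthogonality pattern is satisfied by a positive-dimensional family of vector configurations, the same freedom invoked in Remark~\ref{rm} for the gadgets of Fig.~\ref{ks117}. For a proper subset $R\subsetneq T$ with induced orthogonality graph $G_R$, two cases arise. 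If $\chi_f(G_R)\le d$, then by Theorem~\ref{t6b} the subset $R$ is never SI-C, regardless of the vectors, so it is harmless. If $\chi_f(G_R)>d$, then SI-C of $R$ is decided geometrically through Theorem~\ref{t66}, by whether some admissible weight vector obeys $\sum_{i\in R}w_i\Pi_i\ge\openone$; here I would choose the gadget vectors so that this operator inequality fails for every admissible $w$. Achieving this for all threatening subsets simultaneously renders $T$ minimal, hence a critical SI-C set containing $S$.

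I expect this simultaneous geometric choice to be the crux. The dangerous subsets are numerous, and their SI-C status is governed by an operator inequality that may be robust rather than fragile, so one must show that the configurations forcing some proper subset to stay SI-C occupy a proper, ``non-generic'' subvariety of the gadget realization space that a single common choice of the extra nodes can avoid. This is exactly the ``suitable choice of the extra nodes'' referred to in Step~1, with the explicit verifications relegated to Appendices~\ref{sec2} and \ref{sec3} and to the KCBS/Yu--Oh example; by comparison, the embedding into disjoint bases and the KS closure (Theorems~\ref{t10d2} and \ref{t12}) are routine scaffolding.
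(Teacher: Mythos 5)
Your proposal follows essentially the same route as the paper's proof: cover $S$ by disjoint bases (with $N\ge 3$ via Theorem~\ref{cswc2}), extend to a critical KS set via Theorem~\ref{t10d2}, invoke Theorem~\ref{t12} to get SI-C, and then enforce SI-C criticality by exploiting the freedom in choosing the TIFS vectors, exactly as in Remarks~\ref{t10f} and~\ref{rm}. Your additional observation that SI-C is monotone under inclusion (so criticality equals minimality) is a correct and useful organizing lemma, and your honest identification of the ``suitable choice of gadget vectors'' step as the crux matches the level of detail the paper itself provides for that step.
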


\begin{proof}
	Let $G$ be the graph of orthogonality of $S$. Let $N$ be the minimum number of disjoint bases that cover all the vertices of~$G$. See Fig.~\ref{covers}. 
	By Theorem~\ref{cswc2}, $N \ge 3$. By Theorem~\ref{t10d2}, every set of $N$ disjoint bases of rank-one projectors can be extended to a critical KS set. By Theorem~\ref{t12}, this KS set is a SI-C set. By Remark~\ref{t10f}, this does not mean that the SI-C set is critical. However, as in Remark~\ref{rm}, criticality of the SI-C set can be enforced by suitably choosing the TIFSs used in the proof of Theorem~\ref{t10d2}.
\end{proof}


\begin{figure}[t!]\centering
	\includegraphics[width=.40\textwidth]{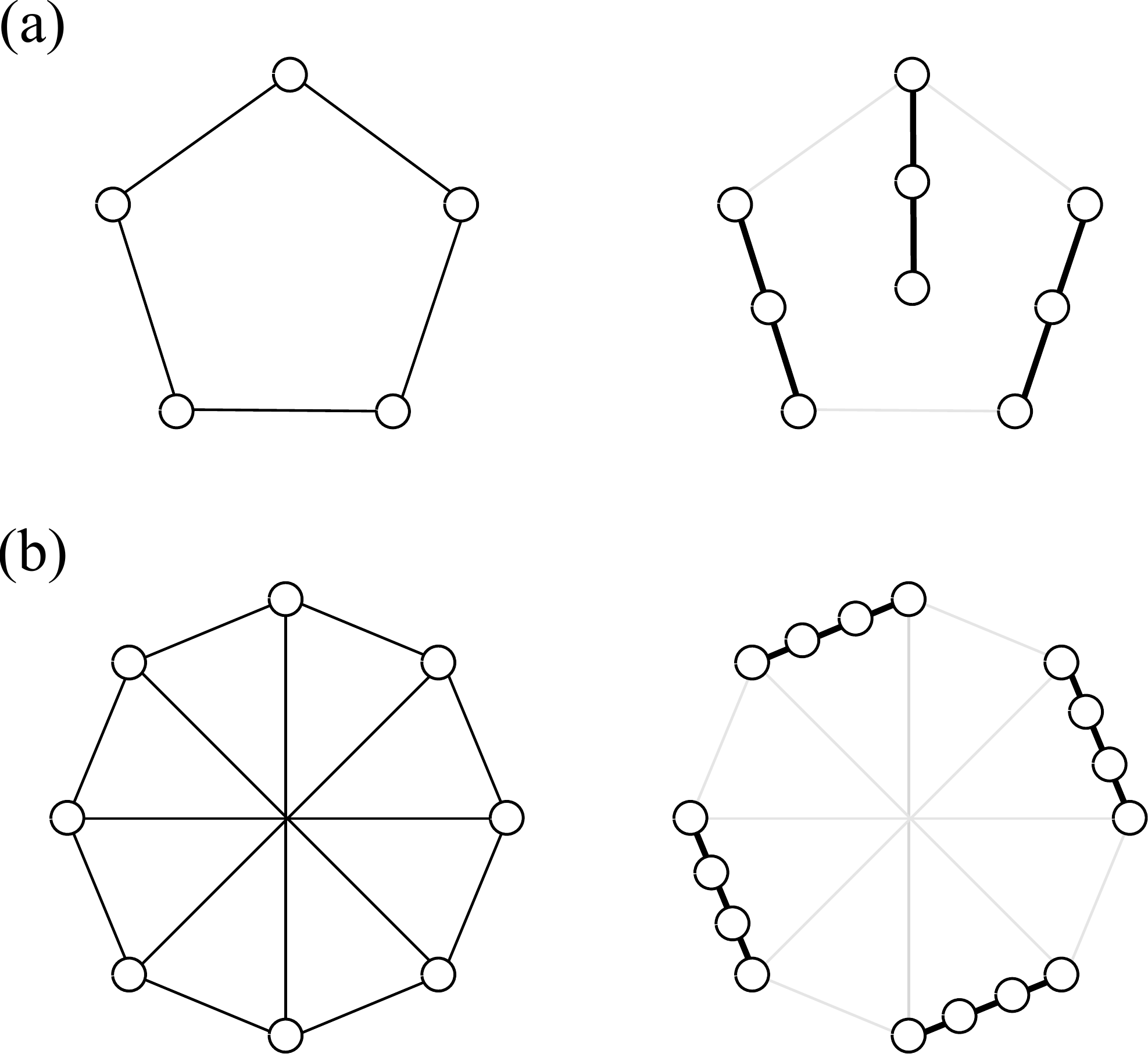}
	\caption{Nodes in the same straight line represent mutually orthogonal rank-one projectors. (a)~A pentagon in $d=3$ and a minimum number of disjoint bases covering all its vertices. (b)~A circulant graph in $d=4$ and a minimum number of disjoint bases covering all its vertices. \label{covers}}
\end{figure}



\section{How to identify minimal critical SI-C sets containing a given SD-C set}
\label{sec4}


\begin{Definition}
	A minimal critical SI-C set is a critical SI-C set of minimum cardinality.
\end{Definition}

Theorem~\ref{tm} guarantees that critical SI-C sets containing any given SD-C set exist. However, the method used in the proof of Theorem~\ref{tm} does not guarantee that the resulting critical SI-C set is minimal. Here we show how to use Theorems~\ref{t6} and~\ref{t6b} to identify candidates to be minimal critical SI-C sets containing any given SD-C set, and then Theorem~\ref{t66} to check whether or not they are SI-C sets.


\begin{figure}[t!]\centering
	\includegraphics[width=.32\textwidth]{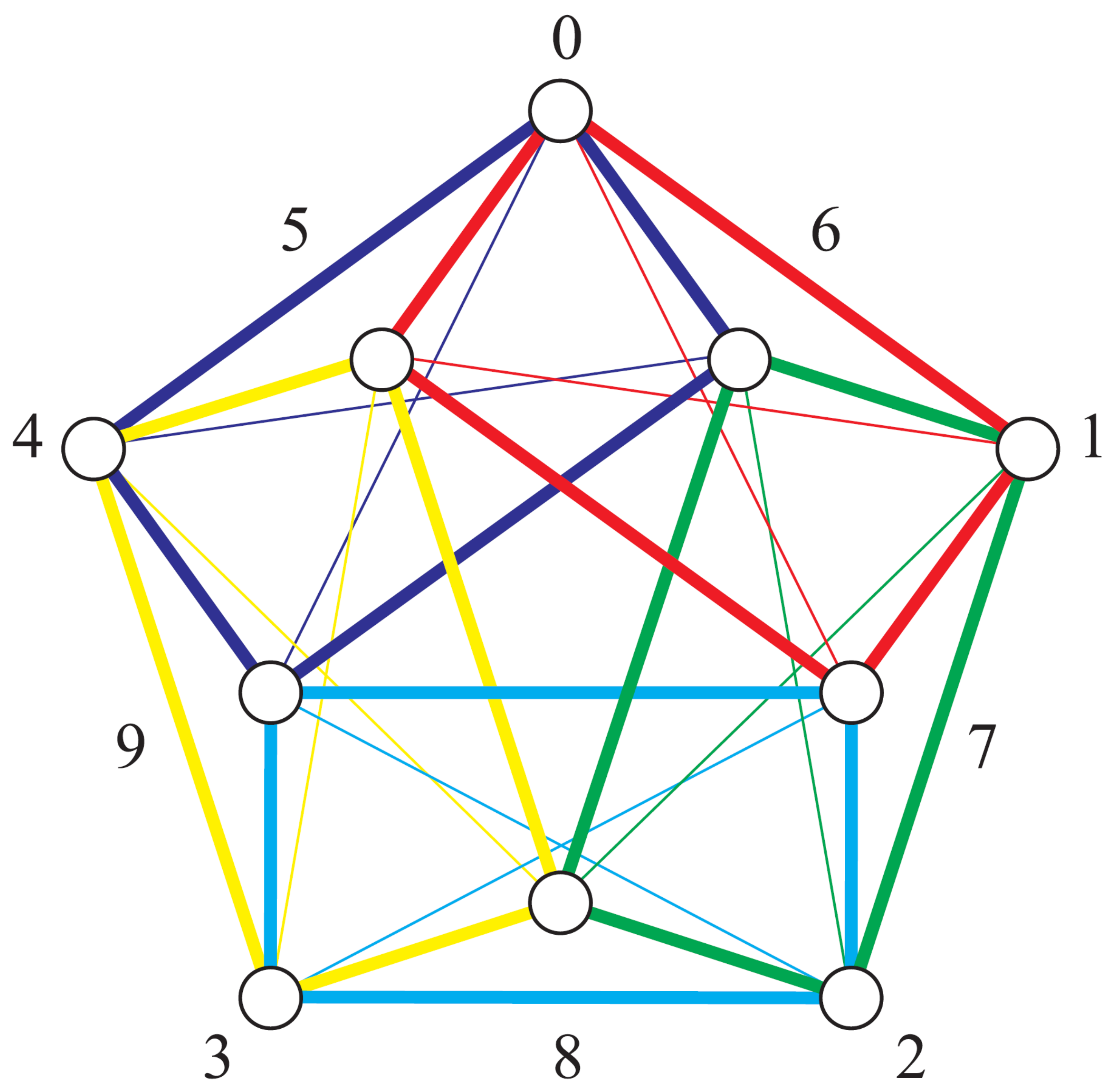}
	\caption{Graph of orthogonality of the $10$~projectors in Eq.~(\ref{twinvectors}). Nodes represent projectors and edges orthogonalities. Edges of the same color belong to the same maximal context. This graph is called {\em Johnson graph $J(5,2)$}, after Selmer M.\ Johnson. It is the complement graph of the {\em Petersen graph}. The figure is taken from \cite{Cabello13b}. \label{twin_PRA}}
\end{figure}


\begin{Example}
	Consider the SD-C set of projectors $S=\{\Pi_j=|v_j\rangle \langle v_j|\}_{j=0}^9$, where:
	\begin{subequations}
		\label{twinvectors}
		\begin{align}
			|v_0\rangle &= 2^{-1} (1,0,a^2,a,0,1)^T,\\
			|v_1\rangle &= 2^{-1} (0,0,1,1,1,1)^T,\\
			|v_2\rangle &= (1,0,0,0,0,0)^T,\\
			|v_3\rangle &= (0,0,0,0,0,1)^T,\\
			|v_4\rangle &= 2^{-1} (1,1,1,1,0,0)^T, \\
			|v_5\rangle &= 2^{-1} (1,0,a,a^2,1,0)^T, \\
			|v_6\rangle &= 2^{-1} (0,1,a,a^2,0,1)^T, \\
			|v_7\rangle &= (0,0,0,0,1,0)^T, \\
			|v_8\rangle &= 2^{-1} (0,1,a^2,a,1,0)^T, \\
			|v_9\rangle &= (0,1,0,0,0,0)^T,
		\end{align}
	\end{subequations} 
	with $a=e^{2 \pi i/3}$ (thus $a^2=e^{-2 \pi i/3}$). Its graph of orthogonality is shown in Fig.~\ref{twin_PRA} and is called $J(5,2)$. The projectors defined by Eq.~(\ref{twinvectors}) lead to the maximum quantum state-dependent violation of the {\em twin inequality} \cite{Cabello13b}, which can be written in the form of Eq.~(\ref{csw}). Specifically,
	\begin{widetext}
		\begin{equation}
			\label{twin}
			\sum_{i\in V(J(5,2))} P(\Pi_i =1) - \sum_{(i,j) \in E(J(5,2))} P(\Pi_i =1, \Pi_j =1) \stackrel{\mbox{\tiny{NCHV}}}{\leq} \alpha(J(5,2)),
		\end{equation}
	\end{widetext} 
	where $J(5,2)$ is the graph in Fig.~\ref{twin_PRA}. The maximum quantum violation is obtained by preparing the eigenstate corresponding to the maximum eigenvalue of the contextuality operator defined in the left-hand side of (\ref{twin}). 
	
As discussed in \cite{Cabello13b}, the twin inequality is of fundamental importance for understanding the limits of quantum correlations, as it provides the simplest example of fully contextual quantum correlations associated to a vertex transitive graph of orthogonality \cite{ADLPBC12}. 
The graphs of orthogonality of fully contextual correlations satisfy $\alpha(G) < \vartheta(G) = \alpha^\ast(G)$, where $\vartheta(G)$ is the Lov\'asz number of $G$ (and is equal to the maximum quantum violation \cite{CSW14}) and $\alpha^\ast(G)$ is the fractional packing number of $G$ \cite{ADLPBC12}.
	
For the graph in Fig.~\ref{twin_PRA}, 
	\begin{equation} 
		\alpha(J(5,2))=2,\;\;\;\;\vartheta(J(5,2)) = \alpha^\ast (J(5,2))=\frac{5}{2}.
	\end{equation}
	
What is the minimal critical SI-C set $S'$ that contains $S$? Theorem~\ref{t6} tells us that a necessary condition is that the graph of orthogonality of $S'$, denoted $G'$, must have $\chi(G')>6$.
	
	
\begin{figure}[t!]\centering
		\includegraphics[width=.5\textwidth]{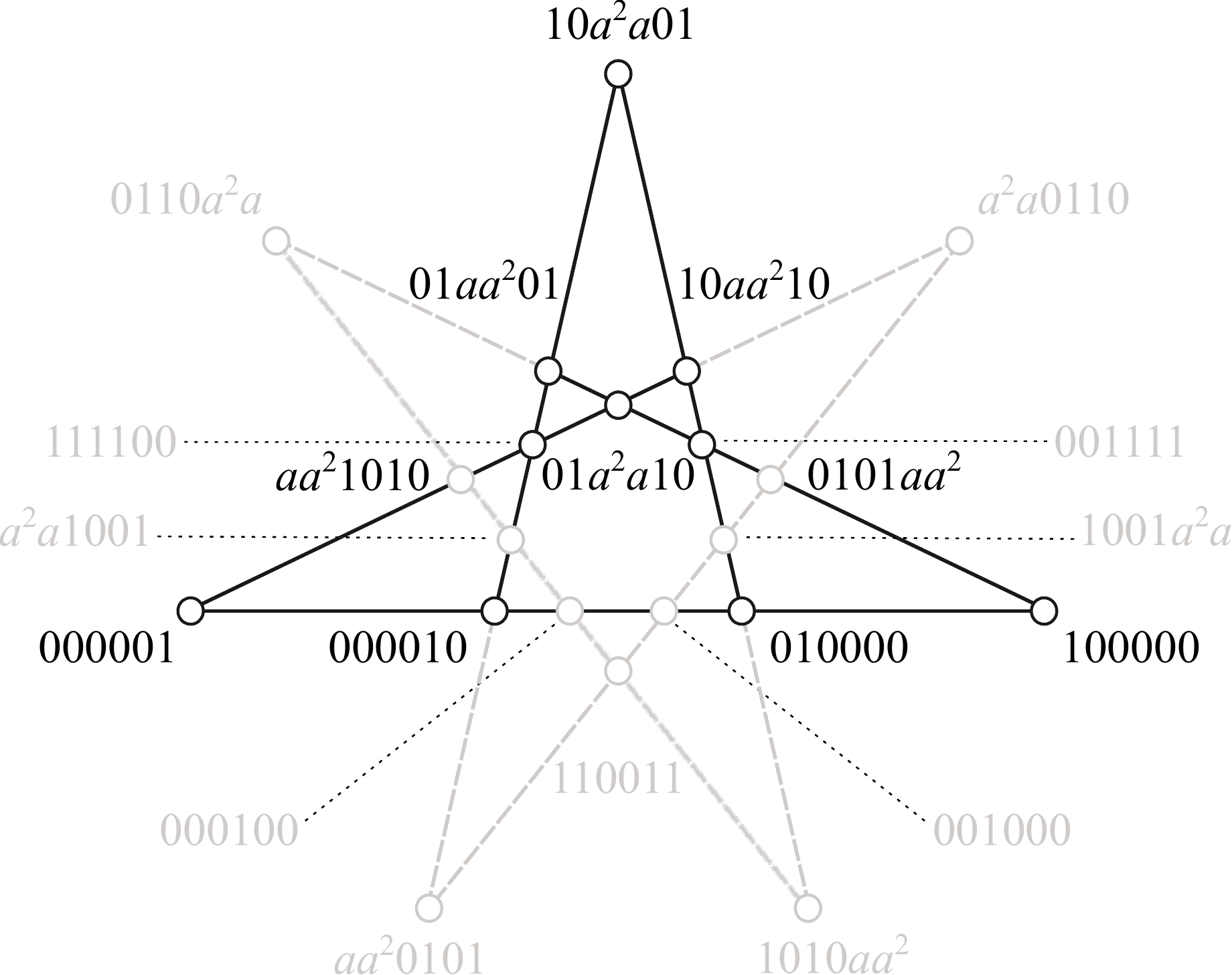}
		\caption{Graph of orthogonality of a set of $21$~projectors which includes the $10$~projectors in Eq.~(\ref{twinvectors}). $10a^2a01$ represents the vector $2^{-1}(1,0,a^2,a,0,1)$, with $a=e^{2 \pi i/3}$.
		Nodes in the same straight line represent mutually orthogonal projectors. This graph is called {\em Johnson graph $J(7,2)$}. \label{ks21-twin-slide2}}
\end{figure}
	
	
\begin{figure}[t!]\centering
		\includegraphics[width=.37\textwidth]{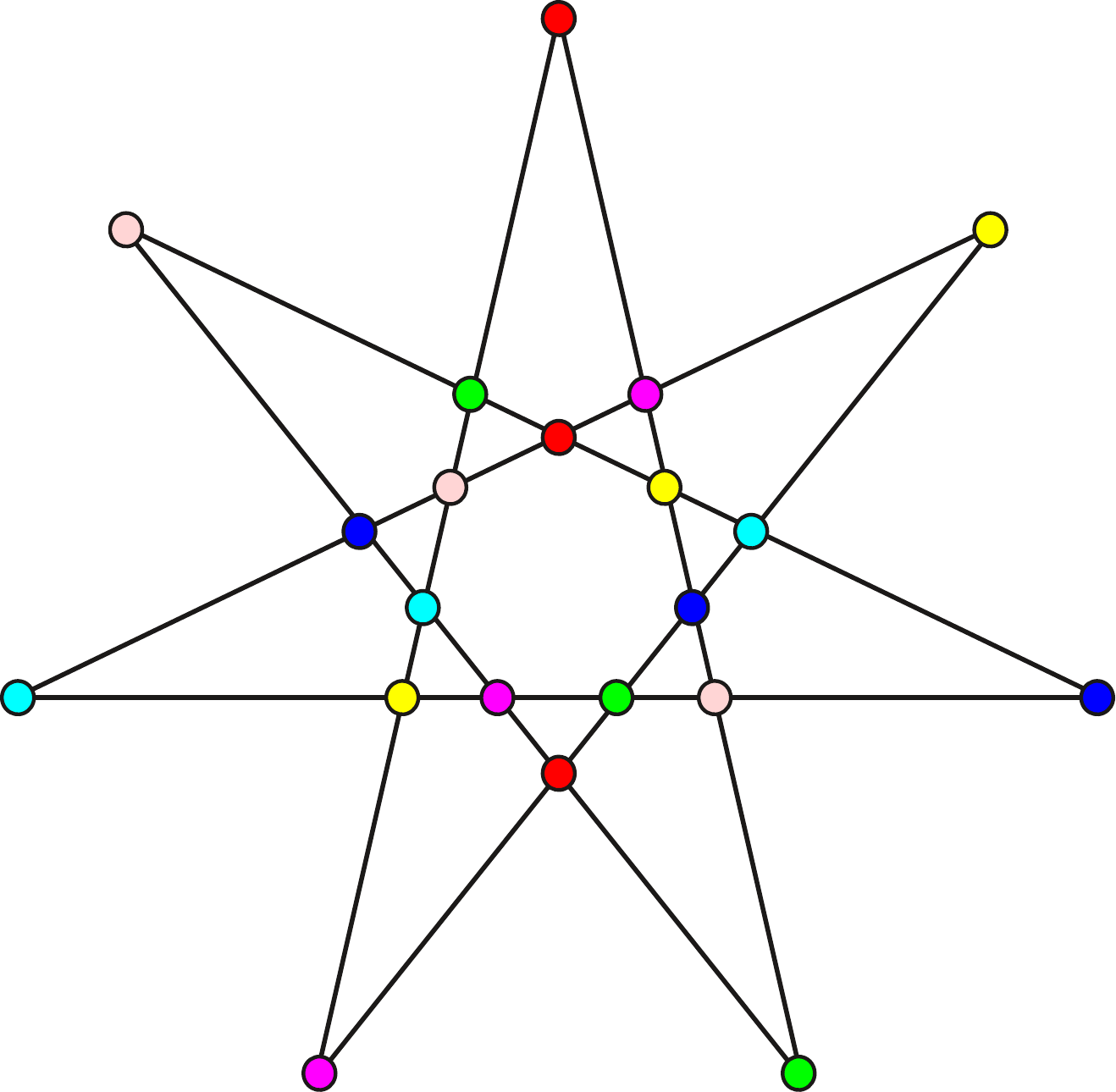}
		\caption{A $7$-coloring of $J(7,2)$. There are no $6$-coloring of $J(7,2)$. \label{ks21-twin-chomatic}}
\end{figure}
	

The graph in Fig.~\ref{ks21-twin-slide2}, known as $J(7,2)$, has $\chi(J(7,2))=7$ (see Fig.~\ref{ks21-twin-chomatic}) and contains $21$ induced copies of $J(5,2)$. Moreover, as shown in Fig.~\ref{ks21-twin-slide2}, there is a set $S'$ which contains $S$ and has $J(7,2)$ as graph of orthogonality.
Moreover, $S'$ is the critical KS set in dimension $6$ with the minimum number of contexts of maximal size such that its graph of orthogonality is vertex transitive and every rank-one projector is in exactly an even number of contexts \cite{LBPC14}.
	
$S'$ is a critical SI-C set. First, notice that if we remove any three vertices from $J(7,2)$, then the chromatic number of the resulting graph is never larger than~$6$, the dimension of the quantum system. That is, $\chi (J(7,2)-3) \not> 6$. Therefore, by Theorem~\ref{t6}, any $J(7,2)-3$ cannot a SI-C set. However, if we remove any two vertices from $J(7,2)$, then the resulting $19$-vertex graphs $J(7,2)-2$ have fractional chromatic number $\frac{19}{6}$, which is larger than $6$. That is, $\chi_f (J(7,2)-2) > 6$. Therefore, Theorem~\ref{t6b} identifies $J(7,2)-2$ as potential SI-C sets. However, Theorem~\ref{t66} tells us that they are not a SIC sets, at least if we take the $19$~observables with weight $1$ in the noncontextuality inequality, as then, although $\alpha(J(7,2)-2)=3 < \frac{19}{6}$, the smallest eigenvalues of the corresponding contextuality witnesses are always smaller than $\alpha(J(7,2)-2)$. Checking whether $J(7,2)-2$ and $J(7,2)-1$ are SI-C sets for arbitrary weights is a problem of convex optimization. This optimization confirms that $J(7,2)$ is a critical SI-C set \cite{X21}.
While, proving minimality requires checking all graphs with $20$ or less vertices that contain induced $J(5,2)$, the fact that the critical SI-C set we have identified only contains $S$ and rotations of it suggests that it is the most elegant extension of $S$ into a critical SI-C set.
\end{Example}


\section{On the connection between the noncontextuality inequality and the Bell inequality}
\label{sec5}


As explained in the main text, given a SI-C set $S$, there is a bipartite Bell inequality which is violated when one of the parties can choose between the observables represented by the projectors in $S$ and the other party can choose between the observables represented by the complex conjugate of the projectors in $S$. This Bell inequality has the form
\begin{widetext}
	\begin{equation}
		\label{bbi}
		\sum_{i \in V(G)} w_i P(\Pi^A_i=1, \Pi^B_i=1) - \sum_{(i,j) \in E(G)} \frac{\max{(w_i, w_j)}}{2} \left[P(\Pi^A_i=1, \Pi^B_j=1) + P(\Pi^A_j=1, \Pi^B_i=1) \right]\le \alpha (G,w).
	\end{equation}
\end{widetext}
Remarkably, there is a fundamental connection between this Bell inequality and the noncontextuality inequality violated by any quantum state \eqref{csw}.

\begin{theorem}
	For any SI-C set $S$, the quantum violation of inequality~\eqref{csw} for the maximally mixed state using $S$ is equal to the quantum violation of the Bell inequality \eqref{bbi} for the maximally entangled state using $S$ in Alice's side and the complex conjugate of $S$ in Bob's side.
\end{theorem}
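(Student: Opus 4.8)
The plan is to evaluate the quantum value of each side directly and verify that the two coincide term by term. The only tool needed is the standard identity for the maximally entangled state~\eqref{ent}: for any operators $M,N$ on $\mathbb{C}^d$, $\langle\Psi|M\otimes N|\Psi\rangle=\frac{1}{d}\tr(MN^T)$, which is immediate from $|\Psi\rangle=\tfrac{1}{\sqrt{d}}\sum_k|kk\rangle$.

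First I would compute the quantum value of the left-hand side of the Bell inequality~\eqref{bbi}. Because Bob measures the complex-conjugated projectors $\overline{\Pi_j}$, the identity gives $P(\Pi_i^A=1,\Pi_j^B=1)=\langle\Psi|\Pi_i\otimes\overline{\Pi_j}|\Psi\rangle=\frac{1}{d}\tr(\Pi_i(\overline{\Pi_j})^T)$. Here the crucial cancellation occurs: since each $\Pi_j$ is Hermitian, $(\overline{\Pi_j})^T=\Pi_j^\dagger=\Pi_j$, so that $P(\Pi_i^A=1,\Pi_j^B=1)=\frac{1}{d}\tr(\Pi_i\Pi_j)$. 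The transpose forced by the entangled state is exactly undone by the complex conjugation on Bob's side. Feeding this into~\eqref{bbi}, the diagonal terms give $P(\Pi_i^A=1,\Pi_i^B=1)=\frac{1}{d}\tr(\Pi_i^2)=\frac{1}{d}$ (each $\Pi_i$ being a rank-one projector), whereas every edge term vanishes because $(i,j)\in E(G)$ means $\Pi_i\Pi_j=0$ and hence $\tr(\Pi_i\Pi_j)=0$. Thus the quantum value of~\eqref{bbi} equals $\frac{1}{d}\sum_i w_i$.

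Next I would compute the quantum value of the left-hand side of~\eqref{csw} for the maximally mixed state $\rho=\openone/d$. There $P(\Pi_i=1)=\tr(\rho\Pi_i)=\frac{1}{d}$, and for each edge the joint probability is $P(\Pi_i=1,\Pi_j=1)=\tr(\rho\Pi_i\Pi_j)=0$, since orthogonal rank-one projectors satisfy $\Pi_i\Pi_j=0$ (and in particular commute, so the sequential joint probability is unambiguous). Hence the quantum value of~\eqref{csw} for the maximally mixed state is also $\frac{1}{d}\sum_i w_i$. As the noncontextual bound of~\eqref{csw} and the local bound of~\eqref{bbi} are both $\alpha(G,w)$, the two quantum violations coincide, each being $\frac{1}{d}\sum_i w_i-\alpha(G,w)$.

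The computation is short; the only step requiring care is the interplay between the transpose intrinsic to~\eqref{ent} and the complex conjugation of Bob's measurements, which is exactly what yields $\tr(\Pi_i\Pi_j)$ rather than $\tr(\Pi_i\Pi_j^T)$. Conceptually, this term-by-term matching is the rigorous counterpart of the heuristic given in the main text: the bipartite correlations on~\eqref{ent} reproduce those of two sequential measurements on a maximally mixed state, with the symmetrized combination $\tfrac{1}{2}[P(\Pi_i^A=1,\Pi_j^B=1)+P(\Pi_j^A=1,\Pi_i^B=1)]$ in~\eqref{bbi} accounting for the two possible orderings of the two parties' measurements. I would finally remark that nothing beyond $S$ consisting of rank-one projectors is used for the equality itself; state-independence and criticality enter only to guarantee that the common value $\frac{1}{d}\sum_i w_i$ strictly exceeds $\alpha(G,w)$.
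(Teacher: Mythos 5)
Your proposal is correct and follows essentially the same route as the paper, which justifies the theorem by observing that the maximally entangled state with conjugated measurements on Bob's side reproduces the statistics of sequential measurements on a maximally mixed state (perfect correlations on the diagonal terms, vanishing edge terms, equal classical bounds). Your explicit use of the identity $\langle\Psi|M\otimes N|\Psi\rangle=\frac{1}{d}\tr(MN^{T})$ to show that both quantum values equal $\frac{1}{d}\sum_i w_i$ is simply the rigorous, term-by-term version of that argument, and your closing remark that rank-one projectors alone suffice for the equality (SI-C being needed only for strict violation) is accurate.
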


\begin{Definition}
	An {\em egalitarian} SI-C set is one for which there is a $w$ in which all weights are nonzero for which the left-hand side of (\ref{csw}) is represented in quantum theory by an operator $\lambda \openone$ such that $\lambda > \alpha(G,w)$. In this case, all quantum states violate inequality (\ref{csw}) by the same value.
\end{Definition}

\begin{theorem} \cite{X21}
	Not all critical SI-C sets are egalitarian.
\end{theorem}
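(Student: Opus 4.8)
The statement is the negation of a universal claim, so it suffices to \emph{exhibit one} critical SI-C set $S=\{\Pi_i=|v_i\rangle\langle v_i|\}_{i=1}^{n}$, in some dimension $d$, that fails the egalitarian condition. The plan is therefore to (i)~reduce the egalitarian condition to a concrete feasibility problem on the weights, (ii)~produce a candidate set whose geometry forbids the required weighting, and (iii)~certify that this candidate is genuinely a critical SI-C set.

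First I would make the egalitarian condition explicit. Since adjacent vertices of $G$ correspond to orthogonal projectors, $\Pi_i\Pi_j=0$ for every $(i,j)\in E(G)$, so in quantum theory the edge term of \eqref{csw} vanishes and its left-hand side is represented by the single positive-semidefinite operator $M(w)=\sum_i w_i\,\Pi_i$. Taking the trace of $M(w)=\lambda\openone$ gives $\lambda=\tfrac1d\sum_i w_i$. Hence a set is egalitarian if and only if the frame $\{|v_i\rangle\}$ admits a \emph{strictly positive tight-frame weighting}, that is weights $w_i>0$ with $\sum_i w_i\Pi_i\propto\openone$, which in addition satisfies $\tfrac1d\sum_i w_i>\alpha(G,w)$. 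The condition $M(w)=\lambda\openone$ is linear in $w$: writing out ``all off-diagonal entries vanish and all diagonal entries are equal'' yields $d^2-1$ real linear equations, whose solution set is a subspace $K\subseteq\mathbb{R}^{n}$. Deciding egalitarianism then amounts to testing whether $K$ meets the open positive orthant $\mathbb{R}^{n}_{>0}$ and, if so, whether $\tfrac1d\sum_i w_i-\alpha(G,w)>0$ can be achieved there, both finite linear/convex feasibility problems.

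To make the set non-egalitarian I would engineer one of two obstructions: either $K\cap\mathbb{R}^{n}_{>0}=\varnothing$, so that no strictly positive tight frame exists at all, or $K\cap\mathbb{R}^{n}_{>0}\neq\varnothing$ but $\tfrac1d\sum_i w_i\le\alpha(G,w)$ throughout that intersection. Criticality and state-independence of the candidate would be established with the tools already collected above: the necessary conditions $\chi(G)>d$ and $\chi_f(G)>d$ (Theorems~\ref{t6} and~\ref{t6b}) to locate the candidate, and the necessary-and-sufficient criterion of \cite{CKB15} (Theorem~\ref{t66}) applied to $S$ and to each single-vertex deletion $S\setminus\{\Pi_k\}$ to confirm that $S$ is SI-C while no proper subset is, i.e.\ that $S$ is critical. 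Each such check is again a convex feasibility problem.

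The hard part is step (ii): actually producing a critical SI-C set with the non-egalitarian geometry. This is delicate because the most familiar critical SI-C sets turn out to be egalitarian. For a configuration with a rich unitary symmetry group, averaging any weight over the group orbits yields an orbit-constant operator $\sum_i w_i\Pi_i$ that commutes with the group and hence, when the action is irreducible, equals $\lambda\openone$ by Schur's lemma; one is then free to tune the orbit weights so that $\lambda>\alpha(G,w)$, exactly as the two-valued weighting of the Yu--Oh set \cite{YO12} does. To defeat this one must use a set with too little symmetry to support any strictly positive tight-frame weighting, yet still critical and state-independent---requirements that pull against each other and must be certified simultaneously. I would resolve this with the explicit example and the accompanying convex-optimization certificate of \cite{X21}.
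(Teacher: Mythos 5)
Your reduction of the egalitarian condition is correct and genuinely useful: since adjacent vertices of $G$ correspond to orthogonal projectors, the edge terms of \eqref{csw} vanish identically in quantum theory, so egalitarianism of $S$ is precisely the existence of strictly positive weights with $\sum_i w_i\Pi_i=\lambda\openone$ and $\lambda=\tfrac{1}{d}\sum_i w_i>\alpha(G,w)$, and this is a finite convex feasibility question (the condition $\lambda>\alpha(G,w)$ is the intersection of finitely many open half-spaces, one per independent set of ${G}$). Your Schur's-lemma remark also correctly explains why the standard examples (Yu--Oh and the $18$- and $21$-projector KS sets) are all egalitarian, which is exactly why the theorem needs a nontrivial witness.

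But that witness is the entire content of the theorem, and you never produce it: the statement is existential, and your step (ii) --- constructing a critical SI-C set admitting no strictly positive tight-frame weighting with $\lambda>\alpha(G,w)$ --- is deferred wholesale to the explicit example of \cite{X21}. So, as a self-contained argument, the proposal has a genuine gap: everything you establish is a reformulation of what must be checked, not a proof that any set passes those checks. For what it is worth, the paper is in exactly the same position: it states this theorem with the citation \cite{X21} (an unpublished reference, ``Z.-P. Xu \emph{et al.}, to be published'') and supplies no proof at all, not even the dimension or cardinality of the counterexample. In effect you have reproduced the paper's treatment --- reliance on \cite{X21} --- while adding correct and worthwhile scaffolding: the tight-frame characterization of egalitarianism, and the observation that criticality and state-independence of a candidate would be certified by applying Theorem~\ref{t66} to $S$ and to each single-vertex deletion $S\setminus\{\Pi_k\}$. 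Neither your text nor the paper's constitutes a proof without the example from \cite{X21}.
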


\begin{theorem}
	For any egalitarian SI-C set $S$, if $w$ is the set of weights with all weights nonzero used in inequalities~\eqref{csw} and \eqref{bbi}, then the quantum violation of inequality~\eqref{csw} for any state using $S$ is the same as the quantum violation of the Bell inequality \eqref{bbi} for the maximally entangled state using $S$ in Alice's side and the complex conjugate of $S$ in Bob's side.
\end{theorem}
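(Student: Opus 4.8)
The plan is to show that, under the stated hypotheses, the quantum value of the left-hand side of~\eqref{csw} (evaluated on an arbitrary state) and the quantum value of the left-hand side of~\eqref{bbi} (evaluated on the maximally entangled state) coincide, both equalling the egalitarian eigenvalue $\lambda$. Since the NCHV bound of~\eqref{csw} and the LHV bound of~\eqref{bbi} are both $\alpha(G,w)$, as already established in the main text, equality of the quantum values immediately yields equality of the two violations $\lambda-\alpha(G,w)$.

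First I would evaluate the quantum value of~\eqref{csw}. In quantum theory $P(\Pi_i=1)=\tr(\rho\,\Pi_i)$, while for an edge $(i,j)\in E(G)$ the projectors $\Pi_i$ and $\Pi_j$ are orthogonal, so $\Pi_i\Pi_j=0$ and hence $P(\Pi_i=1,\Pi_j=1)=0$. Thus the operator representing the left-hand side of~\eqref{csw} is simply $\sum_i w_i\Pi_i$, the pairwise terms dropping out. By the egalitarian hypothesis this operator equals $\lambda\openone$ with $\lambda>\alpha(G,w)$, so its quantum value is $\tr(\rho\,\lambda\openone)=\lambda$ for every state $\rho$. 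Taking the trace of $\sum_i w_i\Pi_i=\lambda\openone$ and using $\tr(\Pi_i)=1$ (rank-one) yields the useful identity $\tfrac{1}{d}\sum_i w_i=\lambda$.

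Next I would evaluate the quantum value of~\eqref{bbi} on the maximally entangled state~\eqref{ent}. The key tool is the standard identity $\langle\Psi|A\otimes B|\Psi\rangle=\tfrac{1}{d}\,\tr(AB^{T})$, valid for any single-qudit operators $A,B$. Because Bob measures the complex conjugate of Alice's projectors, Bob's operator labelled $j$ is $\Pi_j^{*}$, and since $\Pi_j$ is Hermitian one has $(\Pi_j^{*})^{T}=\Pi_j^{\dagger}=\Pi_j$. Therefore
\begin{equation}
P(\Pi_i^A=1,\Pi_j^B=1)=\langle\Psi|\,\Pi_i\otimes\Pi_j^{*}\,|\Psi\rangle=\tfrac{1}{d}\,\tr(\Pi_i\Pi_j).
\end{equation}
For $i=j$ this equals $\tfrac{1}{d}\tr(\Pi_i)=\tfrac{1}{d}$, whereas for an edge $(i,j)\in E(G)$ it vanishes by orthogonality. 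Consequently the vertex terms of~\eqref{bbi} contribute $\sum_i w_i\tfrac{1}{d}$, the symmetrized edge terms contribute nothing, and the quantum value of the left-hand side of~\eqref{bbi} is $\tfrac{1}{d}\sum_i w_i=\lambda$.

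Combining the two computations, both left-hand sides attain the common quantum value $\lambda$ and both inequalities share the bound $\alpha(G,w)$, so the two violations are identical and equal to $\lambda-\alpha(G,w)$. The only step requiring care is the complex-conjugate/transpose bookkeeping on the entangled state: recognizing that conjugating Bob's projectors exactly cancels the transpose in the entangled-state identity is what turns the cross-correlations $P(\Pi_i^A=1,\Pi_j^B=1)$ into $\tfrac{1}{d}\tr(\Pi_i\Pi_j)$, making the Bell expression mirror the single-system expression evaluated on the maximally mixed state $\openone/d$. Everything else follows from orthogonality (the vanishing of all edge terms in both inequalities) together with the egalitarian identity $\sum_i w_i\Pi_i=\lambda\openone$.
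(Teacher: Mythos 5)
Your proposal is correct and is, in substance, the same argument the paper relies on: the paper states this theorem without a detailed proof, and its main-text justification is precisely that the maximally entangled state with conjugated measurements on Bob's side reproduces maximally-mixed-state statistics with perfect correlations, which is exactly what your identity $\langle\Psi|\Pi_i\otimes\Pi_j^*|\Psi\rangle = \tr(\Pi_i\Pi_j)/d$ makes rigorous. Your explicit computation that both quantum values equal $\lambda=\tfrac{1}{d}\sum_i w_i$ (edge terms vanishing by orthogonality, the egalitarian hypothesis $\sum_i w_i\Pi_i=\lambda\openone$ fixing the constant) is a correct, self-contained version of that sketch, including the crucial conjugate/transpose bookkeeping.
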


\begin{Example}
	The set of Yu and Oh \cite{YO12} is an egalitarian critical SI-C set. The weights that lead to the largest gap between the quantum violation and the classical bound is $w_i=3$ for all projectors, except for those onto $(1,1,1)$, $(1,1,-1)$, $(1,-1,1)$, and $(-1,1,1)$, where $w_j=2$. With these weights, $\alpha(G,w)=11$ both in \eqref{csw} and \eqref{bbi} and the violation of inequality \eqref{csw} for any state, and of inequality~\eqref{bbi} for the maximally entangled state are both $11+\frac{2}{3}$.
\end{Example}

\begin{Example}
	The KS set with $18$ projectors \cite{CEG96} is an egalitarian critical SI-C set. The weights that lead to the largest gap between the quantum violation and the classical bound are $\{w_i=1\}_{i=1}^{18}$. With these weights, $\alpha(G,w)=4$ both in \eqref{csw} and \eqref{bbi}, and the violation of inequality \eqref{csw} for any state, and of inequality~\eqref{bbi} for the maximally entangled state are both $4+\frac{1}{2}$.
\end{Example}

\begin{Example}
	The KS set with $21$ projectors \cite{LBPC14} is an egalitarian critical SI-C set. The weights that lead to the largest gap between the quantum violation and the classical bound are $\{w_i=1\}_{i=1}^{21}$. With these weights, $\alpha(G,w)=3$ both in \eqref{csw} and \eqref{bbi}, and the violation of inequality \eqref{csw} for any state, and of inequality~\eqref{bbi} for the maximally entangled state are both $3+\frac{1}{2}$.
\end{Example}


\end{document}